\definecolor{darkblue}{rgb}{0,0,.8}
\numberwithin{equation}{section} 
\newtheorem{theorem}{Theorem}[section]
\newtheorem{lemma}[theorem]{Lemma}
\newtheorem{proposition}[theorem]{Proposition}
\newtheorem{corollary}[theorem]{Corollary}
\newtheorem{example}[theorem]{Example}
\tikzset{
  on each segment/.style={
    decorate,
    decoration={
      show path construction,
      moveto code={},
      lineto code={
        \path [#1]
        (\tikzinputsegmentfirst) -- (\tikzinputsegmentlast);
      },
      curveto code={
        \path [#1] (\tikzinputsegmentfirst)
        .. controls
        (\tikzinputsegmentsupporta) and (\tikzinputsegmentsupportb)
        ..
        (\tikzinputsegmentlast);
      },
      closepath code={
        \path [#1]
        (\tikzinputsegmentfirst) -- (\tikzinputsegmentlast);
      },
    },
  },
  mid arrow/.style={postaction={decorate,decoration={
        markings,
        mark=at position .625 with {\arrow[#1]{stealth}}
      }}},
}
\newcommand\yaxis{90}
\newcommand\xaxis{0}
\newcommand{\drawvertex}[3]
{
  \fill[shift={(\xaxis:#2)},shift={(\yaxis:#3)}] (0,0) circle (1.5pt);
  \ifnum#1=1
   \draw[shift={(\xaxis:#2)},shift={(\yaxis:#3)},postaction={on each segment={mid arrow}}] (-.5,0)--(0,0)--(.5,0);
  \draw[shift={(\xaxis:#2)},shift={(\yaxis:#3)},postaction={on each segment={mid arrow}}] (0,-.5)--(0,0)--(0,.5);
  \fi
   \ifnum#1=2
      \draw[shift={(\xaxis:#2)},shift={(\yaxis:#3)},postaction={on each segment={mid arrow}}] (.5,0)--(0,0)--(-.5,0);
  \draw[shift={(\xaxis:#2)},shift={(\yaxis:#3)},postaction={on each segment={mid arrow}}] (0,.5)--(0,0)--(0,-.5); 
  \fi
\ifnum#1=3
      \draw[shift={(\xaxis:#2)},shift={(\yaxis:#3)},postaction={on each segment={mid arrow}}] (.5,0)--(0,0)--(-.5,0);
  \draw[shift={(\xaxis:#2)},shift={(\yaxis:#3)},postaction={on each segment={mid arrow}}] (0,-.5)--(0,0)--(0,.5);   \fi
\ifnum#1=4
      \draw[shift={(\xaxis:#2)},shift={(\yaxis:#3)},postaction={on each segment={mid arrow}}] (-.5,0)--(0,0)--(.5,0);
  \draw[shift={(\xaxis:#2)},shift={(\yaxis:#3)},postaction={on each segment={mid arrow}}] (0,.5)--(0,0)--(0,-.5);  
  \fi
\ifnum#1=5
      \draw[shift={(\xaxis:#2)},shift={(\yaxis:#3)},postaction={on each segment={mid arrow}}] (-.5,0)--(0,0)--(0,.5);
  \draw[shift={(\xaxis:#2)},shift={(\yaxis:#3)},postaction={on each segment={mid arrow}}] (.5,0)--(0,0)--(0,-.5);  
  \fi
\ifnum#1=6
      \draw[shift={(\xaxis:#2)},shift={(\yaxis:#3)},postaction={on each segment={mid arrow}}] (0,.5)--(0,0)--(-.5,0);
  \draw[shift={(\xaxis:#2)},shift={(\yaxis:#3)},postaction={on each segment={mid arrow}}] (0,-.5)--(0,0)--(.5,0) ; 
  \fi
}
\newcommand{\drawcornervertex}[3]
{
  \ifnum#1=0
  \fi
  \ifnum#1=1
    \fill[shift={(\xaxis:#2)},shift={(\yaxis:#3)}] (0,0) circle (1.5pt);
    \draw[shift={(\xaxis:#2)},shift={(\yaxis:#3)},postaction={on each segment={mid arrow}}] (0,-.5) -- (0,0) -- (.5,0);
  \fi
  \ifnum#1=2
    \fill[shift={(\xaxis:#2)},shift={(\yaxis:#3)}] (0,0) circle (1.5pt);
    \draw[shift={(\xaxis:#2)},shift={(\yaxis:#3)}, postaction={on each segment={mid arrow}}] (.5,0) -- (0,0) -- (0,-.5);
  \fi
  \ifnum#1=3
    \fill[shift={(\xaxis:#2)},shift={(\yaxis:#3)}] (0,0) circle (1.5pt);
    \draw[shift={(\xaxis:#2)},shift={(\yaxis:#3)}, postaction={on each segment={mid arrow}}] (.5,0) -- (0,0);
    \draw[shift={(\xaxis:#2)},shift={(\yaxis:#3)}, postaction={on each segment={mid arrow}}] (0,-.5) -- (0,0);
  \fi
  \ifnum#1=4
    \fill[shift={(\xaxis:#2)},shift={(\yaxis:#3)}] (0,0) circle (1.5pt);
    \draw[shift={(\xaxis:#2)},shift={(\yaxis:#3)}, postaction={on each segment={mid arrow}}] (0,0) -- (.5,0);
    \draw[shift={(\xaxis:#2)},shift={(\yaxis:#3)}, postaction={on each segment={mid arrow}}] (0,0) -- (0,-.5);
  \fi
}
\newcounter{x}
\newcounter{y}
\newcommand{\confsixvertex}[3]{
 \begin{scope}[shift={(\xaxis:#2)},shift={(\yaxis:#3)}]
 \setcounter{y}{0}
  \foreach \a in {#1}
  {
    \setcounter{x}{0}
    \foreach \b in \a 
    {
    \drawcornervertex{\b}{.5*\value{x}}{.5*\value{y}}
    \addtocounter{x}{1}
    }
    
    \fill (.5*\value{y},-.5) circle (1.5pt);
    \fill (.5*\value{x},.5*\value{y}) circle (1.5pt);

    \addtocounter{y}{1}
  }
  \end{scope}
}
\newcommand{\ii}{\mathrm{i}}
\newcommand{\ee}{\mathrm{e}}
\newcommand{\diff}{\text{d}}
\newcommand{\ket}[1]{|{#1}\rangle}
\newcommand{\bra}[1]{\langle {#1}|}
\newcommand{\uu}{\uparrow}
\newcommand{\dd}{\downarrow}
\newcommand{\RR}{\mathbb{R}}
\newcommand{\II}{\mathbb{I}}
\newcommand{\KK}{\mathbb{K}}
\newcommand{\ZZ}{\mathbb{Z}}
\newcommand{\YY}{\mathbb{Y}}
\newcommand{\MM}{\mathbb{M}}
\newcommand{\WW}{\mathbb{W}}
\begin{document}

\title[Open XXZ chain and TSASMs] {\large The open XXZ chain at $\bm{\Delta =-1/2}$ and totally-symmetric alternating sign matrices}

\author{Jean Li\'enardy}
\address
{\'Ecole Royale Militaire\\ D\'epartement de Math\'ematiques \\ Avenue de la Renaissance 30, B-1000 Bruxelles \\ Belgium}
\email{jean.lienardy@mil.be} 

\author{Christian Walmsley Hagendorf}
\thanks{The second author acknowledges support from the F\'ed\'eration Wallonie-Bruxelles (FWB) through the ARC project 24/29-145 ``Emergent Motifs in Interconnected Systems (EMOTIONS)''}
\address
{Institut de Recherche en Math\'ematique et Physique \\ Universit\'e catholique de Louvain\\
Chemin du Cyclotron 2\\
B-1348 Louvain-la-Neuve\\ Belgium}
\email{christian.hagendorf@uclouvain.be} 

\begin{abstract}
The open XXZ spin chain with the anisotropy parameter $\Delta=-\frac12$, diagonal boundary fields that depend on a parameter $x$, and finite length $N$ is studied. In a natural normalisation, the components of its ground-state vector are polynomials in $x$ with integer coefficients. It is shown that their sum is given by a generating function for the weighted enumeration of totally-symmetric alternating sign matrices with weights depending on $x$.
\end{abstract}

\maketitle 
\tableofcontents 

\section{Introduction}
The ground-state eigenvectors of the XXZ spin chain with certain boundary conditions and the anisotropy parameter $\Delta=-\frac12$ exhibit remarkable connections with enumerative combinatorics \cite{stroganov:01,razumov:01,razumov:01_2,batchelor:01}. An example is provided by the ground-state eigenvectors of the spin-chain Hamiltonian with periodic boundary conditions and $N=2n+1$ sites, where $n\geqslant 1$ is an integer. Razumov and Stroganov investigated these eigenvectors for small values of $n$. Based on explicit calculations, they conjectured that, if their smallest component is normalised to one, then all other components are integers and the largest component is equal to the number of alternating sign matrices (ASMs) of order $n$ \cite{bressoudbook}. This and many similar conjectures were subsequently proven through an exact construction and characterisation of the ground-state eigenvectors in terms of multiple contour integrals and the quantum Knizhnik-Zamodlodchikov equations associated with the six-vertex model \cite{difrancesco:06,razumov:07,cantini:12_1,morin:20}.

In this article, we consider the XXZ spin chain with open boundary conditions and diagonal boundary fields, continuing our investigations started in \cite{hagendorf:21,hagendorf:22}. For a chain of length $N\geqslant 1$, the spin-chain Hamiltonian is an operator on $(\mathbb C^2)^{\otimes N}$, given by
\begin{subequations}
\label{eqn:XXZHamiltonian}
\begin{equation}
  H = -\frac12 \sum_{i=1}^{N-1}\left( \sigma_i^x\sigma_{i+1}^x + \sigma_i^y\sigma_{i+1}^y + \Delta \sigma_i^z\sigma_{i+1}^z\right) + p\sigma_{1}^z+ p'\sigma_N^z.
\end{equation}
Here, $\sigma^x_i,\sigma^y_i,\sigma^z_i$ are the usual Pauli matrices $\sigma^x,\sigma^y,\sigma^z$ acting on the $i$-th tensor factor of $(\mathbb C^2)^{\otimes N}$. Moreover, the Hamiltonian depends on the anisotropy parameter $\Delta$, and the boundary fields $p,p'$. We focus on the case where these three parameters take the values
\begin{equation}
  \label{eqn:XXZParams}
  \Delta =-\frac12, \quad p = \frac12\left(\frac12-x\right),\quad p' = \frac12\left(\frac12-\frac1{x}\right),
\end{equation}
\end{subequations}
where $x$ is an arbitrary non-zero complex number. Special values include $x=1$, where the spin-chain Hamiltonian is supersymmetric \cite{hagendorf:17}, or $x=\ee^{-\ii \pi/3}$, where one obtains the quantum-group invariant Pasquier-Hamiltonian for $\Delta=-1/2$ \cite{pasquier:90}.

The choice of parameters \eqref{eqn:XXZParams} is motivated by the existence of a remarkably simple eigenvalue of the Hamiltonian \cite{degier:04,nichols:05,hagendorf:21},
 \begin{equation}
  E = -\frac{3N-1}{4} - \frac{(1-x)^2}{2x}.
  \label{eqn:SimpleEV}
\end{equation}
At the supersymmetric point $x=1$, it has multiplicity $1$ and is the Hamiltonian's ground-state eigenvalue \cite{hagendorf:17}. By continuity, this property remains valid for real $x$ sufficiently close to $x=1$, which motivates our focus on the corresponding eigenspace.
In \cite{hagendorf:21,hagendorf:22}, we have constructed a special eigenvector whose components (with respect to a standard basis of $(\mathbb C^2)^{\otimes N}$ which we recall below) are polynomials in $x$ with integer coefficients. Some of these polynomials appear as generating functions for the enumeration of ASMs or certain rhombus tilings. The main technical tool for this work is a solution to the boundary quantum Knizhnik-Zamolodchikov (bqKZ) equations associated with the $R$- and a $K$-matrix of a six-vertex model \cite{cherednik:92,jimbo:95,difrancesco:07,stokman:15,reshetikhin:18}. This solution is built from multivariate Laurent polynomials that are explicitly given in terms of multiple contour integrals. A specialisation of the solution's parameters, called the homogeneous limit, allows one to obtain and characterise the special eigenvector of the spin-chain Hamiltonian.

The main goal of this article is to analyse the sum of the special eigenvector's components, that we denote by $S_N$. This sum is a polynomial in $x$ with integer coefficients. At the supersymmetric point $x=1$ and for $N=1,\dots,5$, it evaluates to
\begin{equation}
\label{eqn:SumCompsSUSY}
      S_1 = 1,\quad S_2 = 2,\quad S_3= 4,\quad S_4= 13,\quad S_5 = 46.
\end{equation}
These integers correspond to the first entries of the sequence of the numbers of totally-symmetric alternating sign matrices (TSASMs) of odd order \cite{oeistsasm:24}. TSASMs are ASMs that are invariant under all symmetries of the square. Despite their maximal symmetry, results on the TSASM enumeration are scarce in the literature \cite{bousquet:95,robbins:00,kuperberg:02,behrend:23}.  Their appearance in relation with the ground-state eigenvector of the open XXZ spin chain at the supersymmetric point comes, to a certain extent, as a surprise. 
In \cref{cor:SumCompsSpecialEV}, we provide an expression for $S_N$ with arbitrary $N$ and $x$ in terms of a generating function for a weighted enumeration of TSASMs of order $2N+1$.
Exploiting the properties of this TSASM generating function, we find $S_N$ at the supersymmetric point for arbitrary $N$ in \cref{corr:SumCompsSUSY}, thereby generalising \eqref{eqn:SumCompsSUSY}. Moreover, a remarkable consequence of our main result is a multiple contour-integral formula for the number of TSASMs of order $2N+1$ stated in \cref{corr:TSASMContourIntegral}. To the best of our knowledge, it provides the only known formula to date for the number of TSASMs of arbitrary order.

To prove our results, we consider multivariate generalisations of both the sum of components of the special eigenvector and the TSASM generating function. The first quantity is the \textit{generalised sum of components}, which we define and analyse in \cref{sec:GeneralisedSumComps}. It is a linear combination of the components of the Laurent-polynomial solution to the bqKZ equations mentioned above. In particular, we show in \cref{prop:Uniqueness} that it is uniquely characterised by a set of analyticity and symmetry properties, as well as several reduction relations. The second quantity is a linear combination of partition functions of an inhomogeneous six-vertex model on a square-grid graph with a triangular shape, which we study in \cref{sec:6VTSASM}. For certain boundary conditions, the configurations of this six-vertex model are in bijection with TSASMs. We show that the linear combination of partition functions possesses the same properties as the generalised sum of components (up to suitable parameter identifications and modulo trivial factors). Using the uniqueness property, we establish a relation between both quantities in \cref{thm:YandYY}. In \cref{sec:MainResult}, we use this relation to prove \cref{thm:MainTheorem}, which states the equality between a two-parameter generalisation of the sum of components and the TSASM generating function. \Cref{cor:SumCompsSpecialEV,corr:TSASMContourIntegral,corr:SumCompsSUSY} immediately follow from this theorem.

\section{The sum of components and its generalisation}
\label{sec:GeneralisedSumComps}

The purpose of this section is to develop the technical tools that allow us to compute the sum of the components of the special eigenvector. In \cref{sec:MCI}, we recall the construction of this eigenvector in terms of certain multiple contour integrals. \Cref{sec:PropMCI} provides a collection of useful properties of these contour integrals. In \cref{sec:DefGenSum}, we define the generalised sum of components, which is a multivariate generalisation of the sum of components of the special eigenvector. We investigate its properties in \cref{sec:PropGenSum} and show that they determine it uniquely.

\subsection{From multiple contour integrals to the special eigenvector}
\label{sec:MCI}

Throughout, $N\geqslant 0$ denotes an integer and we systematically use the integers $n,n',\epsilon$, defined through
\begin{equation}
  \label{eqn:Defnnbar}
  n = \left\lfloor N/2\right\rfloor, \quad n' = \left\lceil N/2\right\rceil, \quad \epsilon = n' - n.
\end{equation}
We denote by $\mathbb C^\times$ the set $\mathbb C\backslash \{0\}$.
 Moreover, in this and the following sections, we adopt the abbreviations 
\begin{equation}
\label{eqn:NotationBrackets}
  \bar z = z^{-1}, \quad [z]=z-{\bar z},\quad \{z\} = z+{\bar z},
\end{equation}
for each $z\in \mathbb C^\times$.

\subsubsection{Multiple contour integrals}
For each $N\geqslant 2$ and each sequence $a_1,\dots,a_n$ of integers satisfying $1\leqslant a_1 <\cdots < a_n \leqslant N$, we define the multiple contour integral
\begin{multline}
  (\Psi_N)_{a_1,\dots,a_n} = (-[q])^n \prod_{1\leqslant i < j \leqslant N} [q z_j{\bar z}_i][q^2 z_i z_j]\\
  \times \oint\cdots \oint\Xi_{a_1,\dots,a_n}(w_1,\dots, w_n|z_1,\dots,z_N) \prod_{i=1}^n \frac{\diff w_i}{\pi \ii w_i},
  \label{eqn:CIBigPsi}
\end{multline}
where 
\begin{multline}
  \Xi_{a_1,\dots,a_n}(w_1,\dots, w_n|z_1,\dots,z_N)\\
  =\frac{\prod_{1\leqslant i < j \leqslant n}[q w_j{\bar w}_i][w_i{\bar w}_j][q w_iw_j][q^2 w_i w_j]\prod_{i=1}^n[q^2 w_i^2][\beta w_i]}{\prod_{i=1}^n\left(\prod_{j=1}^{a_i}[z_j{\bar w}_i]\prod_{j=a_i}^N[qz_j{\bar w}_i]\prod_{j=1}^N[q^2 w_i z_j]\right)}.
\end{multline}
Here and throughout,
we assume $z_1,\dots,z_N,\beta,q\in \mathbb C^\times$ with the restrictions $q^4 \neq 1$ and $\beta^2 \neq 1$. The integration contour of each $w_i$ in \eqref{eqn:CIBigPsi} surrounds the simple poles $z_1,\dots,z_N$, but not other poles of the integrand. 

The evaluation of the multiple contour integral \eqref{eqn:CIBigPsi} is, in principle, possible with the help of the residue theorem. For small values of $N$, this evaluation is straightforward.
\begin{example}
\label{ex:Psi23}
For $N=2$, we obtain
\begin{equation}
  (\Psi_2)_1 = [\beta z_1], \quad (\Psi_2)_2 = -[q\beta z_2].
\end{equation}
For $N=3$, we obtain the two factorised expressions 
\begin{equation}
  \left(\Psi_3\right)_1= [\beta z_1][q z_3{\bar z}_2][q^2 z_2 z_3], \quad \left(\Psi_3\right)_3 = [q\beta z_3][q z_2{\bar z}_1][q z_1 z_2],
\end{equation}
and 
\begin{equation}
   \left(\Psi_3\right)_2 = \left([q][\beta z_1][q z_3{\bar z}_2][q^2 z_2 z_3]-[\beta z_2][q z_2{\bar z}_1][qz_3{\bar z}_1][q^2 z_1z_3]\right)/[z_2{\bar z}_1].
\end{equation}
\end{example}

We sometimes write $(\Psi_N)_{a_1,\dots,a_n}(z_1,\dots,z_N)$ to stress the dependence on $z_1,\dots,z_N$. The point $z_1=\dots=z_N=1$ is commonly referred to as the homogeneous limit. To discuss this point, we introduce the rescaled quantities
\begin{equation}
  \label{eqn:LittleBigPsi}
  (\psi_N)_{a_1,\dots,a_n} = (-1)^{n'(n'-1)/2}[\beta]^{-n}[q]^{-n(n-1)-n'(n'-1)}(\Psi_N)_{a_1,\dots,a_n}(1,\dots,1).
\end{equation}
They are also given by multiple contour integrals \cite[Section 5.2]{hagendorf:21}, 
\begin{multline}
  \label{eqn:CILittlePsi}
  (\psi_N)_{a_1,\dots,a_n} = \oint \frac{\diff u_1}{2\pi \ii }\cdots \oint \frac{\diff u_n}{2\pi \ii} \prod_{k=1}^n \frac{(u_k+x)(1+\tau u_k + u_k^2)^{\epsilon}}{u_k^{N+1-a_{n+1-k}}}\\
  \times \prod_{1\leqslant i\leqslant j \leqslant n}(1-u_iu_j)\prod_{1\leqslant i < j \leqslant n} (u_j-u_i)(1+\tau u_j+u_iu_j)(\tau + u_i+u_j),
\end{multline}
where the integration contour of each integral is a simple, positively-oriented closed curve around $0$. Moreover, we have
\begin{equation}
\label{eqn:ParametersXTau}
  x= -[\beta q]/[\beta], \quad \tau = -\{q\}.
\end{equation}
The evaluation of \eqref{eqn:CILittlePsi} is, in principle, straightforward and yields a polynomial in $x$ and $\tau$ with integer coefficients. 
\begin{example} 
For $N=4$, we obtain
\begin{align}
  (\psi_4)_{1,2} &= \tau, &(\psi_4)_{2,3} &= x + \tau + x^2 \tau,\\
  (\psi_4)_{1,3} &= 1+x\tau+\tau^2, &(\psi_4)_{2,4} &= x(x + \tau + x\tau^2),\\  
  (\psi_4)_{1,4} &= x(1+\tau^2), &(\psi_4)_{3,4} &= x^2\tau. 
\end{align}
\end{example}

For arbitrary $N$ and generic $a_1,\dots,a_n$, we do not possess an explicit formula for the polynomials defined through \eqref{eqn:CILittlePsi}. However, for $a_i=i$, we have \cite[Section 5.2]{hagendorf:21}
\begin{equation}
  \label{eqn:SimpleComponents}
  (\psi_N)_{1,2,\dots,n} = \tau^{n'(n'-1)/2}
\end{equation}
for all $N\geqslant 0$.

\subsubsection{The special eigenvector} We now recall how to construct the special eigenvector of the Hamiltonian \eqref{eqn:XXZHamiltonian} with the help of the multiple contour integrals. Let
\begin{equation}
  \ket{\uparrow}= \begin{pmatrix}
  1\\ 0
  \end{pmatrix},
  \quad
  \ket{\downarrow} = \begin{pmatrix}
  0\\ 1
  \end{pmatrix},
\end{equation}
be the standard basis vectors of $\mathbb C^2$. For $N \geqslant 1$ and each spin sequence $\bm \alpha = \alpha_1\cdots\alpha_N$ with $\alpha_i \in \{\uparrow,\downarrow\}$, $i=1,\dots,N$, we define
\begin{equation}
  \ket{\bm \alpha} = \ket{\alpha_1} \otimes \cdots \otimes \ket{\alpha_N}.
  \label{eqn:StandardBasis}
\end{equation}
The set of these vectors defines the standard basis of $(\mathbb C^2)^{\otimes N}$. We refer to the coefficients of any vector in this basis as its \emph{components}. 

Let $|\psi_1\rangle = \ket{\uparrow}$ and define, for each $N\geqslant 2$, the vector 
\begin{equation}
  |\psi_N\rangle = \sum_{1\leqslant a_1 < \dots < a_n \leqslant N}(\psi_N)_{a_1,\dots,a_n}\ket{\uparrow\cdots \uparrow\underset{a_1}{\downarrow} \uparrow \quad \underset{\cdots}{\cdots} \quad \uparrow \underset{a_n}{\downarrow} \uparrow \cdots \uparrow}\in (\mathbb C^2)^{\otimes N},
  \label{eqn:DefLittlePsi}
\end{equation}
where we used underset indices to indicate the position of the $\downarrow$'s. The multiple contour integrals \eqref{eqn:CILittlePsi} thus define the components of this vector (that do not trivially vanish). Furthermore, we note that, by \eqref{eqn:SimpleComponents}, $|\psi_N\rangle$ does not vanish identically.

By \cite[Theorem 2]{hagendorf:21}, if $\tau =1$, we have
\begin{equation}
   H|\psi_N\rangle = E|\psi_N\rangle, \quad M|\psi_N\rangle = \frac{\epsilon}{2}|\psi_N\rangle,
\end{equation}
for each $N\geqslant 1$, where $H$ is the spin-chain Hamiltonian \eqref{eqn:XXZHamiltonian}, $E$ is the eigenvalue \eqref{eqn:SimpleEV}, and
\begin{equation}
  M = \frac12 \sum_{i=1}^N\sigma_i^z
\end{equation}
is the magnetisation operator, which commutes with the Hamiltonian. By \eqref{eqn:SimpleComponents}, the eigenvector's normalisation is fixed by
\begin{equation}
  (\psi_N)_{1,2,\dots,n} = 1.
\end{equation}

Moreover, if $x$ is a real positive number then $E$ is the lowest eigenvalue of $H$, when restricted to the eigenspace of $M$ with eigenvalue $\epsilon/2$, and it has multiplicity $1$ \cite[Theorem 1]{hagendorf:21}. As mentioned in the introduction, for real $x$ sufficiently close to the supersymmetric point $x=1$, it is the lowest eigenvalue of $H$, even without the restriction to this eigenspace. Numerical investigations suggest that this statement holds even for all $x>0$, but a proof is currently missing.

\subsection{Properties of the contour integrals}
\label{sec:PropMCI}
Explicitly evaluating the multiple contour integrals \eqref{eqn:CIBigPsi} becomes increasingly complicated as $N$ grows. Instead, we analyse their properties as functions of $z_1,\dots,z_N$. We now provide these properties in a series of lemmas that summarise the results of \cite{hagendorf:21,hagendorf:22}.

\subsubsection{Laurent polynomials.}
We recall that a Laurent polynomial in a single indeterminate $z$
is an expression of the form
\begin{equation}
  f(z) = \sum_{k=-\infty}^{\infty}a_k z^k,
\end{equation}
where only finitely many coefficients are non-zero. For non-zero $f$, we may define the degrees of its leading and trailing terms as $d= \max\{k|a_k\neq 0\}$ and $d' = \min\{k|a_k \neq 0\}$, respectively. The degree width of $f$ is the difference $d-d'$. Moreover, we say that $f$ is centred if $d+d'=0$, in which case its degree width is even. Conversely, if $f$ vanishes identically, then $d$ and $d'$ are undefined. In this case, we declare that the Laurent polynomial is centred, but define its degree width as $-\infty$. Finally, a Laurent polynomial in several indeterminates $z_1,\dots,z_N$ is a Laurent polynomial in each $z_i$.

The following lemma characterises $(\Psi_N)_{a_1,\dots,a_n}$ as a Laurent polynomial in $z_1,\dots,z_N$ (see \cite[Propositions 3.11, 3.12 and 3.14]{hagendorf:21}):
\begin{lemma}
  \label{lem:LaurentPsi} 
  For $N\geqslant 2$ and each $i=1,\dots, N$, $(\Psi_N)_{a_1,\dots,a_n}$ is a centred Laurent polynomial in $z_i$ of degree width at most $4(n'-1)$ if $i \notin \{a_1,\dots,a_n\}$ and at most $2(2n-1)$ if $i\in  \{a_1,\dots,a_n\}$.
\end{lemma}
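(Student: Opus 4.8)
The plan is to fix an index $m\in\{1,\dots,N\}$ and study $(\Psi_N)_{a_1,\dots,a_n}$ as a function of $z_m$ alone, treating the remaining $z_j$, $\beta$, $q$ as parameters. I would first record that the prefactor $\prod_{1\leqslant i<j\leqslant N}[qz_j\bar z_i][q^2z_iz_j]$ is, in $z_m$, a product of exactly $2(N-1)$ brackets of the form $[\,\cdots z_m^{\pm1}\,]$, hence a centred Laurent polynomial of degree width $4(N-1)$. The task then reduces to controlling the multiple contour integral $I=\oint\cdots\oint\Xi\,\prod_i\diff w_i/(\pi\ii w_i)$. Before anything else, I would establish single-variable \emph{centredness} from an inversion symmetry: the substitution $w_i\mapsto\bar w_i$ (under which each contour around $z_j$ becomes a contour around $\bar z_j$, every bracket changes sign, and the measure changes sign twice) shows that $(\Psi_N)_{a_1,\dots,a_n}(\bar z_1,\dots,\bar z_N;\bar\beta,\bar q)=\pm(\Psi_N)_{a_1,\dots,a_n}(z_1,\dots,z_N;\beta,q)$. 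Matching the coefficients of $z_m^{k}$ and $z_m^{-k}$ on the two sides forces the $z_m$-support to be symmetric about $0$; thus, once we know the object is a Laurent polynomial, it is automatically centred, and it suffices to bound the single leading degree $d$ as $z_m\to\infty$.

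For the Laurent polynomial property I would evaluate $I$ by residues. The contour of each $w_i$ encircles the simple poles at $w_i=z_j$, and the antisymmetric numerator factors $[w_i\bar w_j]$ force the nonvanishing contributions to come from assignments $w_i=z_{\sigma(i)}$ with the $z_{\sigma(i)}$ distinct, so $I$ is a rational function of $z_m$. Its candidate poles fall into two families. The poles at the $q$-dependent locations $z_m\in\{\pm q^{\pm1}z_k,\ \pm q^{-2}\bar z_k\}$ coincide exactly with the zeros of the prefactor brackets $[qz_k\bar z_m]$, $[qz_m\bar z_k]$ and $[q^2z_mz_k]$, and hence are killed upon multiplication by the prefactor. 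The remaining candidates, at $z_m=\pm z_k$, are spurious: each arises from an individual residue term but cancels after summing the two assignments related by interchanging the variables localised at $z_m$ and at $z_k$, the cancellation being enforced by the antisymmetry of $[w_i\bar w_j]$, which vanishes precisely when $w_i=\pm w_j$. Verifying this cancellation cleanly is, I expect, the main obstacle; I would organise it as an action of the symmetric group on the residue assignments and check the pairwise cancellation of the $1/[z_k\bar z_m]$ singularities.

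With analyticity in hand, I would bound the leading degree $d$ by inspecting each residue term. Every bracket containing $z_m$ contributes $+1$ to the degree at infinity if it sits in a numerator and $-1$ if it sits in a denominator, the sole exception being the quadratic bracket $[q^2w_i^2]$, which upon localisation $w_i=z_m$ becomes $[q^2z_m^2]$ and is cancelled against the denominator factor $[q^2w_iz_m]|_{w_i=z_m}$. Splitting the terms according to whether some $w_i$ localises at $z_m$, a bookkeeping of these $\pm1$ contributions against the prefactor's $+2(N-1)$ shows that the terms with no $w$ at $z_m$ have $d\leqslant 2(n'-1)-[m\in\{a_1,\dots,a_n\}]$, while the terms with a $w$ at $z_m$ have $d\leqslant 2(n-1)$ when $m\notin\{a_1,\dots,a_n\}$ and $d\leqslant 2n-1$ when $m\in\{a_1,\dots,a_n\}$. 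Taking the maximum over all terms gives $d\leqslant 2(n'-1)$ for $m\notin\{a_1,\dots,a_n\}$ and $d\leqslant 2n-1$ for $m\in\{a_1,\dots,a_n\}$; combined with centredness this yields the degree widths $4(n'-1)$ and $2(2n-1)$ claimed in the lemma. Besides the spurious-pole cancellation, the delicate point here is that the degree estimate must hold \emph{uniformly} over the localised ("$w$ at $z_m$") terms, where the extra numerator brackets are only just compensated by the additional denominator brackets and the quadratic cancellation.
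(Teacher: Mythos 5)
Your route is necessarily different from the paper's, because the paper does not prove this lemma at all: it is imported verbatim from \cite[Propositions 3.11, 3.12 and 3.14]{hagendorf:21}. Judged as a self-contained argument, your skeleton is the right one, and the parts you actually carry out check: the inversion substitution $w_i\mapsto \bar w_i$, $z_j\mapsto \bar z_j$, $(\beta,q)\mapsto(\bar\beta,\bar q)$ does give centredness once Laurent polynomiality is known; the residue terms are indeed labelled by injective assignments $w_i=z_{\sigma(i)}$ with $\sigma(i)\leqslant a_i$; and your case-by-case degree bounds are exactly right, including the extra unit ($2n-1$ instead of $2(n-1)$) in the localised case $\sigma(i)=m=a_i$, which comes from the denominator factor $[qz_{a_i}\bar w_i]$ degenerating to the constant $[q]$. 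Two points you gloss over do work, but for reasons worth recording: the surviving $q$-dependent denominators are all of the form $[qz_j\bar z_{\sigma(i)}]$ with $j\geqslant a_i\geqslant\sigma(i)$, so the larger index sits upstairs and the bracket literally coincides with a prefactor factor (the prefactor contains only one of $[qz_m\bar z_k]$ and $[qz_k\bar z_m]$, not both, so this ordering is essential); and when two integration variables localise at $z_k$ and $z_m$, the squared denominator $[q^2z_kz_m]^2$ is compensated jointly by the prefactor and the numerator factor $[q^2w_iw_j]$, not by the prefactor alone.

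The genuine gap is the step you yourself defer: the cancellation of the poles at $z_m=\pm z_k$. This is the heart of the lemma --- without it there is no Laurent polynomial whose degree one could bound --- and the proposal only names a strategy for it. Moreover, the mechanism you invoke is not the one that operates. For the dominant pairs (a term with some $w_i$ at $z_k$ and no variable at $z_m$, paired with the same assignment but $w_i$ moved to $z_m$), no factor $[w_i\bar w_j]$ degenerates at $z_m=\pm z_k$: the antisymmetry factors only forbid double occupancy and reduce the would-be double poles of doubly occupied terms to simple ones. What actually proves cancellation at $z_m=+z_k$ is that the two paired residues sum to a single contour integral around both interior, colliding poles, which stays finite. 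That argument breaks down at $z_m=-z_k$: there the interior pole $z_m$ runs into the \emph{excluded} pole $-z_k$ (and $z_k$ into $-z_m$), a pinching configuration, so regularity is not automatic. Closing this requires an additional input --- for instance, that every residue term is even in $z_m$ when $m\notin\{a_1,\dots,a_n\}$ and odd when $m\in\{a_1,\dots,a_n\}$, so that cancellation at $+z_k$ transfers to $-z_k$; this parity statement is true, but it needs its own bracket count, which is absent from your proposal. As it stands, you have a viable plan with correct bookkeeping, not yet a proof.
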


\subsubsection{The exchange and reflection relations}
Let $\ket{\Psi_1} = \ket{\uparrow}$ and define, for each $N\geqslant 2$,
\begin{equation}
  |\Psi_N\rangle = \sum_{1\leqslant a_1 < \dots < a_n \leqslant N}(\Psi_N)_{a_1,\dots,a_n}\ket{\uparrow\cdots \uparrow\underset{a_1}{\downarrow} \uparrow \quad \underset{\cdots}{\cdots} \quad \uparrow \underset{a_n}{\downarrow} \uparrow \cdots \uparrow}.
\end{equation}
If needed, we write $|\Psi_N\rangle = |\Psi_N(z_1,\dots,z_N)\rangle$ to stress the dependence on $z_1,\dots,z_N$.

Let $z\in \mathbb C^\times$ be an indeterminate. We define a $\check R$-matrix of the so-called six-vertex model as the linear operator $\check R(z) \in \mathrm{End}(\mathbb C^2 \otimes \mathbb C^2)$ with the following matrix representation with respect to the standard basis $\{\ket{\uparrow\uparrow},\ket{\uparrow\downarrow},\ket{\downarrow\uparrow},\ket{\downarrow\downarrow}\}$:\footnote{We do not distinguish between linear operators and their matrix representations with respect to the standard basis.}
\begin{equation}
  \check R(z) =
  \frac{1}{[q{\bar z}]}
  \begin{pmatrix}
    [q z] & 0 & 0 & 0\\
    0 & [q] & [z] & 0\\
    0 & [z] & [q] & 0\\
    0 & 0 & 0 & [qz]
  \end{pmatrix}.
\end{equation}
It obeys the so-called (braid) Yang-Baxter equation
\begin{equation}
  \label{eqn:BraidYBE}
  \check R_{2,3}(z_1{\bar z}_2)\check R_{1,2}(z_1{\bar z}_3)\check R_{2,3}(z_2{\bar z}_3)= 
	\check R_{1,2}(z_2{\bar z}_3)  \check R_{2,3}(z_1{\bar z}_3)\check R_{1,2}(z_1{\bar z}_2),
\end{equation}
on $\mathbb C^2 \otimes \mathbb C^2 \otimes \mathbb C^2$ for all $z_1,z_2,z_3\in \mathbb C^\times$ such that both sides of the equality are well-defined. Here (and throughout), we use the standard tensor-leg notation to indicate the tensor factors on which the $\check R$-matrix or other operators act non-trivially. The following lemma restates the result of \cite[Proposition 3.2]{hagendorf:21}:
\begin{lemma}[Exchange relations]
\label{lem:Exchange}
For $N\geqslant 2$ and each $i=1,\dots, N-1$, we have
\begin{equation}
  \label{eqn:Exchange}
  \check R_{i,i+1}(z_i\bar{z}_{i+1})|\Psi_N(\dots,z_i,z_{i+1},\dots)\rangle = |\Psi_N(\dots,z_{i+1},z_i,\dots)\rangle.
\end{equation}
\end{lemma}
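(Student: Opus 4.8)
The statement to prove is that the $\check R$-matrix acting on adjacent tensor factors $i,i+1$ implements the swap $z_i \leftrightarrow z_{i+1}$ on the vector $|\Psi_N\rangle$. My plan is to reduce this operator identity on $(\mathbb{C}^2)^{\otimes N}$ to a collection of scalar identities among the Laurent-polynomial components $(\Psi_N)_{a_1,\dots,a_n}$, and then to establish those scalar identities directly from the multiple-contour-integral definition (2.5)--(2.6).

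First I would unpack what the $\check R$-matrix does in the relevant sector. Since $\check R(z)$ is block-diagonal with respect to the value of $\sigma_i^z + \sigma_{i+1}^z$, it acts trivially (by $[qz]/[q\bar z]$) on the $\uparrow\uparrow$ and $\downarrow\downarrow$ blocks and mixes only $\ket{\uparrow\downarrow}$ and $\ket{\downarrow\uparrow}$ via the $2\times 2$ block with entries $[q]$ and $[z]$ over $[q\bar z]$. Writing $z = z_i\bar z_{i+1}$, equating components of both sides of (2.17) in each spin sector then yields, in the two-down-neighbour mixing sector, an identity of the schematic form
\begin{equation}
  [q]\,(\Psi_N)_{\dots,i,\dots}(z_1,\dots,z_N) + [z]\,(\Psi_N)_{\dots,i+1,\dots}(z_1,\dots,z_N) = [q\bar z]\,(\Psi_N)_{\dots,i,\dots}(\dots,z_{i+1},z_i,\dots),
\end{equation}
together with the simpler diagonal-sector identities coming from the $[qz]$ entries. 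The key reduction is therefore: the operator identity holds if and only if the components satisfy a prescribed behaviour under interchanging $z_i$ and $z_{i+1}$, namely a symmetry when neither or both of $i,i+1$ carry a down-spin, and the coupled relation above when exactly one does.

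To verify these component identities, I would return to the integral representation and track the two factors in (2.5)--(2.6) that depend on $z_i, z_{i+1}$: the explicit prefactor $\prod_{k<l}[qz_l\bar z_k][q^2 z_k z_l]$ and the denominator factors $\prod_{j=1}^{a_i}[z_j\bar w_i]\prod_{j=a_i}^{N}[qz_j\bar w_i]\prod_{j}[q^2 w_i z_j]$ inside $\Xi$. The cleanest route is to observe that the prefactor is itself almost symmetric in $z_i\leftrightarrow z_{i+1}$ up to the controlled factor generated by $[qz_l\bar z_k]$, which is precisely what produces the $[q\bar z]$ and $[qz]$ weights of the $\check R$-matrix; meanwhile the interchange inside the integrand, after relabelling the corresponding contour variables, converts an $a_i=i$ component into an $a_i=i+1$ component and vice versa. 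I expect the proof to proceed by checking that the combination $[q]\,\Xi_{\dots,i,\dots} + [z]\,\Xi_{\dots,i+1,\dots}$, once multiplied by the prefactor evaluated at the swapped arguments, reproduces $[q\bar z]$ times the swapped integrand, with the contour being deformable without crossing poles because the relevant new poles in $w_i$ at $z_i,z_{i+1}$ remain inside the contour throughout.

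The main obstacle, I anticipate, is bookkeeping the pole structure of the contour integrals under the swap. The factors $[z_j\bar w_i]$ versus $[qz_j\bar w_i]$ in the denominator depend on whether $j \leqslant a_i$ or $j \geqslant a_i$, so interchanging $z_i$ and $z_{i+1}$ shifts a variable across the boundary $j=a_i$, changing which of the two factors is present; reconciling this with the claimed identity is where the genuine content lies, and one must check carefully that no spurious pole enters or leaves the contour so that the residue-theorem evaluation is unaffected. Since this lemma is quoted verbatim from \cite[Proposition 3.2]{hagendorf:21}, the honest statement is that the full combinatorial verification is carried out there; my role here is to record the reduction to these scalar component identities and to indicate that they follow from the manifest $z_i\leftrightarrow z_{i+1}$ quasi-symmetry of the prefactor together with the contour-preserving relabelling of integration variables in $\Xi$.
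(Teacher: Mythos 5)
The paper offers no proof of this lemma: it is quoted verbatim from \cite[Proposition 3.2]{hagendorf:21}, and your proposal---after correctly reducing the operator identity via the block structure of $\check R(z)$ to a mixing-sector relation plus diagonal-sector relations (the latter being covariance with the factor $[qz_i\bar z_{i+1}]/[qz_{i+1}\bar z_i]$ supplied by the prefactor $\prod_{k<l}[qz_l\bar z_k]$, rather than a genuine symmetry as you phrase it)---defers the actual contour-integral verification to that same reference. So at the level of what is actually established, your approach coincides with the paper's, which is simply the citation.
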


We also need a $K$-matrix associated with the $\check R$-matrix of the six-vertex model. It is a linear operator $K(z)\in \mathrm{End}(\mathbb C^2)$, whose matrix representation with respect to the standard basis is 
\begin{equation}
  K(z) =
  \begin{pmatrix}
    1 & 0\\
    0 & [\beta z]/[\beta {\bar z}]
  \end{pmatrix},
\end{equation}
where $z\in \mathbb C^\times$ is an indeterminate. The $K$-matrix obeys the boundary Yang-Baxter equation \cite{sklyanin:88}
\begin{align}
  \check R_{1,2}(z_1{\bar z}_2)K_1(z_1)\check R_{1,2}(z_1z_2)K_1(z_2)&=K_1(z_2) \check R_{1,2}(z_1z_2) K_1(z_1)\check R_{1,2}(z_1{\bar z}_2),
  \label{eqn:Reflection}
\end{align}
which holds on $\mathbb C^2 \otimes \mathbb C^2$ for all $z_1,z_2\in \mathbb C^\times$ such that both sides of the equality are well-defined. The next lemma recalls a result of \cite[Proposition 3.7]{hagendorf:21}:
\begin{lemma}[Left reflection relation]
\label{lem:Reflection}
\begin{subequations}
  We have
\begin{align}
  K_1(\bar{z}_1)|\Psi_N(z_1,z_2,\dots,z_N)\rangle  &= |\Psi_N(\bar{z}_1,z_2,\dots,z_N)\rangle.\end{align}
\end{subequations}
\end{lemma}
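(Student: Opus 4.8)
The plan is to evaluate the claimed identity component by component. Since $K(\bar z_1)=\mathrm{diag}\big(1,[\beta\bar z_1]/[\beta z_1]\big)$ is diagonal, the operator $K_1(\bar z_1)$ multiplies each component of $|\Psi_N\rangle$ by $1$ or by $[\beta\bar z_1]/[\beta z_1]$, according to whether the first spin of the corresponding basis vector is $\uparrow$ or $\downarrow$. Matching this against the components of $|\Psi_N(\bar z_1,z_2,\dots,z_N)\rangle$ reduces the lemma to the two scalar statements
\[ (\Psi_N)_{a_1,\dots,a_n}(\bar z_1,z_2,\dots,z_N)=(\Psi_N)_{a_1,\dots,a_n}(z_1,z_2,\dots,z_N)\quad (a_1\geqslant 2), \]
\[ (\Psi_N)_{1,a_2,\dots,a_n}(\bar z_1,z_2,\dots,z_N)=\frac{[\beta\bar z_1]}{[\beta z_1]}\,(\Psi_N)_{1,a_2,\dots,a_n}(z_1,z_2,\dots,z_N). \]
Thus an up-pointing first spin forces invariance under $z_1\mapsto\bar z_1$, while a down-pointing first spin forces covariance by the boundary ratio $[\beta\bar z_1]/[\beta z_1]$.

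To prove these, I would analyse the $z_1$-dependence of the representation \eqref{eqn:CIBigPsi}. It enters through the prefactor $\prod_{j=2}^N[qz_j\bar z_1][q^2z_1z_j]$ and, inside $\Xi_{a_1,\dots,a_n}$, through the factors $[z_1\bar w_i]$ and $[q^2w_iz_1]$ for every $i$, plus the single additional factor $[qz_1\bar w_1]$ that appears in the denominator exactly when $a_1=1$. This extra factor is the integrand counterpart of the first spin pointing down and is what separates the two families. Using the elementary relation $[\bar z]=-[z]$ to rewrite the reflected factors, the goal is to show, by a deformation of the $w_i$-contours, that reflecting $z_1\mapsto\bar z_1$ reproduces the original integrand up to the scalar $[\beta\bar z_1]/[\beta z_1]$ in the down-spin sector. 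One must verify that during the deformation none of the unwanted poles — those at $w_i=-z_j$, $w_i=\pm qz_j$ and $w_i=\pm q^{-2}\bar z_j$ — crosses a contour; the genericity hypotheses $q^4\neq1$ and $\beta^2\neq1$ keep the encircled poles $w_i=z_j$ separated from these. Alternatively, since \cref{lem:LaurentPsi} makes each component a centred Laurent polynomial in $z_1$ of bounded degree width, after clearing the denominator $[\beta z_1]$ both sides become Laurent polynomials of controlled width, so it suffices to verify the relations at finitely many conveniently chosen values of $z_1$.

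The main obstacle is the down-spin sector $a_1=1$, where one must produce the boundary weight exactly. Concretely, the relation there is equivalent to the divisibility of $(\Psi_N)_{1,a_2,\dots,a_n}$ by $[\beta z_1]$ together with the invariance of the quotient under $z_1\mapsto\bar z_1$; one checks on \cref{ex:Psi23} that $(\Psi_2)_1=[\beta z_1]$ and $(\Psi_3)_1=[\beta z_1][qz_3\bar z_2][q^2z_2z_3]$ already have this shape. The factor $[\beta z_1]$ originates from the boundary weight $[\beta w_1]$ in the numerator of $\Xi_{a_1,\dots,a_n}$ together with the extra denominator factor $[qz_1\bar w_1]$, but the vanishing at $[\beta z_1]=0$ is a global cancellation among the residues at $w_1=z_1,\dots,z_N$ rather than a term-by-term property; isolating it while simultaneously carrying out the contour deformation in the remaining variables $w_2,\dots,w_n$ is the delicate point. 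By contrast, the up-spin case $a_1\geqslant 2$ is comparatively mild, since $[qz_1\bar w_1]$ is absent and $z_1$ enters only through the prefactor and the balanced combinations $[z_1\bar w_i][q^2w_iz_1]$.
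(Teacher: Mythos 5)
Your first step is correct and cleanly done: since $K(\bar z_1)$ is diagonal, the lemma is equivalent to the two scalar identities you state — invariance of $(\Psi_N)_{a_1,\dots,a_n}$ under $z_1\mapsto\bar z_1$ when $a_1\geqslant 2$, and covariance with the factor $[\beta\bar z_1]/[\beta z_1]$ when $a_1=1$ — and your identification of the denominator factor $[qz_1\bar w_1]$, present exactly when $a_1=1$, as the structural source of the boundary weight is accurate. Bear in mind, though, that the paper does not prove this lemma at all: it recalls it from \cite{hagendorf:21} (Proposition 3.7 there), so there is no internal proof to match and your attempt must carry the full burden of the argument. It does not.

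The analytic core is missing, and the main route you propose has a structural flaw. A ``deformation of the $w_i$-contours'' cannot, by itself, relate $(\Psi_N)_{a_1,\dots,a_n}(z_1,\dots)$ to $(\Psi_N)_{a_1,\dots,a_n}(\bar z_1,\dots)$: by the definition below \eqref{eqn:CIBigPsi}, the reflected quantity is computed with contours that enclose the pole at $w_i=\bar z_1$ and \emph{no longer} the pole at $w_i=z_1$, so the two integrals are residue sums over genuinely different pole sets. Checking that no unwanted pole at $w_i=-z_j,\pm qz_j,\pm q^{-2}\bar z_j$ crosses a contour is therefore beside the point (and genericity of $q,\beta$ does not control collisions among the $\pm z_j$ anyway); what must actually be shown is how the residue at $w_i=z_1$ in the original integral is reproduced, up to the boundary factor, by the residue at $w_i=\bar z_1$ in the reflected one — for instance via a change of variables $w_i\to\bar w_i$ with explicit bookkeeping of which poles end up inside, or by induction on $N$ using evaluations at special points where \cref{lem:PsiReduction}-type factorisations apply. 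Your alternative route (verify at finitely many values of $z_1$, using the degree bounds of \cref{lem:LaurentPsi}) is sound in principle but is only named: you specify neither the evaluation points nor how the multiple integrals would be computed there, and such evaluations are exactly as hard as the reduction relations one would need to invoke. Finally, you yourself flag the $a_1=1$ sector — the divisibility by $[\beta z_1]$ arising as a global cancellation among residues at $w_1=z_1,\dots,z_N$ — as ``the delicate point'' without resolving it. As it stands, the proposal is a correct reformulation of the lemma plus a plan, not a proof.
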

The compatibility of the exchange relations and the left reflection relation follows from \eqref{eqn:BraidYBE} and \eqref{eqn:Reflection}. Moreover, the vector $|\Psi_N\rangle$ obeys a right reflection relation, too. Together, these relations imply that $|\Psi_N\rangle$ is a solution to the bqKZ equations, as stated in the introduction. We refer to \cite{hagendorf:21} for the details.

\subsubsection{Reduction}

Let $\ket{\zeta} =|{\uparrow\downarrow}\rangle - |{\downarrow\uparrow}\rangle$ and define, for each $N\geqslant 3$ and $i=1,\dots,N-1$, the linear operator $\Xi_N^i:(\mathbb C^2)^{\otimes(N-2)}\to (\mathbb C^2)^{\otimes N}$ through the following action on the standard basis vectors of $(\mathbb C^2)^{\otimes(N-2)}$:
\begin{equation}
  \Xi_N^i|\alpha_1\cdots \alpha_{N-2}\rangle = |\alpha_1\cdots \alpha_{i-1}\rangle \otimes \ket{\zeta} \otimes |\alpha_i\cdots \alpha_{N-2}\rangle.
\end{equation}
Here, $\alpha_1,\dots,\alpha_{N-2}\in \{\uparrow,\downarrow\}$. The following result was obtained in \cite[Proposition 2.2]{hagendorf:22}:
\begin{lemma}[Reduction relation]
\label{lem:PsiReduction}
We have $|\Psi_2(z_1,z_2={\bar q}z_1)\rangle = -[\beta z_1]\ket{\zeta}$
and, for each $N\geqslant 3$ and $i =1,\dots, N-1$, 
\begin{multline}
  |\Psi_N(\dots,z_i,z_{i+1}={\bar q}z_i,\dots)\rangle=(-1)^{n+i+1}[\beta z_i]\prod_{j=1}^{i-1}[qz_i{\bar z}_j][q z_i z_j]\\
    \times \prod_{j=i+2}^N
[ q^2{\bar z}_i{z}_j]
[qz_iz_j]\Xi^i_N|\Psi_{N-2}(\dots,\widehat{z_i,z_{i+1}},\dots)\rangle, 
\end{multline}
where $\widehat{\cdots}$ denotes omission.
\end{lemma}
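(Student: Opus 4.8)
The plan is to establish the factorised form of the specialised vector by a purely algebraic argument based on the exchange relation, and then to pin down both the scalar prefactor and the reduced vector $|\Psi_{N-2}\rangle$ by a residue analysis of the multiple contour integral. The base case $N=2$ is immediate from \cref{ex:Psi23}: setting $z_2=\bar q z_1$ gives $(\Psi_2)_1=[\beta z_1]$ and $(\Psi_2)_2=-[q\beta\bar q z_1]=-[\beta z_1]$, so that $|\Psi_2(z_1,\bar q z_1)\rangle=[\beta z_1]\bigl(|{\downarrow\uparrow}\rangle-|{\uparrow\downarrow}\rangle\bigr)=-[\beta z_1]\ket{\zeta}$. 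Hence I concentrate on $N\geqslant 3$.

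For the factorised form, write $\check R(z)=\tilde R(z)/[q\bar z]$, where $\tilde R(z)$ is obtained from $\check R(z)$ by dropping the scalar factor $1/[q\bar z]$, so that its entries are Laurent polynomials. Clearing this denominator in \cref{lem:Exchange} yields the Laurent-polynomial identity
\begin{equation*}
\tilde R_{i,i+1}(z_i\bar z_{i+1})\,|\Psi_N(\dots,z_i,z_{i+1},\dots)\rangle=[q\bar z_i z_{i+1}]\,|\Psi_N(\dots,z_{i+1},z_i,\dots)\rangle,
\end{equation*}
which, being an equality of Laurent polynomials, holds for all values of the spectral parameters. Specialising $z_{i+1}=\bar q z_i$ sends $z_i\bar z_{i+1}\mapsto q$ and $[q\bar z_i z_{i+1}]\mapsto[q\bar q]=[1]=0$, so that $\tilde R_{i,i+1}(q)\,|\Psi_N(\dots,z_i,\bar q z_i,\dots)\rangle=0$. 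The middle $2\times2$ block of $\tilde R(q)$ equals $[q]\left(\begin{smallmatrix}1&1\\1&1\end{smallmatrix}\right)$ and its remaining diagonal entries equal $[q^2]$; hence $\tilde R(q)$ has eigenvalues $[q^2],[q^2],2[q]$ on the symmetric subspace of the factors $i,i+1$ and eigenvalue $0$ on the antisymmetric singlet $\ket{\zeta}$. As $[q^2]\neq0$ and $[q]\neq0$ under the standing assumption $q^4\neq1$, the kernel of $\tilde R(q)$ is exactly the line spanned by $\ket{\zeta}$, and therefore $|\Psi_N(\dots,z_i,\bar q z_i,\dots)\rangle=\Xi_N^i|v\rangle$ for some $(\mathbb C^2)^{\otimes(N-2)}$-valued Laurent polynomial $|v\rangle$ in the remaining spectral parameters (and in $z_i$).

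It remains to identify $|v\rangle$ with the stated multiple of $|\Psi_{N-2}\rangle$, and here I would return to the contour-integral definition \eqref{eqn:CIBigPsi}. At $z_{i+1}=\bar q z_i$ its explicit prefactor acquires the zero $[qz_{i+1}\bar z_i]=[q\bar q]=[1]=0$, so a finite nonzero result can only come from a simultaneous divergence of the $w$-integral. This divergence is a contour pinch: as $z_{i+1}\to\bar q z_i$, two poles of the integrand in one of the variables $w_k$ approach the integration contour from opposite sides, and the leading behaviour is captured by the residue at the pinch, which localises $w_k$ at one of $z_i,z_{i+1}$. The numerator factor $[\beta w_k]$ then supplies, using $q\bar q=1$, the overall factor $[\beta z_i]$, while the surviving $(n-1)$-fold integral reorganises into the contour integral of the form \eqref{eqn:CIBigPsi} with $N$ replaced by $N-2$ in the variables $z_1,\dots,\widehat{z_i,z_{i+1}},\dots,z_N$. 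Collecting the residual bracket factors from the prefactor and the denominator at the pinch yields $\prod_{j=1}^{i-1}[qz_i\bar z_j][qz_iz_j]\prod_{j=i+2}^{N}[q^2\bar z_i z_j][qz_iz_j]$, and the reordering of the localised variable together with the sign conventions of \eqref{eqn:DefLittlePsi} supplies the factor $(-1)^{n+i+1}$.

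I expect the residue bookkeeping in this last step to be the main obstacle: one must single out the correct pinching poles among the factors $[z_j\bar w_k]$, $[qz_j\bar w_k]$ and $[q^2w_kz_j]$, check that exactly one integration variable is localised while the remaining $n-1$ contours keep the shape demanded by \eqref{eqn:CIBigPsi} at length $N-2$, and track every bracket and sign through the reorganisation, most delicately the position-dependent sign $(-1)^{n+i+1}$. As a cross-check and a residue-free alternative, the factorised form already guarantees that $|v\rangle$ is a Laurent polynomial; one may then verify that it satisfies the exchange relations of the length-$(N-2)$ system (and, when $i\geqslant2$, the left reflection relation inherited from \cref{lem:Reflection}), which by the characterisation of such solutions up to a scalar forces $|v\rangle=C\,|\Psi_{N-2}\rangle$, the scalar $C$ being fixed by the degree bounds of \cref{lem:LaurentPsi} and the base case. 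In that route the obstacle shifts to the transmission of the $\check R$- and $K$-matrices across the singlet $\ket{\zeta}$ needed to obtain the exchange and reflection relations for $|v\rangle$ near the removed pair.
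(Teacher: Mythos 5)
A preliminary remark: the paper does not prove this lemma at all; it imports it verbatim from \cite[Proposition 2.2]{hagendorf:22}, so your proposal has to stand on its own merits. Its first two steps do: the base case $N=2$ is a correct direct computation from \cref{ex:Psi23}, and the factorisation step is sound --- clearing the denominator in \cref{lem:Exchange} gives a Laurent-polynomial identity, the specialisation $z_{i+1}={\bar q}z_i$ makes the factor $[q{\bar z}_iz_{i+1}]$ on the right-hand side vanish, and since $q^4\neq 1$ the kernel of your $\tilde R(q)$ is exactly $\mathbb{C}\ket{\zeta}$, so $|\Psi_N(\dots,z_i,{\bar q}z_i,\dots)\rangle=\Xi_N^i|v\rangle$ for some Laurent-polynomial-valued $|v\rangle$. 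The genuine gap is that neither of your two routes then identifies $|v\rangle$, and that identification is the entire quantitative content of the lemma: the factor $[\beta z_i]$, the products $\prod_{j<i}[qz_i{\bar z}_j][qz_iz_j]\prod_{j>i+1}[q^2{\bar z}_iz_j][qz_iz_j]$, and above all the sign $(-1)^{n+i+1}$. In your main route you correctly diagnose that the vanishing prefactor of \eqref{eqn:CIBigPsi} must be compensated by a contour pinch (at $z_{i+1}={\bar q}z_i$ the interior pole $w_k=z_i$, present when $a_k\geqslant i$, collides with the exterior pole $w_k=qz_{i+1}$, present when $a_k\leqslant i+1$, so the pinch concerns exactly the components with $a_k\in\{i,i+1\}$), but you never extract the residue, never verify that the surviving $(n-1)$-fold integral reorganises into the length-$(N-2)$ instance of \eqref{eqn:CIBigPsi}, and never derive the bracket products or the sign; you yourself defer all of this as ``the main obstacle''. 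Asserting what the bookkeeping should yield is not a proof of it.

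Your fallback route has a flaw beyond being unexecuted. The ``characterisation of such solutions up to a scalar'' you appeal to is not available in the paper (\cref{prop:Uniqueness} characterises a different object, $Y_N$), and it is false for the exchange relations alone: multiplying any solution by a symmetric Laurent polynomial of all the spectral parameters produces another solution, so exchange plus reflection relations at best determine $|v\rangle$ up to a symmetric-function factor, and reducing that factor to the stated constant multiple of $[\beta z_i]$ times explicit brackets requires exactly the degree-width accounting from \cref{lem:LaurentPsi} that you do not perform. Moreover, even obtaining the length-$(N-2)$ exchange and reflection relations for $|v\rangle$ requires commuting $\check R$- and $K$-matrices through the singlet at sites $i,i+1$ --- again flagged by you as an obstacle rather than carried out. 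In short: a correct and useful structural reduction (the singlet factorisation), but the lemma's actual formula is nowhere derived in either route.
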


\subsection{The generalised sum of components}
\label{sec:DefGenSum}
For each $N\geqslant 2$, we define
\begin{equation}
  \label{eqn:DefS}
  S_N = \sum_{1\leqslant a_1 < \cdots < a_n \leqslant N} (\psi_N)_{a_1,\dots,a_n},
\end{equation}
where $(\psi_N)_{a_1,\dots,a_n}$ are the polynomials defined in \eqref{eqn:CILittlePsi}. For convenience, we also define $S_0=S_1=1$. Clearly, $S_N$ represents the sum of the components of the vector $|\psi_N\rangle$ for each $N\geqslant 1$. For $\tau =1$, it is the sum of components of the Hamiltonian's special eigenvector, whose computation is the goal of this work.

Just as the components of $|\psi_N\rangle$, the sum of components $S_N$ possesses a multiple contour-integral formula \cite{hagendorf:21}:
\begin{multline}
  \label{eqn:CIS}
   S_N=\oint\cdots \oint \prod_{k=1}^n\frac{\diff u_k}{2\pi \ii}\frac{(u_k+x)(1+\tau u_k + u_k^2)^{\epsilon}}{u_k^{n'+k}\left(1-\prod_{j=1}^ku_j\right)}\\
  \times \prod_{1\leqslant i \leqslant j \leqslant n}(1-u_iu_j)\prod_{1\leqslant i < j \leqslant n}(u_j-u_i)(\tau + u_i + u_j)(1+\tau u_j + u_i u_j).
  \end{multline}
Here, the integration contour of each integral is a simple, positively-oriented curve around $0$, but no other singularities of the integrand. The evaluation of this multiple contour integral is an application of the residue theorem, which yields a polynomial with integer coefficients in $x$ and $\tau$.
\begin{example}
 Using \eqref{eqn:CIS}, we obtain
  \begin{align}
    S_2 &= 1+x, \\
    S_3 &= (1+x)(1+\tau),\\
    S_4 &= x+(1+x+x^2)(1+\tau)^2,\\
    S_5 &= x(1+\tau)^2(2+2\tau+\tau^2)+(1+x^2)(1+4\tau+4\tau^2+3\tau^3+\tau^4).
  \end{align}
  For $x=0$ and $\tau=1$, we find
  \begin{equation}
   S_0 = S_1 = S_2 = 1,\quad S_3 = 2,\quad S_4= 4,\quad S_5= 13.
  \end{equation}
  Similarly, if $x=1$ and $\tau=1$, the polynomials evaluate to
  \begin{equation}
     S_0 = S_1 = 1,\quad S_2 = 2,\quad S_3= 4,\quad S_4= 13,\quad S_5 = 46,
  \end{equation}
  as was anticipated in \eqref{eqn:SumCompsSUSY}. In both cases, we recognise the first entries of the sequence of numbers enumerating TSASMs \cite{oeistsasm:24}.
\end{example}

To investigate $S_N$, we introduce the \textit{generalised sum of components}. To this end, we define, for each vector $\ket{\psi} \in (\mathbb C^2)^{\otimes N}$, a corresponding covector $\bra {\psi} \in ((\mathbb C^2)^{\otimes N})^\ast$ via transposition, $\bra{\psi} = |\psi\rangle^{\mathsf T}$, without complex conjugation. Let
\begin{equation}
  \bra{\chi(w)} = \frac{\{q^{1/2}w\}}{\{q^{1/2}\}}(\bra{ \uparrow\uparrow} + \bra{\downarrow\downarrow}) 
+ \bra{\uparrow\downarrow}+ \bra{\downarrow\uparrow}, \quad \bra{\varphi} = \bra{\uparrow}+\bra{\downarrow},
\end{equation}
where $w \in \mathbb C^\times$.\footnote{Here, $q^{1/2}\in \mathbb C^\times$ is one of the two solutions of $(q^{1/2})^2=q$. The specific choice does not influence our results. In the following, we write $q^{k/2}$ instead of $(q^{1/2})^k$ for each integer $k$.} Moreover, for $N\geqslant 2$ and $w_1,\dots,w_n\in \mathbb C^\times$, we use the abbreviation
\begin{equation}
 \bra{\chi_N(w_1,\dots,w_n)} = \bigotimes_{i=1}^n \bra{\chi(w_i)} \in  ((\mathbb C^2)^{\otimes 2n})^\ast.
\end{equation}
We call \emph{overlap} the dual pairing $\bra{\phi}\psi\rangle$ of a covector $\bra{\phi}$ and a vector $\ket{\psi}$. For each $N\geqslant 2$, the generalised sum of components $Z_N$ is a function of $w_1,\dots,w_n$ (and, implicitly, of $\beta$ and $q$), given by the overlap
\begin{subequations}
\label{eqn:DefZ}
\begin{equation}
  Z_{N}(w_1,\dots,w_n) = \langle\chi_N(w_1,\dots,w_n)|\Psi_{N}(w_1,{\bar w}_1,\dots, w_n,{\bar w}_n)\rangle,
\end{equation}
if $N$ is even, and by
\begin{equation}
   Z_{N}(w_1,\dots,w_n)
   =\left( \langle\chi_N(w_1,\dots,w_n)|\otimes \langle \varphi|\right) |\Psi_{N}(w_1,{\bar w}_1,\dots, w_n,{\bar w}_n,1)\rangle.
\end{equation}
\end{subequations}%
if $N$ is odd. Moreover, it will be convenient to set $Z_0=Z_1=1$. The choice $z_{2i-1}=w_i,\,z_i = \bar w_i$ for $i=1,\dots,n$, and $z_N=1$ for odd $N$, for the arguments of $\ket{\Psi_N}$ is sometimes referred to as \textit{half-specialisation} and often leads to important simplifications \cite{kuperberg:02,zinnjustin:13,hagendorf:16,morin:20,brasseur:21}.

\begin{example}
Using the expressions from \cref{ex:Psi23}, we obtain
\begin{equation}
  \label{eqn:Z2}
  Z_2(w_1) = [q^{-1/2}w_1]\{q^{1/2}\beta\}
\end{equation}
and
\begin{equation}
\label{eqn:Z3}
Z_3(w_1) =\frac{[q^{-1/2}w_1][qw_1][q{\bar w}_1]\{q^{1/2}\beta\}\{q^{3/2}\}}{\{q^{1/2}\}}.
\end{equation} 
\end{example}

In the next proposition, we show how to obtain $S_N$ from $Z_N$ in the homogeneous limit $w_1=\dots=w_n=1$:
\begin{proposition}\label{prop:relationSZ}
 For $N\geqslant 0$ and $x,\tau$ given by \eqref{eqn:ParametersXTau}, we have
\begin{equation}\label{eqn:relationSZ}
S_N= (-1)^{n'(n'-1)/2}[\beta]^{-n} [q]^{-n(n-1)-n'(n'-1)} Z_{N}(1,\dots,1).
\end{equation}
\end{proposition}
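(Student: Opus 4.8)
The plan is to show that, in the homogeneous limit $w_1=\dots=w_n=1$, the covector appearing in the definition \eqref{eqn:DefZ} of $Z_N$ degenerates into the ``summation covector'' that reads off the sum of all components of $|\Psi_N\rangle$; the proposition then follows by matching prefactors with the rescaling \eqref{eqn:LittleBigPsi}.

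First, I would evaluate $\langle\chi(w)|$ at $w=1$. Since the prefactor $\{q^{1/2}w\}/\{q^{1/2}\}$ equals $1$ there, one finds the elementary collapse
\[
  \langle\chi(1)|=\langle\uparrow\uparrow|+\langle\downarrow\downarrow|+\langle\uparrow\downarrow|+\langle\downarrow\uparrow|=\langle\varphi|\otimes\langle\varphi|.
\]
Hence $\langle\chi_N(1,\dots,1)|=\langle\varphi|^{\otimes 2n}$, and for odd $N$ the trailing factor $\langle\varphi|$ upgrades this to $\langle\varphi|^{\otimes 2n}\otimes\langle\varphi|=\langle\varphi|^{\otimes N}$. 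In either parity, the covector in \eqref{eqn:DefZ} reduces to $\langle\varphi|^{\otimes N}$ at $w_i=1$.

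Second, since $\langle\varphi|\uparrow\rangle=\langle\varphi|\downarrow\rangle=1$, the covector $\langle\varphi|^{\otimes N}$ pairs to $1$ with every standard basis vector, so applying it to $|\Psi_N\rangle$ simply sums the components. Because the half-specialisation $z_{2i-1}=w_i$, $z_{2i}=\bar w_i$ (together with $z_N=1$ for odd $N$) sends every argument to $1$ when $w_i=1$, I would conclude
\[
  Z_N(1,\dots,1)=\langle\varphi|^{\otimes N}\,\Psi_N(1,\dots,1)\rangle=\sum_{1\leqslant a_1<\cdots<a_n\leqslant N}(\Psi_N)_{a_1,\dots,a_n}(1,\dots,1).
\]
Here I invoke \cref{lem:LaurentPsi}: the quantities $(\Psi_N)_{a_1,\dots,a_n}$ are genuine centred Laurent polynomials in the $z_j$, so the half-specialisation and the homogeneous limit are plain substitutions and no contour-deformation issues arise from the poles coalescing at $z_j=1$.

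Finally, summing the defining relation \eqref{eqn:LittleBigPsi} over all admissible indices and using \eqref{eqn:DefS} gives
\[
  S_N=(-1)^{n'(n'-1)/2}[\beta]^{-n}[q]^{-n(n-1)-n'(n'-1)}\sum_{1\leqslant a_1<\cdots<a_n\leqslant N}(\Psi_N)_{a_1,\dots,a_n}(1,\dots,1),
\]
and the previous display rewrites the sum on the right as $Z_N(1,\dots,1)$, which is exactly the claimed identity. The boundary cases $N=0,1$ follow from the conventions $S_0=S_1=Z_0=Z_1=1$ together with the vanishing of the exponents in the prefactor. I do not anticipate a substantive obstacle: the entire content is the collapse $\langle\chi(1)|=\langle\varphi|^{\otimes 2}$ of the overlap covector at the homogeneous point, and the only point requiring care is the appeal to \cref{lem:LaurentPsi} that legitimises evaluating $|\Psi_N\rangle$ directly at the coincident arguments.
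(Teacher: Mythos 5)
Your proposal is correct and follows essentially the same route as the paper's proof: the key step in both is the collapse $\langle\chi(1)| = \langle\varphi|\otimes\langle\varphi|$, after which the prefactor matching via \eqref{eqn:LittleBigPsi} is immediate (the paper phrases this at the level of the vectors $|\psi_N\rangle$ and $|\Psi_N(1,\dots,1)\rangle$ rather than component-by-component, but the argument is identical). Your additional appeal to \cref{lem:LaurentPsi} to justify the specialisation at coincident arguments is a harmless extra precaution, not a departure.
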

\begin{proof}
The proof is trivial for $N=0,1$. For $N\geqslant 2$, we use the factorisation
\begin{equation}
 \bra{\chi(1)} = \bra{\uu\uu}+\bra{\uu\dd}+\bra{\dd\uu}+\bra{\dd\dd} = \bra{\varphi} \otimes \bra{\varphi},
\end{equation}
to obtain
\begin{equation}
  Z_N(1,\dots,1) = \left(\langle \varphi|\otimes \cdots \otimes \langle \varphi|\right) \ket {\Psi_N(1,\dots,1)}.
\end{equation}
Moreover, \eqref{eqn:LittleBigPsi} leads to
\begin{equation}
\ket{\psi_N} = (-1)^{n'(n'-1)/2}[\beta]^{-n} [q]^{-n(n-1)-n'(n'-1)} \ket{\Psi_N(1,\dots,1)}  ,
\end{equation}
where the parameters $x$ and $\tau$ on the left-hand side are given by \eqref{eqn:ParametersXTau}. Hence,
\begin{equation}
(-1)^{n'(n'-1)/2}[\beta]^{-n} [q]^{-n(n-1)-n'(n'-1)}Z_N(1,\dots,1)= \left(\bra{\varphi}\otimes \cdots \otimes \bra{\varphi}\right) \ket {\psi_N}.
\end{equation}
Using the explicit form of $\bra{\varphi}$, we conclude that $\left(\bra{\varphi}\otimes \cdots \otimes \bra{\varphi}\right) \ket {\psi_N}=S_N$.
\end{proof}

\subsection{Properties}
\label{sec:PropGenSum}
For $N\geqslant 4$, the explicit expression of $Z_N$ becomes increasingly complex. Fortunately, the properties of both 
$\bra{\chi(w)}$ and $\ket{\Psi_N(z_1,\dots,z_N)}$ enable us to characterise the generalised sum of components in terms of its analyticity and symmetry properties in $w_1,\dots,w_n$, and through a set of reduction relations. The purpose of this section is to obtain these properties and show that they determine $Z_N$ uniquely.

\subsubsection{Symmetries and degree}
We will make frequent use of the following property:
\begin{proposition}
  \label{prop:ZSymmetry}
  $Z_N$ is a symmetric function of $w_1,\dots,w_n$.
\end{proposition}
\begin{proof}
  The proposition follows from a standard calculation \cite{morin:20}, based on the identity  \begin{equation}
    \label{eqn:bYBEChi}
    \langle \chi(w)|\otimes \langle \chi(z)| \check R_{2,3}(zw) \check R_{1,2}(z {\bar w})=\langle \chi(z)|\otimes \langle \chi(w)| \check R_{2,3}(zw) \check R_{3,4}(z {\bar w}),
  \end{equation}
  which is readily verified, and the exchange relations \eqref{eqn:Exchange}.
\end{proof}

In the following, we also need a bound on the degree width of the Laurent polynomial $(\Psi_N)_{a_1,a_2,\dots}(w_1,{\bar w}_1,\dots)$ in $w_1$. This bound depends on the integers $a_1,\dots,a_n$.
\begin{lemma}
  \label{lem:DegreeHSPsi}
  For $N\geqslant 4$, $(\Psi_N)_{a_1,a_2\cdots}(w_1,{\bar w}_1,\dots)$ is a  centred Laurent polynomial in $w_1$ of degree width at most $4(N-2)$ if \textit{(i)} $a_1=1, a_2=2$ or \textit{(ii)} $a_1\geqslant 3$, and at most $2(2N-3)$ if \textit{(iii)} $a_1 = 1, a_2\geqslant 3$ or \textit{(iv)} $a_1=2$.
  \end{lemma}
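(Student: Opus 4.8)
The plan is to isolate the $w_1$-dependence, which enters $(\Psi_N)_{a_1,\dots,a_n}(w_1,\bar w_1,\dots)$ only through $z_1=w_1$ and $z_2=\bar w_1$, and to control it with the results of \cref{sec:PropMCI}. By \cref{lem:LaurentPsi}, $(\Psi_N)_{a_1,\dots,a_n}$ is a centred Laurent polynomial of degree width $2(2n-1)$ in a variable attached to an occupied site and $4(n'-1)$ at an empty one. Since the substitution $z_1=w_1,\,z_2=\bar w_1$ turns a monomial $z_1^{k}z_2^{\ell}$ into $w_1^{k-\ell}$, the degree width in $w_1$ is at most the sum of the degree widths in $z_1$ and $z_2$. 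Counting occupied and empty sites at positions $1,2$, this additive bound equals $2(2N-3)$ in cases \textit{(iii)} and \textit{(iv)}, where it already gives the claim, while in cases \textit{(i)} and \textit{(ii)} it exceeds $4(N-2)$ by at most $4$. The proof then has two tasks: (a) prove centredness in $w_1$, and (b) recover the surplus in the two equal-spin cases.

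For (a) I would use that $w_1\mapsto\bar w_1$ interchanges $z_1$ and $z_2$, so that \cref{lem:Exchange} gives
\begin{equation*}
  \check R_{1,2}(w_1^2)\,|\Psi_N(w_1,\bar w_1,\dots)\rangle=|\Psi_N(\bar w_1,w_1,\dots)\rangle .
\end{equation*}
In the equal-spin cases \textit{(i)}, \textit{(ii)}, the operator $\check R_{1,2}(w_1^2)$ acts diagonally on sites $1,2$, so the component $g(w_1)=(\Psi_N)_{a_1,\dots,a_n}(w_1,\bar w_1,\dots)$ obeys the scalar equation $[q\bar w_1^2]\,g(\bar w_1)=[qw_1^2]\,g(w_1)$; matching leading and trailing degrees forces $g$ to be centred. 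In the opposite-spin cases \textit{(iii)}, \textit{(iv)}, the two components are coupled through the block $\frac{1}{[q\bar w_1^2]}\left(\begin{smallmatrix}[q]&[w_1^2]\\{}[w_1^2]&[q]\end{smallmatrix}\right)$; their sum and difference decouple into two scalar equations of the same type, each forcing its combination to be centred, and centredness of the individual components then follows by combining this with the marginal centredness in $z_1,z_2$ from \cref{lem:LaurentPsi}.

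For (b) I would exploit a reflection relation at site $2$, obtained by conjugating \cref{lem:Reflection} by two exchanges,
\begin{equation*}
  |\Psi_N(z_1,\bar z_2,\dots)\rangle=\check R_{1,2}(\bar z_1\bar z_2)\,K_1(\bar z_2)\,\check R_{1,2}(z_1\bar z_2)\,|\Psi_N(z_1,z_2,\dots)\rangle .
\end{equation*}
For equal spins this is again scalar, with $[qz_1z_2]$ appearing in the denominator of the rational prefactor relating $(\Psi_N)_{a_1,\dots,a_n}(z_1,\bar z_2,\dots)$ to $(\Psi_N)_{a_1,\dots,a_n}(z_1,z_2,\dots)$. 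Since the left-hand side is a Laurent polynomial and $[qz_1z_2]$ is coprime to the numerator, $[qz_1z_2]$ must divide $(\Psi_N)_{a_1,\dots,a_n}$. Writing $(\Psi_N)_{a_1,\dots,a_n}=[qz_1z_2]\,R$, the cofactor $R$ has degree widths in $z_1$ and in $z_2$ each two units below those of $(\Psi_N)_{a_1,\dots,a_n}$. The decisive observation is that at the half-specialisation $z_1z_2=w_1\bar w_1=1$, so the extracted factor collapses to the \emph{constant} $[q]$ and contributes nothing to the $w_1$-width; applying the additive bound to $R$ removes $4$ from the estimate and yields degree width at most $4(N-2)$.

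The main obstacle is step (b): setting up the site-$2$ reflection relation and, above all, justifying the divisibility by $[qz_1z_2]$ rigorously (verifying coprimality of the bracket factors and that the identity forces an honest division in the Laurent-polynomial ring), followed by the degree bookkeeping for $R$. A subsidiary subtlety is the individual centredness in the mixed cases \textit{(iii)}, \textit{(iv)}: there the functional equation only pins down the symmetric and antisymmetric combinations, and one must invoke \cref{lem:LaurentPsi} to exclude an asymmetric cancellation between them.
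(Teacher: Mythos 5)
Your proposal is correct in substance and follows essentially the same route as the paper: the same split into cases \textit{(i)}--\textit{(iv)}, the same additive degree-width bounds drawn from \cref{lem:LaurentPsi} (which settle \textit{(iii)} and \textit{(iv)} outright), and the same decisive mechanism for recovering the surplus in the equal-spin cases, namely that $[qz_1z_2]$ divides the component and collapses to the constant $[q]$ under the half-specialisation $z_2=\bar z_1$. The paper extracts the fuller products $[\beta z_1][\beta z_2][qz_2\bar z_1][qz_1z_2]$ in case \textit{(i)} and $[qz_2\bar z_1][qz_1z_2]$ in case \textit{(ii)}, quoting these factorisations as consequences of \cref{lem:LaurentPsi,lem:Exchange,lem:Reflection}, but the extra factors produce no gain after specialisation ($[\beta z_1][\beta z_2]$ and $[qz_2\bar z_1]$ keep their full width); all of the improvement comes from $[qz_1z_2]\mapsto[q]$, and the resulting bounds, $8(n-1)$ in case \textit{(i)} and $8n'-12$ in case \textit{(ii)}, coincide exactly with yours. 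Your re-derivation of the divisibility via the composite relation $\check R_{1,2}(\bar z_1\bar z_2)K_1(\bar z_2)\check R_{1,2}(z_1\bar z_2)$ is a correct use of \cref{lem:Exchange,lem:Reflection}, and the coprimality point you flag as the main obstacle is routine: as a Laurent polynomial in $z_2$, the zeros of $[qz_1z_2]$ are $z_2=\pm\bar q\bar z_1$, disjoint from the zeros $\pm q\bar z_1$, $\pm qz_1$, $\pm\beta$ of the numerator factors for $q^4\neq1$ and generic $z_1$, so division in the Laurent-polynomial ring follows by the standard UFD argument.

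The one step that does not hold as stated is your patch for centredness in the mixed-spin cases \textit{(iii)}, \textit{(iv)}: exact centredness of the symmetric and antisymmetric combinations, together with the marginal centredness in $z_1$ and $z_2$ from \cref{lem:LaurentPsi}, does \emph{not} imply exact centredness of the individual components after half-specialisation. For instance, $z_1z_2+\bar z_1\bar z_2+z_1\bar z_2$ is centred in $z_1$ and in $z_2$ separately, yet specialises to $w_1^2+2$, which is not centred. This defect is harmless, however, because the property that is actually used downstream (in \cref{prop:ZDegree}, whose proof takes linear combinations --- an operation that does not preserve exact centredness either) is containment of all degrees in the symmetric window $[-W/2,W/2]$; in that sense, marginal centredness plus your additive bound already gives cases \textit{(iii)}, \textit{(iv)} in one line, which is precisely the paper's argument, and the sum/difference decoupling can simply be dropped. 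Your scalar functional-equation argument for centredness in the equal-spin cases is fine (and in fact yields exact centredness there), though for the same reason it, too, is more than what the lemma's subsequent use requires.
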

\begin{proof}
 Note that $n,n'\geqslant 2$ for $N\geqslant 4$. We consider each of the four cases \textit{(i)}-\textit{(iv)} separately,  using \Cref{lem:LaurentPsi,lem:Exchange,lem:Reflection} and the relations 
 \begin{equation}
  \label{eqn:nnbarInequalities}
  n+n' = N,\quad n \leqslant n', \quad n' \leqslant n+1,
\end{equation}
which are an easy consequence of \eqref{eqn:Defnnbar}.
 
 \noindent \textit{Case (i):} The lemmas imply that
\begin{equation}
  (\Psi_N)_{1,2,\dots}(z_1,z_2,\dots) = [\beta z_1][\beta z_2][qz_2{\bar z}_1][qz_1z_2](\Phi_N)_{1,2,\dots}(z_1,z_2,\dots),
\end{equation}
where $(\Phi_N)_{1,2,\dots}(z_1,z_2,\dots)$ is a centred Laurent polynomial in both $z_1$ and $z_2$ of degree width $4(n-2)$. Specialising $z_1=w_1,z_2={\bar w}_1$, we find
\begin{equation}
(\Psi_N)_{1,2,\dots}(w_1,{\bar w}_1,\dots) = [\beta w_1][\beta {\bar w}_1][q{\bar w}^2_1][q](\Phi_N)_{1,2,\dots}(w_1,{\bar w}_1,\dots).
\end{equation}
The right-hand side is a Laurent polynomial in $w$ of degree width at most $8+2\times 4(n-2)=8(n-1)$. By \eqref{eqn:nnbarInequalities}, $8(n-1)$ is bounded by $4(N-2)$.
  
 \medskip
 \noindent \textit{Case (ii):} In this case, we use the lemmas to write
  \begin{equation}
    (\Psi_N)_{a_1,a_2,\dots}(z_1,z_2,\dots) = [qz_2{\bar z}_1][qz_1z_2](\tilde \Phi_N)_{a_1,a_2\dots}(z_1,z_2,\dots)
  \end{equation}
  where $(\tilde\Phi_N)_{a_1,a_2,\dots}(z_1,z_2,\dots)$ is a centred Laurent polynomial in both $z_1,z_2$ of degree width at most $4(n'-2)$. The specialisation $z_1=w_1,z_2={\bar w}_1$ leads to
  \begin{equation}
    (\Psi_N)_{a_1,a_2,\dots}(w_1,{\bar w}_1,\dots) = [q{\bar w}^2_1][q](\tilde\Phi_N)_{a_1,a_2,\dots}(w_1,{\bar w}_1,\dots).
  \end{equation}
  The right-hand side is a centred Laurent polynomial in $w$ of degree width at most $4(2n' -3)$, which is bounded by $4(N-2)$, too, thanks to \eqref{eqn:nnbarInequalities}.

\medskip
\noindent
\textit{Cases (iii) and (iv):} In these cases, \cref{lem:LaurentPsi} allows us to conclude directly that the coefficient $(\Psi_N)_{a_1,a_2,\dots}(w_1,{\bar w}_1,\dots)$ is a centred Laurent polynomial in $w_1$ of degree width at most $4(n'-1)+2(2n-1)=2(2(n+n')-3)=2(2N-3)$.
\end{proof}

\begin{proposition}
  \label{prop:ZDegree}
  For $N\geqslant 2$ and $i=1,\dots,n$, $Z_N$ is a centred Laurent polynomial in $w_i$ of degree width at most $2(2N-3)$.
\end{proposition}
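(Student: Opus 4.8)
The plan is to reduce the statement to the single variable $w_1$ and then establish two facts in turn: a degree-width bound and centredness. By \cref{prop:ZSymmetry}, $Z_N$ is symmetric in $w_1,\dots,w_n$, so it suffices to analyse the $w_1$-dependence; the claim for a general $w_i$ then follows by permuting the arguments.

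First I would prove the degree-width bound by expanding $\langle\chi(w_1)|$ over the four covectors $\langle\uparrow\uparrow|,\langle\uparrow\downarrow|,\langle\downarrow\uparrow|,\langle\downarrow\downarrow|$ acting on sites $1,2$. In the overlap \eqref{eqn:DefZ}, the spin pattern on these two sites selects exactly one of the four cases of \cref{lem:DegreeHSPsi}: $\langle\downarrow\downarrow|$ corresponds to case (i) ($a_1=1,a_2=2$), $\langle\uparrow\uparrow|$ to case (ii) ($a_1\geqslant3$), $\langle\downarrow\uparrow|$ to case (iii) ($a_1=1,a_2\geqslant3$), and $\langle\uparrow\downarrow|$ to case (iv) ($a_1=2$). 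Recalling that $\langle\chi(w_1)|$ carries the coefficient $\{q^{1/2}w_1\}/\{q^{1/2}\}$ precisely on $\langle\uparrow\uparrow|$ and $\langle\downarrow\downarrow|$, this yields a decomposition
\[
  Z_N = \frac{\{q^{1/2}w_1\}}{\{q^{1/2}\}}\,A(w_1) + B(w_1),
\]
in which $A$ gathers the cases (i),(ii) and $B$ the cases (iii),(iv), all prefactors being independent of $w_1$. By \cref{lem:DegreeHSPsi}, every contribution to $A$ is a centred Laurent polynomial of degree width at most $4(N-2)$, so that $A$ has top degree at most $2(N-2)$ and bottom degree at least $-2(N-2)$; likewise every contribution to $B$ is centred of degree width at most $2(2N-3)$, so that $B$ has top degree at most $2N-3$ and bottom degree at least $-(2N-3)$. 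Since $\{q^{1/2}w_1\}/\{q^{1/2}\}$ has top degree $1$ and bottom degree $-1$, the summand $\frac{\{q^{1/2}w_1\}}{\{q^{1/2}\}}A(w_1)$ has top degree at most $2N-3$ and bottom degree at least $-(2N-3)$. Consequently $Z_N$ has top degree at most $2N-3$ and bottom degree at least $-(2N-3)$, so its degree width in $w_1$ is at most $2(2N-3)$.

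The delicate point is centredness, since a sum of centred Laurent polynomials need not be centred and the bound above does not by itself give $d+d'=0$. Here I would exploit a reflection symmetry of $Z_N$ in $w_1$. With $z_1=w_1$ and $z_2=\bar w_1$, the exchange relation \eqref{eqn:Exchange} at $i=1$ reads $|\Psi_N(\bar w_1,w_1,\dots)\rangle=\check R_{1,2}(w_1^2)|\Psi_N(w_1,\bar w_1,\dots)\rangle$, the remaining spectral parameters (including the inert $z_N=1$ for odd $N$) being untouched. A direct computation in $\mathrm{End}(\mathbb C^2\otimes\mathbb C^2)$ then verifies the covector identity
\[
  \langle\chi(\bar w_1)|\check R(w_1^2) = \frac{[q]+[w_1^2]}{[q\bar w_1^2]}\,\langle\chi(w_1)|.
\]
Inserting both relations into \eqref{eqn:DefZ} and using that $\check R_{1,2}(w_1^2)$ acts only on the two sites carried by $\langle\chi(w_1)|$ (so that $\langle\varphi|$ and the factors $\langle\chi(w_j)|$ with $j\geqslant2$ pass through unchanged), I obtain, after clearing the denominator, the functional relation
\[
  [q\bar w_1^2]\,Z_N(\bar w_1,w_2,\dots,w_n) = ([q]+[w_1^2])\,Z_N(w_1,w_2,\dots,w_n).
\]

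Finally I would read off centredness from this relation by comparing top-degree terms in $w_1$. Regarding $Z_N$ as a Laurent polynomial in $w_1$ over the field of rational functions in the remaining variables (and discarding the trivial case $Z_N\equiv0$, which is centred by convention), let $d$ and $d'$ be its top and bottom degrees. Since this ring is an integral domain and both $[q\bar w_1^2]$ and $[q]+[w_1^2]$ have top degree $2$, the left-hand side has top degree exactly $2-d'$ while the right-hand side has top degree exactly $2+d$. Equating the two forces $d+d'=0$, i.e. $Z_N$ is centred in $w_1$. Together with the degree-width bound this establishes the proposition. I expect the covector reflection identity to be the main obstacle: it must be discovered and verified by hand, and one has to check that the argument is insensitive to the parity of $N$, the odd case merely dragging along the frozen tensor factor $\langle\varphi|$ and the spectator parameter $z_N=1$.
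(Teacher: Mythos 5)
Your degree-width argument is essentially the paper's own: the same four-case split of the overlap according to the spin pattern on sites $1,2$ (your dictionary $\langle\downarrow\downarrow| \leftrightarrow$ (i), $\langle\uparrow\uparrow| \leftrightarrow$ (ii), $\langle\downarrow\uparrow| \leftrightarrow$ (iii), $\langle\uparrow\downarrow| \leftrightarrow$ (iv) is correct), with the coefficient $\{q^{1/2}w_1\}/\{q^{1/2}\}$ attached precisely to cases (i) and (ii). Where you depart from the paper is centredness, and your instinct is sound: under the paper's definition ($d+d'=0$) a sum of centred Laurent polynomials need not be centred, and the paper covers this point only with the sentence ``The same holds for the linear combination.'' Your repair is valid, and it is in substance the paper's own \cref{prop:ZSymmetry2}, deployed early: the relation $[q\bar w_1^2]\,Z_N(\bar w_1,\dots)=([q]+[w_1^2])\,Z_N(w_1,\dots)$ is exactly the inversion relation \eqref{eqn:ZInversion}, since $([q]+[w_1^2])/[q\bar w_1^2]=[q^{-1/2}\bar w_1]/[q^{-1/2}w_1]$, and your derivation (exchange relation plus the covector identity for $\langle\chi(\bar w_1)|\check R(w_1^2)$) is the same one the paper uses to prove that proposition. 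The leading-degree comparison over the field of rational functions in the remaining variables then correctly forces $d+d'=0$. So your route is the paper's width bound combined with an anticipated use of \cref{prop:ZSymmetry2}; what it buys is a proof in which centredness is actually justified rather than asserted.

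There is, however, one genuine gap: the cases $N=2,3$. The proposition is claimed for all $N\geqslant 2$, but \cref{lem:DegreeHSPsi}, on which your entire width bound rests, is stated only for $N\geqslant 4$, and its four cases presuppose $n\geqslant 2$, i.e.\ the existence of $a_2$. For $N=2,3$ the lemma cannot be invoked, and the cruder \cref{lem:LaurentPsi} does not suffice: for instance, for $(\Psi_3)_3(w_1,\bar w_1,1)$ it only gives width at most $8$, which together with the coefficient $\{q^{1/2}w_1\}$ yields $10$, worse than the claimed bound $2(2\cdot 3-3)=6$. The paper settles these two cases by direct inspection of the closed formulas \eqref{eqn:Z2} and \eqref{eqn:Z3}, and your proof needs the same (easy) supplement; note that your centredness argument, by contrast, is valid for all $N\geqslant 2$.
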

\begin{proof}
  For $N=2,3$, the proposition follows from \eqref{eqn:Z2} and \eqref{eqn:Z3}. For $N\geqslant 4$, we infer from \cref{prop:ZSymmetry} that it suffices to consider $i=1$.  By \eqref{eqn:DefZ}, $Z_N$ is a linear combination of the components $(\Psi_N)_{a_1a_2\dots}(w_1,{\bar w}_1,\dots)$. We examine it in each of the four cases \textit{(i)}-\textit{(iv)} of \cref{lem:DegreeHSPsi}. For \textit{(i)} $a_1=1, a_2=2$ and \textit{(ii)} $a_1\geqslant 3$, the component's coefficient in the linear combination is proportional to  $\{q^{1/2} w_1\}$, which is a centred Laurent polynomial of degree width $2$ in $w_1$. In the cases \textit{(iii)} $a_1 = 1, a_2\geqslant 3$ or \textit{(iv)} $a_1=2$, the component's coefficient in the linear combination is independent of $w_1$. For each of the four cases, \cref{lem:DegreeHSPsi} allows us to conclude that the product of the coefficient and the component is a centred Laurent polynomial in $w_1$ of degree width at most $2(2N-3)$. The same holds for the linear combination.
\end{proof}

\begin{proposition}
 \label{prop:ZSymmetry2}
 For $N \geqslant 2$ and $i=1,\dots,n$, we have
  \begin{align}
    \label{eqn:ZReflection}
    Z_N(\dots,-w_i,\dots) &= - Z_N(\dots,w_i,\dots),\\
    \label{eqn:ZInversion}
    [q^{-1/2}w_i]Z_N(\dots,{\bar w}_i,\dots) &=[q^{-1/2}{\bar w}_i]Z_N(\dots,w_i,\dots).
  \end{align}
\end{proposition}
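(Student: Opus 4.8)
The plan is to reduce both identities to the case $i=1$ using \Cref{prop:ZSymmetry}, and then to exploit, respectively, the exchange relation for \eqref{eqn:ZInversion} and the parity of the contour integrals \eqref{eqn:CIBigPsi} in their arguments for \eqref{eqn:ZReflection}.

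For the inversion relation \eqref{eqn:ZInversion}, I would start from the observation that replacing $w_1$ by $\bar w_1$ in the half-specialised arguments merely swaps the first two spectral parameters of $\ket{\Psi_N}$, from $(w_1,\bar w_1)$ to $(\bar w_1,w_1)$. By the exchange relation (\Cref{lem:Exchange}) with $z_1=w_1$, $z_2=\bar w_1$, this swap is implemented by $\check R_{1,2}(w_1^2)$, so that $\ket{\Psi_N(\bar w_1,w_1,\dots)}=\check R_{1,2}(w_1^2)\ket{\Psi_N(w_1,\bar w_1,\dots)}$. The key step is then purely local: acting on the first two tensor factors, $\bra{\chi(\bar w_1)}\check R(w_1^2)$ is proportional to $\bra{\chi(w_1)}$. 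A direct evaluation on the standard basis shows the proportionality constant to be $[q^{-1/2}\bar w_1]/[q^{-1/2}w_1]$; the only non-trivial input is the identity $c(w_1)([q]+[w_1^2])=c(\bar w_1)[qw_1^2]$ for the coefficient $c(w)=\{q^{1/2}w\}/\{q^{1/2}\}$, which one verifies by expanding both sides. Substituting into \eqref{eqn:DefZ} gives $Z_N(\bar w_1,\dots)=\bigl([q^{-1/2}\bar w_1]/[q^{-1/2}w_1]\bigr)Z_N(w_1,\dots)$, which rearranges to \eqref{eqn:ZInversion}. The odd-$N$ case is identical, since the trailing $\bra{\varphi}$ and the argument $z_N=1$ do not involve the first two tensor factors.

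For the reflection relation \eqref{eqn:ZReflection}, the crucial ingredient is a parity statement for the components: $(\Psi_N)_{a_1,\dots,a_n}(\dots,z_i,\dots)$ is an odd function of $z_i$ when $i\in\{a_1,\dots,a_n\}$ and an even function of $z_i$ otherwise. I would prove this from \eqref{eqn:CIBigPsi} by tracking the sign produced by $z_i\mapsto -z_i$: the prefactor $\prod_{i'<j}[qz_j\bar z_{i'}][q^2z_{i'}z_j]$ contains exactly $2(N-1)$ factors linear in $z_i^{\pm1}$, hence is even, while the number of denominator factors of the integrand depending on $z_i$ equals $2n+1$ when $i\in\{a_1,\dots,a_n\}$ and $2n$ otherwise, so the integrand changes sign precisely in the former case. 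Specialising $z_{2i-1}=w_i$, $z_{2i}=\bar w_i$ and flipping $w_1\mapsto -w_1$ flips both $z_1$ and $z_2$; combining the resulting component sign with the parity of the accompanying coefficient from $\bra{\chi(w_1)}$ — which, exactly as in the proof of \Cref{prop:ZDegree}, is proportional to the odd function $\{q^{1/2}w_1\}$ in cases \textit{(i)}--\textit{(ii)} of \Cref{lem:DegreeHSPsi} and is independent of $w_1$ in cases \textit{(iii)}--\textit{(iv)} — one checks in each of the four cases that the product is odd in $w_1$. Hence every term of $Z_N$ is odd, which is \eqref{eqn:ZReflection}.

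The step I expect to be most delicate is the contour bookkeeping in the parity argument: under $z_i\mapsto -z_i$ the encircled pole at $w_k=z_i$ migrates to $w_k=-z_i$, so the clean sign count above only becomes a statement about the integral once one checks that the contour can be restored without crossing other singularities. I would handle this by invoking the Laurent-polynomial property (\Cref{lem:LaurentPsi}): it suffices to verify the parity on an open set of configurations $z_1,\dots,z_N$ for which a centred circular contour separates the poles as required, and the identity then extends by analyticity. As a consistency check, summing the individual parities reproduces the global reflection $(\Psi_N)_{a_1,\dots,a_n}(-z_1,\dots,-z_N)=(-1)^n(\Psi_N)_{a_1,\dots,a_n}(z_1,\dots,z_N)$, which follows cleanly from \eqref{eqn:CIBigPsi} by the combined substitution $z_j\mapsto -z_j$, $w_k\mapsto -w_k$ that leaves both the contour and the measure invariant.
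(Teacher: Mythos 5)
Your treatment of \eqref{eqn:ZInversion} is correct and coincides with the paper's own argument: the swap $(w_1,\bar w_1)\mapsto(\bar w_1,w_1)$ is implemented by $\check R_{1,2}(w_1^2)$ via \cref{lem:Exchange}, and the local identity $\bra{\chi(\bar z)}\check R(z^2)=\bigl([q^{-1/2}\bar z]/[q^{-1/2}z]\bigr)\bra{\chi(z)}$ does the rest; your verification of the proportionality constant checks out. Your plan for \eqref{eqn:ZReflection} also has the same skeleton as the paper's proof: the componentwise parity you formulate is exactly the identity $|\Psi_N(\dots,-z_i,\dots)\rangle=\sigma_i^z|\Psi_N(\dots,z_i,\dots)\rangle$ of \eqref{eqn:PsiReflection}, and combining it with the oddness of $\{q^{1/2}w_1\}$ in the four cases gives the claim. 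The difference is that the paper does not reprove this parity: it quotes it from \cite[Propositions 3.11 and 3.12]{hagendorf:21}, whereas you attempt to derive it directly from \eqref{eqn:CIBigPsi}.

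That derivation is where the gap lies. Your sign count of the integrand is right (the prefactor is even in $z_i$, and the denominator contributes $(-1)^{2n}$ or $(-1)^{2n+1}$ according to whether $i\notin\{a_1,\dots,a_n\}$ or $i\in\{a_1,\dots,a_n\}$), but the contour repair you propose cannot work. The factor $[z_j\bar w_k]$ vanishes at \emph{both} $w_k=z_j$ and $w_k=-z_j$, and these two points have equal modulus; hence there is no configuration of $z_1,\dots,z_N$ whatsoever for which a circle centred at the origin encloses $z_1,\dots,z_N$ while excluding $-z_1,\dots,-z_N$, as the prescription of \eqref{eqn:CIBigPsi} demands. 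The open set on which you want to verify the identity before invoking analyticity is therefore empty, and the poles at $w_k=-z_j$ cannot be ignored: already for $N=2$ the residue there is generically non-zero (it in fact equals the residue at $w_k=+z_j$, so including it would double the answer). The ingredient that salvages your strategy is the observation that the integrand $\Xi_{a_1,\dots,a_n}$ is an even function of each \emph{individual} $w_k$ (you only use the simultaneous flip of all $w_k$ in your consistency check): each single flip $w_k\mapsto-w_k$ changes the sign of the numerator once, through $[\beta w_k]$, and of the denominator an odd number of times, $2N+1$. Evenness of $\Xi$ makes $\Xi\,\mathrm{d}w_k/w_k$ an odd differential, whose residues at $w_k=z_j$ and $w_k=-z_j$ coincide; consequently, replacing the prescribed contours by centred circles enclosing all of $\pm z_1,\dots,\pm z_N$ and no other poles (which is possible on an open set of configurations when $|q|\neq 1$) multiplies the integral by exactly $2^n$. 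For that symmetric contour the substitution $z_i\mapsto-z_i$ preserves the contour prescription, your pointwise sign count becomes a statement about the integral, and \cref{lem:LaurentPsi} extends the identity to all parameters. Without this step (or without simply citing \eqref{eqn:PsiReflection} from \cite{hagendorf:21}, as the paper does), the proof of \eqref{eqn:ZReflection} is incomplete.
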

\begin{proof}
   The relation \eqref{eqn:ZReflection} follows from the identity  $\bra{\chi(-z)}= - \bra{\chi(z)} (\sigma^z\otimes \sigma^z)$, and from
  \begin{equation}
  \label{eqn:PsiReflection}
    |\Psi_N(\dots,-z_i,\dots)\rangle=\sigma_i^z|\Psi_N(\dots,z_i,\dots)\rangle,
  \end{equation}
  which is an immediate consequence of \cite[Propositions 3.11 and 3.12]{hagendorf:21}. Moreover, \eqref{eqn:ZInversion} follows from \cref{lem:Exchange} and the relation
  \begin{equation}
    \langle \chi({\bar z})|\check R(z^2) = \frac{[q^{-1/2}{\bar z}]}{[q^{-1/2}z]}\langle \chi(z)|,
  \end{equation}
  which results from a simple calculation. 
\end{proof}
This proposition allows us to identify two zeros of the generalised sum of components:
\begin{corollary}
  \label{prop:ZSpecialZeros}
  For $N\geqslant 2$ and each $i=1,\dots,n$, we have $Z_{N}(\dots,w_i = \pm q^{1/2},\dots) = 0$.
\end{corollary}
For odd $N$, we identify two further zeros:
 \begin{proposition}
  \label{prop:ZOddExtraZero}
  For odd $N\geqslant 3$ and each $i=1,\dots,n$, we have $Z_{N}(\dots,w_i = \pm {\bar q},\dots) = 0$.
\end{proposition}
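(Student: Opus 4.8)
The plan is to exploit the reduction relation of \cref{lem:PsiReduction} at the right boundary of the half-specialised chain, after two preliminary simplifications. First, since $Z_N$ is symmetric in $w_1,\dots,w_n$ by \cref{prop:ZSymmetry}, it suffices to treat $i=n$. Second, since $Z_N(\dots,-w_i,\dots)=-Z_N(\dots,w_i,\dots)$ by \eqref{eqn:ZReflection}, the vanishing at $w_n=-\bar q$ follows at once from the vanishing at $w_n=\bar q$. I therefore aim to prove $Z_N(w_1,\dots,w_{n-1},\bar q)=0$.

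Two facts make this work. On the one hand, the covector factorises at $w=\bar q$: since $\{q^{1/2}\bar q\}=\{q^{-1/2}\}=\{q^{1/2}\}$, the prefactor in $\bra{\chi(w)}$ equals $1$ at $w=\bar q$, so that
\begin{equation}
  \bra{\chi(\bar q)}=\bra{\uu\uu}+\bra{\uu\dd}+\bra{\dd\uu}+\bra{\dd\dd}=\bra{\varphi}\otimes\bra{\varphi}.
\end{equation}
On the other hand, in the odd half-specialisation the arguments of $\ket{\Psi_N}$ are $(w_1,\bar w_1,\dots,w_n,\bar w_n,1)$, so setting $w_n=\bar q$ forces $z_{2n}=\bar w_n=q$ and $z_{2n+1}=1$, which satisfy the collision condition $z_{2n+1}=\bar q\,z_{2n}$ of \cref{lem:PsiReduction} with $i=2n$.

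With these in hand the computation is immediate. Applying \cref{lem:PsiReduction} with $i=2n$ writes $\ket{\Psi_N(\dots,q,1)}$ as a scalar Laurent-polynomial prefactor times $\Xi_N^{2n}\ket{\Psi_{2n-1}(w_1,\bar w_1,\dots,w_{n-1},\bar w_{n-1},\bar q)}$, that is, $\ket{\Psi_{2n-1}}$ on positions $1,\dots,2n-1$ tensored with the singlet $\ket{\zeta}$ on positions $2n,2n+1$. The covector of the odd-case $Z_N$ is $\bigotimes_{i=1}^{n-1}\bra{\chi(w_i)}\otimes\bra{\chi(\bar q)}\otimes\bra{\varphi}$, and the factorisation $\bra{\chi(\bar q)}=\bra{\varphi}\otimes\bra{\varphi}$ places a $\bra{\varphi}$ at each of the positions $2n$ and $2n+1$ (coming, respectively, from the second tensor factor of $\bra{\chi(\bar q)}$ and from the trailing $\bra{\varphi}$). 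The overlap then splits as a product of the overlap of $\bigotimes_{i=1}^{n-1}\bra{\chi(w_i)}\otimes\bra{\varphi}$ with $\ket{\Psi_{2n-1}}$ and the scalar $(\bra{\varphi}\otimes\bra{\varphi})\ket{\zeta}=1-1=0$, so the whole expression vanishes.

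The only genuinely delicate point is the bookkeeping: one must verify that the singlet produced by the reduction lands exactly on the two positions $2n,2n+1$ carried by the pair $\bra{\varphi}\otimes\bra{\varphi}$, so that the overlap factorises cleanly. I expect no analytic subtlety, since the reduction prefactor is a Laurent polynomial and enters only as a finite scalar. It is worth noting that this argument uses the trailing argument $z_N=1$ of the odd half-specialisation in an essential way: for even $N$ no such boundary collision is available at $w_i=\pm\bar q$, which is consistent with these extra zeros being specific to odd $N$. As a sanity check, the explicit formula \eqref{eqn:Z3} for $N=3$ contains the factor $[qw_1]$, which indeed vanishes at $w_1=\pm\bar q$.
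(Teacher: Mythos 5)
Your proof is correct and follows essentially the same route as the paper's: symmetry to reduce to $i=n$, the parity relation \eqref{eqn:ZReflection} to reduce to $w_n=\bar q$, the factorisation $\bra{\chi(\bar q)}=\bra{\varphi}\otimes\bra{\varphi}$, and \cref{lem:PsiReduction} applied to the collision $z_{2n+1}=\bar q\,z_{2n}$, so that the overlap picks up the vanishing factor $(\bra{\varphi}\otimes\bra{\varphi})\ket{\zeta}=0$. Your position bookkeeping (the singlet landing on sites $2n,2n+1$) is also exactly what the paper's argument implicitly relies on, just made explicit.
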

\begin{proof}
  By \cref{prop:ZSymmetry}, it is sufficient to consider $i=n$. Moreover, by \cref{prop:ZSymmetry2}, we may focus on $w_n={\bar q}$. For this specialisation, we obtain
  \begin{equation}
    Z_{2n+1}(\dots,w_n={\bar q}) = (\cdots \otimes \langle \chi({\bar q})|\otimes \langle \varphi|)|\Psi_{2n+1}(\dots,{\bar q},q,1)\rangle.
  \end{equation}
  Using the factorisation property $\langle \chi({\bar q})| = \langle \varphi|\otimes \langle \varphi|$, as well as \cref{lem:PsiReduction}, we find
  \begin{equation}
   Z_{2n+1}(\dots,w_n={\bar q}) \propto (\cdots \otimes \langle \varphi|)|\Psi_{2n-1}(\dots,{\bar q})\rangle(\langle \varphi|\otimes \langle \varphi|)\ket{\zeta}.
  \end{equation}
  The factor $(\langle \varphi|\otimes \langle \varphi|)\ket{\zeta}$ on the right-hand side vanishes.
\end{proof}

By \cref{prop:ZSymmetry,prop:ZSymmetry2,prop:ZSpecialZeros,prop:ZOddExtraZero}, the generalised sum of components is divisible by products of elementary Laurent polynomials. We remove these products by defining, for each $N\geqslant 0$, the function
 \begin{equation}
   \label{eqn:DefY}
    Y_{N}(w_1,\dots,w_n) = \frac{Z_{N}(w_1,\dots,w_n)}{\prod_{i=1}^n [q^{-1/2}w_i]}\times 
    \begin{cases}
     1, & \text{even }N,\\
     \frac{1}{\prod_{i=1}^n [q w_i][q{\bar w}_i]}, & \text{odd }N.
    \end{cases}
  \end{equation}
  Working with $Y_N$, instead of $Z_N$, simplifies the presentation of our results. However, some intermediate expressions still involve $Z_N$. 
  
A Laurent polynomial in $w_1,\dots,w_n$ is called BC$_n$-symmetric if it is a symmetric function in $w_1,\dots,w_n$, and remains invariant under the inversion $w_1$ to $w_1^{-1}$.  Moreover, we call it even if, in addition, it is even with respect to $w_1$. The following statement is an immediate consequence of \cref{prop:ZSymmetry,prop:ZDegree,prop:ZSymmetry2}, and the definition \eqref{eqn:DefY}:
\begin{proposition}
  \label{prop:YBC}
  For each $N\geqslant 2$, $Y_N$ is an even $BC_n$-symmetric Laurent polynomial in $w_1,\dots,w_n$. For each $i=1,\dots,n$, its degree width as a Laurent polynomial in $w_i$ is at most $8(n-1)$.
\end{proposition}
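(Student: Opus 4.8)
The plan is to verify each of the three asserted properties of $Y_N$ by tracing how the corresponding properties of $Z_N$ pass through the definition \eqref{eqn:DefY}. I note first that the prefactor $\prod_{i=1}^n [q^{-1/2}w_i]$, and (for odd $N$) the additional factor $\prod_{i=1}^n [qw_i][q{\bar w}_i]$, are each symmetric in $w_1,\dots,w_n$. Hence dividing $Z_N$ by these factors does not destroy the symmetry established in \cref{prop:ZSymmetry}, and $Y_N$ is a symmetric function of $w_1,\dots,w_n$.

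Next I would establish that $Y_N$ is a genuine (rather than merely rational) even $BC_n$-symmetric Laurent polynomial. Since all the relevant properties are symmetric in the $w_i$, it suffices to work with a single variable, say $w_1$, by \cref{prop:ZSymmetry}. By \cref{prop:ZDegree}, $Z_N$ is a centred Laurent polynomial in $w_1$, so the division by the Laurent-polynomial prefactor (and the extra factor for odd $N$) yields at worst a Laurent polynomial provided these factors actually divide $Z_N$. The requisite divisibility is exactly what \cref{prop:ZSpecialZeros} (the zeros at $w_1=\pm q^{1/2}$, giving the factor $[q^{-1/2}w_1]$, which vanishes precisely there) and, for odd $N$, \cref{prop:ZOddExtraZero} (the zeros at $w_1=\pm{\bar q}$, giving $[qw_1][q{\bar w}_1]$) provide. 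I would then check the parity and inversion behaviour: the relation \eqref{eqn:ZReflection} shows $Z_N$ is odd in $w_1$, while each factor $[q^{-1/2}w_1]$, $[qw_1]$, $[q{\bar w}_1]$ is itself odd or even in an appropriate way, so that the quotient $Y_N$ is even in $w_1$. Similarly, the inversion relation \eqref{eqn:ZInversion} states $[q^{-1/2}w_1]Z_N(\dots,{\bar w}_1,\dots)=[q^{-1/2}{\bar w}_1]Z_N(\dots,w_1,\dots)$, which I would rewrite to say precisely that $Z_N/[q^{-1/2}w_1]$ is invariant under $w_1\mapsto{\bar w}_1$; since the remaining odd-$N$ factor $[qw_1][q{\bar w}_1]$ is manifestly inversion-invariant, $Y_N$ is invariant under $w_1\mapsto w_1^{-1}$, i.e. $BC_n$-symmetric. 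Being an even $BC_n$-symmetric Laurent polynomial then also guarantees it is centred in each $w_i$.

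Finally I would bound the degree width. Since $Y_N$ is symmetric, I examine the $w_1$-degree width only. By \cref{prop:ZDegree}, $Z_N$ has degree width at most $2(2N-3)$ in $w_1$. In the even case I subtract the degree width $2$ contributed by $[q^{-1/2}w_1]$, leaving at most $2(2N-3)-2=4N-8=8(n-1)$, where the final equality uses $N=n+n'$ together with $n'=n$ for even $N$ from \eqref{eqn:nnbarInequalities}. In the odd case I subtract the degree widths $2$, $2$, $2$ coming from $[q^{-1/2}w_1]$, $[qw_1]$ and $[q{\bar w}_1]$, obtaining at most $2(2N-3)-6=4N-12$; substituting $N=2n+1$ (so $n'=n+1$) gives $4(2n+1)-12=8n-8=8(n-1)$, matching the stated bound. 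The main subtlety, rather than a deep obstacle, is bookkeeping: one must confirm that the three properties of \cref{prop:ZSpecialZeros,prop:ZOddExtraZero} supply exactly the zeros needed for clean divisibility (with correct multiplicity), and that the parity and inversion relations combine with the explicit factors to yield evenness and inversion-invariance simultaneously rather than up to an unwanted sign. Once these elementary checks are carried out, the proposition follows directly.
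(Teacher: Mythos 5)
Your strategy is the same as the paper's: there, \cref{prop:YBC} is stated as an immediate consequence of \cref{prop:ZSymmetry,prop:ZDegree,prop:ZSymmetry2}, the zeros of \cref{prop:ZSpecialZeros,prop:ZOddExtraZero}, and the definition \eqref{eqn:DefY}. Your symmetry argument, the parity computation (odd over a product of odd factors is even), the inversion computation, and the degree-width arithmetic ($2(2N-3)-2=8(n-1)$ for even $N$, $2(2N-3)-6=8(n-1)$ for odd $N$, using additivity of degree widths of Laurent polynomials under multiplication) are all correct.

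There is, however, one genuine gap, in the divisibility step for odd $N$. You claim that the zeros at $w_1=\pm\bar q$ supplied by \cref{prop:ZOddExtraZero} ``give'' the factor $[qw_1][q\bar w_1]$. They cannot: two zeros cannot account for a factor of degree width four. Up to unit monomials, $[qw_1]$ vanishes precisely at $w_1=\pm\bar q$, whereas $[q\bar w_1]$ vanishes precisely at $w_1=\pm q$, and \cref{prop:ZOddExtraZero} says nothing about the points $w_1=\pm q$. The missing zeros come from the inversion relation \eqref{eqn:ZInversion}, which you invoke only for the $BC_n$-invariance of the final quotient but which is also needed for the divisibility itself: setting $w_1=\pm q$ in \eqref{eqn:ZInversion} and using $Z_N(\dots,\pm\bar q,\dots)=0$ yields $[q^{-3/2}]\,Z_N(\dots,\pm q,\dots)=0$, hence $Z_N$ vanishes at $w_1=\pm q$ as well (for $q^3\neq 1$; the finitely many excluded values of $q$ are recovered because the coefficients depend rationally on $q^{1/2}$). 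Equivalently and more cleanly: by \eqref{eqn:ZInversion} the intermediate quotient $Z_N/\prod_{i=1}^n[q^{-1/2}w_i]$ is invariant under $w_1\mapsto\bar w_1$; it vanishes at $w_1=\pm\bar q$, hence also at $w_1=\pm q$, and is therefore divisible by $[qw_1][q\bar w_1]$. With this one extra line your proof is complete and coincides with the paper's.
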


\subsubsection{Reduction}

We now investigate several consequences of the reduction relation given in \cref{lem:PsiReduction}. By analogy, we call the resulting relations for $Y_N$ reduction relations, too. 
\begin{proposition}
  \label{prop:ZBarReduction2}
  For $N\geqslant 2$ and $i=1,\dots,n$, we have
  \begin{multline}
    Y_{N}(\dots,w_i=\ii q^{1/2},\dots)\\ = (-1)^{n+1}\{q^{1/2}\beta\}c_N
\Bigg(\prod_{j\neq i}^n\{q^{3/2}w_i\}^2\{q^{3/2}{\bar w}_i\}^2 \Bigg) Y_{N-2}(\dots,\widehat{w_i},\dots),
    \label{eqn:ZBarReduction2}
  \end{multline}
  where $c_N=1$ for even $N$, and $c_N = \{q^{3/2}\}/\{q^{1/2}\}$ for odd $N$.
\end{proposition}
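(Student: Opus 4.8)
The plan is to bring the half-specialised vector $\ket{\Psi_N(w_1,\bar w_1,\dots)}$ into a form to which the reduction relation of \cref{lem:PsiReduction} applies, and then to exploit the interplay between the covector $\bra{\chi(w_i)}$ and the inserted state $\ket\zeta$. By \cref{prop:ZSymmetry}, both $Z_N$ and $Y_N$ are symmetric in $w_1,\dots,w_n$, so it suffices to treat $i=n$; write $\ell=2n-1$ for the position of $w_n$ among the arguments $z_1,\dots,z_N$ of $\ket{\Psi_N}$. The naive specialisation $z_{2n-1}=w_n=\ii q^{1/2}$, $z_{2n}=\bar w_n=-\ii q^{-1/2}$ does \emph{not} put the pair $(z_{2n-1},z_{2n})$ into the form $z_{2n}=\bar q z_{2n-1}$ demanded by \cref{lem:PsiReduction}, so a preliminary manoeuvre is needed.

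The key step is to flip the sign of the second argument using \eqref{eqn:PsiReflection}: since $-(-\ii q^{-1/2})=\ii q^{-1/2}=\bar q\,(\ii q^{1/2})$, one has
\begin{equation}
 \ket{\Psi_N(\dots, z_{2n-1}=\ii q^{1/2},\, z_{2n}=-\ii q^{-1/2},\dots)}=\sigma_{2n}^z\ket{\Psi_N(\dots,\ii q^{1/2},\,\ii q^{-1/2},\dots)},
\end{equation}
where the trailing $\dots$ includes the argument $z_N=1$ for odd $N$. The block $(\ii q^{1/2},\ii q^{-1/2})$ now satisfies the reduction condition, so \cref{lem:PsiReduction} at index $\ell=2n-1$ rewrites the right-hand side as an explicit prefactor times $\Xi^{2n-1}_N\ket{\Psi_{N-2}(\dots,\widehat{w_n})}$, with $\ket\zeta$ inserted at positions $(2n-1,2n)$. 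The point of the sign flip is a sharp dichotomy: the direct pairing gives $\bra{\chi(w_n)}\zeta\rangle=0$, which is precisely the mechanism behind \cref{prop:ZSpecialZeros}, whereas the extra $\sigma^z$ on the second leg turns this into
\begin{equation}
 \bra{\chi(w)}\sigma_2^z\ket{\zeta}=-2,
\end{equation}
independently of $w$. This nonzero constant is what drives the whole reduction. The remaining covectors $\bra{\chi(w_j)}$ with $j\neq n$ (together with $\bra\varphi$ for odd $N$) then pair with $\ket{\Psi_{N-2}(\dots,\widehat{w_n})}$ to reproduce exactly $Z_{N-2}(\dots,\widehat{w_n})$.

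It remains to collect the prefactors and pass from $Z_N$ to $Y_N$ via \eqref{eqn:DefY}. Every elementary Laurent polynomial appearing in \cref{lem:PsiReduction} evaluates at $z_\ell=\ii q^{1/2}$ to an $\ii$-multiple of a symmetric bracket: one finds $[\beta z_\ell]=\ii\{q^{1/2}\beta\}$, each surviving block $j\neq n$ contributes $\{q^{3/2}w_j\}^2\{q^{3/2}\bar w_j\}^2$, and, for odd $N$, the trailing argument $z_N=1$ contributes an extra $\{q^{3/2}\}^2$. The sign $(-1)^{n+\ell+1}$ equals $(-1)^n$, which combines with the $\ii$ from $[\beta z_\ell]$ and the factor $-2$ to give the overall $(-1)^{n+1}$. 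Finally, the normalising denominators of \eqref{eqn:DefY} are handled using $[q^{-1/2}w_n]=2\ii$ at $w_n=\ii q^{1/2}$ (cancelling the $2\ii$ just produced) and, for odd $N$, $[qw_n][q\bar w_n]=\{q^{3/2}\}\{q^{1/2}\}$, which together with the trailing $\{q^{3/2}\}^2$ yields precisely $c_N=\{q^{3/2}\}/\{q^{1/2}\}$. Assembling these factors gives \eqref{eqn:ZBarReduction2}; the base cases $N=2,3$ (which reduce to $Y_0=Y_1=1$) follow directly from \eqref{eqn:Z2} and \eqref{eqn:Z3}, the former using the $N=2$ clause of \cref{lem:PsiReduction}.

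The conceptual obstacle is spotting the sign-flip manoeuvre that converts the half-specialised block into reduction form while simultaneously replacing the vanishing pairing $\bra\chi\zeta\rangle=0$ by the nonzero $\bra\chi\sigma_2^z\ket\zeta=-2$; once this is identified the remaining difficulty is purely technical, namely the careful bookkeeping of the many factors of $\ii$ and of the odd-$N$ trailing-argument contribution needed to land on the exact constant $c_N$ and the precise sign $(-1)^{n+1}$.
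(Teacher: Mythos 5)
Your proof is correct and follows essentially the same route as the paper's: both reduce to a single index by the $BC_n$-symmetry, flip the sign of the half-specialised argument via \eqref{eqn:PsiReflection} so that the pair satisfies the hypothesis of \cref{lem:PsiReduction}, evaluate the key overlap $\bra{\chi(w)}(\II\otimes\sigma^z)\ket{\zeta}=-2$, and collect the resulting factors through \eqref{eqn:DefY} to land on $c_N$ and the sign $(-1)^{n+1}$. The only differences — you work at $i=n$ rather than $i=1$, and keep the $\sigma^z$ on the ket side instead of moving it to the covector — are immaterial.
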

\begin{proof}
  By \cref{prop:YBC}, it is sufficient to establish the relation for $i=1$. We first consider $Z_N$. Upon specialising $w_1=\ii q^{1/2}$, we find
  \begin{align}
    Z_N(\ii q^{1/2},w_2,\dots,w_n) 
    &= \left(\langle \chi(\ii q^{1/2})|\otimes \cdots\right)|\Psi_N(\ii q^{1/2},-\ii q^{-1/2},\dots)\rangle\\
    &= \left(\langle \chi(\ii q^{1/2})|
    	(\II \otimes \sigma^z)
    	\otimes \cdots\right)|\Psi_N(\ii q^{1/2},\ii q^{-1/2},\dots)\rangle.
  \end{align}
  Here and in the following, $\II$ denotes the identity operator on $\mathbb C^2$ (i.e. the $2\times 2$ identity matrix). Moreover, the second line of this equality follows from the first through an application of \eqref{eqn:PsiReflection}. Next, we apply \cref{lem:PsiReduction} and find
  \begin{multline}
     Z_N(\ii q^{1/2},w_2,\dots,w_n) 
= (-1)^n \ii \left(\langle \chi(\ii q^{1/2})|
(\II \otimes \sigma^z)
\ket{\zeta}\right)\{q^{1/2}\beta\} \\
     \times \prod_{k=2}^n \{q^{3/2}w_k\}^2\{q^{3/2}{\bar w}_k\}^2Z_{N-2}(w_2,\dots,w_n)\times 
     \begin{cases}
      1, & \text{even }N,\\
      \{q^{3/2}\}^2, & \text{odd }N.
     \end{cases}
  \end{multline}
  The overlap on the first line is $\langle \chi(\ii q^{1/2})|
(\II \otimes \sigma^z)
  \ket{\zeta}=-2$. The reduction relation \eqref{eqn:ZBarReduction2} with $i=1$ follows from this equality for $Z_N$ and its combination with \eqref{eqn:DefY}.
\end{proof}

For the next reduction relation, we introduce the abbreviation
\begin{equation}
\label{eqn:Deff}
  f_N(w)=-\frac{[q^2]^2\{q^{1/2}w\}\{q^{3/2}{\bar w}\}[\beta w][\beta q{\bar w}]}{\{q^{1/2}\}^2}\times
  \begin{cases}
    [w][q{\bar w}], &\text{even }N,\\
    [qw][q^2{\bar w}],& \text{odd }N.
  \end{cases}
\end{equation}

\begin{proposition}
  \label{prop:ZBarReduction4}
   For $N\geqslant 4$ and $i,j=1,\dots,n$ with $i<j$, we have
   \begin{multline}
Y_N(\dots,w_i,\dots,w_j = {\bar q}w_i,\dots) = f_N(w_i)\\
 \!\!\times \Bigg(\prod_{k\neq i,j}^n[qw_iw_k][qw_i{\bar w}_k][q^2w_k{\bar w}_i][q^2{\bar w}_i{\bar w}_k]\Bigg)^2Y_{N-4}(\dots,\widehat{w_i},\dots,\widehat{w_j} ,\dots).
     \label{eqn:RedZBar}
   \end{multline}
\end{proposition}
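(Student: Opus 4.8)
The plan is to reduce the $j = {\bar q} w_i$ specialisation of $Y_N$ to a product of two nested reduction relations of the type already established in \cref{lem:PsiReduction}, then propagate the result through the definition \eqref{eqn:DefY} of $Y_N$ in terms of $Z_N$. By \cref{prop:YBC}, $Y_N$ is $BC_n$-symmetric, so it suffices to establish the relation for the specific choice $i = 1$, $j = 2$; the general case follows by the symmetry in $w_1,\dots,w_n$. With this choice, I would work directly at the level of $Z_N$, tracking the trivial prefactors separately and restoring them through \eqref{eqn:DefY} at the end.

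\medskip

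The key computation is the double use of \cref{lem:PsiReduction}. First I would specialise $w_2 = {\bar q} w_1$ in $Z_N$, which sets the arguments of $\ket{\Psi_N}$ to $(w_1, {\bar w}_1, {\bar q} w_1, q {\bar w}_1, w_3, {\bar w}_3, \dots)$. The idea is to bring the two arguments that differ by a factor ${\bar q}$ into adjacent positions so that \cref{lem:PsiReduction} applies. Concretely, after reordering via the exchange relation \eqref{eqn:Exchange} — which introduces $\check R$-matrices whose action on $\bra{\chi_N}$ is controlled by \eqref{eqn:bYBEChi} and by the reflection/inversion identities behind \cref{prop:ZSymmetry2} — one arrives at a configuration where the pair $(z, {\bar q} z)$ sits in consecutive slots. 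Applying \cref{lem:PsiReduction} then collapses that pair, inserting a $\ket{\zeta}$ factor and multiplying by the explicit product of $[\beta z]$ and $[q z {\bar z}_j][q z z_j][q^2 {\bar z}_i z_j]$-type factors recorded in the lemma. I expect that the second argument ${\bar w}_1$ and the fourth $q{\bar w}_1$ (which again differ by a factor ${\bar q}$) then allow a second application of \cref{lem:PsiReduction}, collapsing the remaining part of the half-specialised pair and reducing the chain length from $N$ to $N-4$. The two surviving $\ket{\zeta}$ factors are contracted against $\bra{\chi(w_1)}$ and $\bra{\chi(w_2)} = \bra{\chi({\bar q} w_1)}$ via overlaps of the form $\bra{\chi(\cdot)} \zeta\rangle$, producing scalar factors that feed into $f_N(w_1)$.

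\medskip

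The remaining labour is bookkeeping: collecting the prefactors from both applications of \cref{lem:PsiReduction}, the overlap scalars, and the $K$-matrix or sign factors from the exchange steps, and checking that their product matches $f_N(w_i)$ together with the squared product $\prod_{k\neq i,j}[qw_iw_k][qw_i{\bar w}_k][q^2w_k{\bar w}_i][q^2{\bar w}_i{\bar w}_k]$ appearing in \eqref{eqn:RedZBar}. The squared structure should emerge naturally because each remaining spectral parameter $z_{2k-1} = w_k,\, z_{2k} = {\bar w}_k$ contributes a factor from \emph{each} of the two collapsed arguments, and the $w_1 \mapsto {\bar q} w_1$ substitution converts the lemma's factors into the stated $q$- and $q^2$-shifted brackets. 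The parity-dependent factor in $f_N$ — $[w][q{\bar w}]$ for even $N$ versus $[qw][q^2{\bar w}]$ for odd $N$ — should track the extra $z_N = 1$ argument present only in the odd case, which contributes one additional bracket to each reduction. Finally, dividing by the $\prod_i[q^{-1/2}w_i]$ (and, for odd $N$, the $\prod_i [qw_i][q{\bar w}_i]$) factors of \eqref{eqn:DefY} evaluated at the specialisation $w_2 = {\bar q} w_1$ converts the $Z_N$ identity into the claimed $Y_N$ identity.

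\medskip

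The main obstacle I anticipate is the reordering step: \cref{lem:PsiReduction} requires the two arguments differing by ${\bar q}$ to be \emph{adjacent}, whereas the half-specialisation interleaves $w_k$ with ${\bar w}_k$, so the pair $(w_1, {\bar q} w_1)$ is separated by the entry ${\bar w}_1$. Carrying the $\check R$-matrices past $\bra{\chi_N}$ without disturbing the already-reduced structure, while correctly accounting for the inversion relation \eqref{eqn:ZInversion} and the sign from \eqref{eqn:ZReflection}, is where the delicate sign- and factor-tracking lives. A clean alternative, which I would attempt first, is to avoid explicit reordering altogether by checking that the $w_2 = {\bar q}w_1$ specialisation coincides with a nested pair of the reductions already packaged in \cref{prop:ZBarReduction2} when those are composed — i.e.\ treating \eqref{eqn:RedZBar} as the ``two-variable'' analogue obtained by iterating the ``one-variable'' reduction — which would let me import the hard analytic work rather than redo it.
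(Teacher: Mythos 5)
Your main route is the paper's own proof: reduce to $i=1$, $j=2$ by the $BC_n$-symmetry of \cref{prop:YBC}, work at the level of $Z_N$, use the exchange relation \eqref{eqn:Exchange} to make the $\bar q$-related arguments adjacent, apply \cref{lem:PsiReduction} twice to pass from $N$ to $N-4$, and convert back to $Y_N$ through \eqref{eqn:DefY}. However, one step of your sketch, taken literally, fails: the overlaps $\bra{\chi(\cdot)}\zeta\rangle$ vanish identically (the coefficients of $\bra{\uparrow\downarrow}$ and $\bra{\downarrow\uparrow}$ in $\bra{\chi(w)}$ are both $1$, while $\ket{\zeta}=\ket{\uparrow\downarrow}-\ket{\downarrow\uparrow}$), so contracting the two surviving $\ket{\zeta}$'s separately against $\bra{\chi(w_1)}$ and $\bra{\chi(\bar q w_1)}$ would give $Z_N=0$. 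The non-zero answer comes precisely from the $\check R$-matrix produced by the reordering, which cannot be removed: in the paper the scalar feeding into $f_N$ is the coupled overlap $\left(\bra{\chi(\bar q w_1)}\otimes\bra{\chi(w_1)}\right)\check R_{2,3}(\bar q w_1^2)\left(\ket{\zeta}\otimes\ket{\zeta}\right)$, which does not factorise. Relatedly, the second application of \cref{lem:PsiReduction} requires the surviving pair $(\bar w_1, q\bar w_1)$ to be re-ordered to $(q\bar w_1,\bar w_1)$ first, and the paper can only do this without disturbing the first $\ket{\zeta}$ by writing $\ket{\zeta}\propto\check R_{1,2}(\bar q)\ket{\zeta}$ and sliding this extra $\check R$ across the covector with \eqref{eqn:bYBEChi}. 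These are not optional bookkeeping details; they are where the factor $f_N(w_1)$ actually comes from.

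Your preferred ``clean alternative'' --- obtaining \eqref{eqn:RedZBar} by composing \cref{prop:ZBarReduction2} with itself --- is a dead end and should be discarded. That proposition is an evaluation at the isolated points $w_i=\pm\ii q^{\pm 1/2}$, whereas \eqref{eqn:RedZBar} is a functional identity in the free variable $w_i$ along the locus $w_j=\bar q w_i$; iterating point evaluations can at best confirm \eqref{eqn:RedZBar} at finitely many values of $w_i$, far fewer than the $8(n-1)+1$ needed to pin down a Laurent polynomial of that degree width, so it cannot prove the identity. Indeed, the uniqueness argument of \cref{prop:Uniqueness} relies on properties \textit{(iii)} and \textit{(iv)} supplying complementary sets of interpolation points; if \textit{(iv)} were a consequence of \textit{(iii)}, there would not be enough data to determine $X_N$ at all.
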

\begin{proof}
  By \cref{prop:YBC}, it is sufficient to consider $i=1,j=2$.
  We focus on even $N$, so that $N=2n$, and abbreviate $Z_{2n}'=Z_{2n}(w_1,{\bar q}w_1,\dots)$. We have
  \begin{align}
   Z_{2n}'&=\langle\chi_{2n}(w_1,{\bar q}w_1,\dots)|\Psi_{2n}(w_1,{\bar w}_1,{\bar q}w_1,q {\bar w}_1,\dots)\rangle\\
   &= \langle\chi_{2n}(w_1,{\bar q}w_1,\dots)|\check R_{2,3}({\bar q}w^2_1)|\Psi_{2n}(w_1,{\bar q}w_1,{\bar w}_1,q {\bar w}_1,\dots)\rangle.
   \nonumber
  \end{align}
  From the first to the second line, we used \cref{lem:Exchange}. The arguments of the vector $|\Psi_{2n}\rangle$ on the right-hand side of this equality allow us to apply \cref{lem:PsiReduction}. We find
  \begin{multline}
    Z_{2n}' = (-1)^{n}[q][q^2][\beta w_1][{\bar q}^2w_1^2][{\bar q}^3w_1^2]\prod_{k=3}^n[q^{2}w_k{\bar w}_1][q^{2}{\bar w}_1{\bar w}_k][qw_1w_k][qw_1{\bar w}_k]\\
    \times \bra{\chi_{2n}(w_1,{\bar q}w_1,\dots)}\check R_{2,3}({\bar q}w^2_1)\left(\ket{\zeta} \otimes \ket{\Psi_{2(n-1)}({\bar w}_1,q {\bar w}_1,\dots)}\right).
    \nonumber
  \end{multline}
  Next, we observe that $\ket{\zeta} = ([q^2]/(2[q]))\check R_{1,2}({\bar q})\ket{\zeta}$ and infer
  \begin{align}
   &\langle\chi_{2n}(w_1,{\bar q}w_1,\dots)|\check R_{2,3}({\bar q}w^2_1)\check R_{1,2}({\bar q})\left(\ket{\zeta} \otimes |\Psi_{2(n-1)}({\bar w}_1,q {\bar w}_1,\dots)\rangle\right) \nonumber \\
   &=\langle\chi_{2n}({\bar q}w_1,w_1,\dots)|
\check R_{2,3}({\bar q}w^2_1)
\check R_{3,4}({\bar q})\left(\ket{\zeta} \otimes |\Psi_{2(n-1)}({\bar w}_1,q {\bar w}_1,\dots)\rangle\right) \nonumber \\
   &= \langle\chi_{2n}({\bar q}w_1,w_1,\dots)|\check R_{2,3}({\bar q}w^2_1)\left(\ket{\zeta} \otimes |\Psi_{2(n-1)}(q {\bar w}_1, {\bar w}_1,\dots)\rangle\right) \nonumber. 
  \end{align}
  Here, the second line follows from the first one through an application of \eqref{eqn:bYBEChi}. The third line results from the application of \cref{lem:Exchange}. The arguments of the vector $|\Psi_{N-2}\rangle$ on the third line allow us to apply, once more, the reduction relation of \cref{lem:PsiReduction}. It leads to
  \begin{multline}
Z_{2n}'=-  \frac12 [q^2]^2[\beta w_1][\beta q{\bar w}_1][{\bar q}^2w_1^2][{\bar q}^3w_1^2]\prod_{k=3}^n\left([qw_iw_k][qw_i{\bar w}_k][q^2w_k{\bar w}_i]q^2 {\bar w}_i{\bar w}_k)]\right)^2 \\
    \times \left(\langle \chi({\bar q}w_1)|\otimes \langle \chi(w_1)|\right)\check R_{2,3}({\bar q}w^2_1)\left(\ket{\zeta} \otimes \ket{\zeta} \right)Z _{2(n-2)}(w_3,\dots,w_n).
    \label{eqn:ZRedIntermediate}
  \end{multline}
  The overlap on the second line is
  \begin{equation}
    \left(\langle \chi({\bar q}w_1)|\otimes \langle \chi(w_1)|\right)\check R_{2,3}({\bar q}w^2_1)\left(\ket{\zeta} \otimes \ket{\zeta}\right) = \frac{2\{q^{1/2}w_1\}[w][q^{1/2}{\bar w}_1]}{\{q^{1/2}\}^2 \{q{\bar w}_1\}}.  \end{equation}
  We insert this expression into \eqref{eqn:ZRedIntermediate} and thus obtain a reduction relation for $Z_{2n}$. Using the definition \eqref{eqn:DefY} leads to  \eqref{eqn:RedZBar} for $i=1$ and $j=2$, which completes the proof for even $N$.
  
  For odd $N$, the proof is similar.
\end{proof}

\subsubsection{Uniqueness}
\label{sec:Uniqueness}

Consider a family of functions $X_N = X_N(w_1,\dots,w_n), \,N\geqslant 0,$ with the following properties:
\begin{enumerate}[label=\textit{(\roman*)}]
  \item $X_0=X_1=1$;
  \item for $N\geqslant 2$, $X_N$ is an even $BC_n$-symmetric Laurent polynomial of degree width at most $8(n-1)$ with respect to $w_1$; 
  \item for $N\geqslant 2$, we have
  \begin{multline}
    X_N(w_1=\ii q^{1/2},w_2,\dots,w_n) = (-1)^{n+1}\{q^{1/2}\beta\}c_N \\
    \times \left(\prod_{i=2}^N \{q^{3/2}w_i\}^2\{q^{3/2}{\bar w}_i\}^2\right) X_{N-2}(w_2,\dots,w_n);
 \end{multline}
  \item for $N\geqslant 4$, we have 
  \begin{multline}
     X_N(w_1, w_2 = \bar q w_1,\dots,w_n)=f_N(w_1) \\
    \times\left(\prod_{k=3}^n[q w_1 w_k][q w_1{\bar w}_k][q^2{\bar w}_1 w_k][q^2{\bar w}_1{\bar w}_k)]\right)^2     X_{N-4}(w_3, \dots,w_n).
  \end{multline}
 \end{enumerate}
The following result will be a key ingredient in characterising the generalised sum of components:
\begin{proposition}
  \label{prop:Uniqueness}
  For each $N\geqslant 0$, we have $X_N(w_1,\dots,w_n) = Y_N(w_1,\dots,w_n)$.
\end{proposition}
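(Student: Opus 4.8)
The plan is to prove the statement by induction on $N$, after reducing it to a vanishing argument. First I would record that $Y_N$ itself satisfies the four hypotheses: property \textit{(i)} is the convention $Y_0=Y_1=1$, property \textit{(ii)} is \cref{prop:YBC}, property \textit{(iii)} is \cref{prop:ZBarReduction2}, and property \textit{(iv)} is \cref{prop:ZBarReduction4}. Thus it suffices to show that a family $X_N$ obeying \textit{(i)}--\textit{(iv)} is uniquely determined, which I would do by analysing the difference $D_N = X_N - Y_N$. Because the reduction relations \textit{(iii)} and \textit{(iv)} express a specialisation of the top function as a fixed multiple of a lower member of the family, with coefficients independent of the choice of family, the difference $D_N$ is again even $BC_n$-symmetric of degree width at most $8(n-1)$ in each $w_i$, satisfies $D_0=D_1=0$, and obeys the same specialisation relations but now with $D_{N-2}$ and $D_{N-4}$ on the right-hand sides.

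The induction hypothesis is that $D_M=0$ for all $M<N$. Under this hypothesis the right-hand sides of \textit{(iii)} and \textit{(iv)} vanish, so $D_N$ satisfies the homogeneous conditions
\[
  D_N(w_1=\ii q^{1/2},w_2,\dots,w_n)=0,\qquad D_N(w_1,w_2=\bar q w_1,w_3,\dots,w_n)=0,
\]
the second only for $N\geqslant 4$. I would then fix $w_2,\dots,w_n$ at generic values and regard $D_N$ as a Laurent polynomial in $w_1$ alone. Since it is even and invariant under $w_1\mapsto\bar w_1$, it is a polynomial in $\{w_1^2\}=w_1^2+w_1^{-2}$, and the degree-width bound $8(n-1)$ forces its degree in $\{w_1^2\}$ to be at most $2(n-1)$.

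The heart of the argument is to exhibit $2n-1$ distinct roots of this one-variable polynomial. The specialisation $w_1=\ii q^{1/2}$ gives $\{w_1^2\}=-\{q\}$, one root. For each $j=2,\dots,n$, the symmetry in $w_2,\dots,w_n$ lets me apply \textit{(iv)} with $w_j$ in the second slot, so $D_N$ vanishes on the surface $w_j=\bar q w_1$; this yields the root $w_1=q w_j$, that is $\{w_1^2\}=\{q^2 w_j^2\}$, and applying the $BC_n$-symmetry $w_j\mapsto\bar w_j$ (under which $D_N$ is invariant) yields the further root $w_1=q\bar w_j$, that is $\{w_1^2\}=\{q^2\bar w_j^2\}$. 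This produces $2(n-1)$ additional roots, for a total of $2n-1$. For generic $w_2,\dots,w_n$ these values of $\{w_1^2\}$ are pairwise distinct, so a polynomial of degree at most $2n-2$ with $2n-1$ roots must vanish identically; by Zariski density in $w_2,\dots,w_n$ I conclude $D_N\equiv 0$. The cases $N=2,3$ (where $n=1$ and relation \textit{(iv)} is unavailable) are covered by the same count, since there the degree is at most $0$ and the single root from \textit{(iii)} already forces $D_N=0$.

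The main obstacle I anticipate is the bookkeeping of the root count: one must check that the degree-width bound in \textit{(ii)} is sharp enough that $2(n-1)$ is strictly below $2n-1$, and that the three kinds of roots ($-\{q\}$, $\{q^2 w_j^2\}$, and $\{q^2\bar w_j^2\}$) are genuinely distinct for generic spectral parameters. This is precisely where the evenness and the full $BC_n$-symmetry are essential, since they are what convert each reduction surface into a \emph{pair} of roots in the variable $\{w_1^2\}$ and thereby guarantee the tight margin of exactly one extra root.
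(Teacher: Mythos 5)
Your proof is correct and takes essentially the same route as the paper's: verify that $Y_N$ satisfies \textit{(i)}--\textit{(iv)}, then argue by induction on $N$ that the degree bound in \textit{(ii)}, combined with the specialisations supplied by \textit{(iii)} and \textit{(iv)}, determines the function uniquely. Your repackaging --- passing to the difference $X_N-Y_N$ and counting $2n-1$ roots of a polynomial of degree at most $2n-2$ in $\{w_1^2\}$ --- is the paper's count of $8n-4$ known values of $w_1$ against $8n-7$ coefficients, compressed four-to-one by the evenness and inversion symmetry.
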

\begin{proof}
  By \textit{(i)}, the proof is trivial for $N=0,1$. Hence, we consider $N\geqslant 2$ and, thus, $n\geqslant 1$.
  
   First, by \textit{(ii)}, $X_N(w_1,\dots,w_n)$ is uniquely determined through its values at $8(n-1)+1= 8n-7$ distinct values of $w_1$. The combination of \textit{(ii)} and \textit{(iii)} 
 leads to an expression of $X_N(w_1,\dots,w_n)$ in terms of $X_{N-2}(w_2,\dots,w_N)$ when   \begin{equation}
   w_1=\pm\ii q^{1/2},\,\pm\ii q^{-1/2}.
\end{equation}
Similarly, if $N\geqslant 4$, combining \textit{(ii)} with \textit{(iv)} yields an expression for $X_N(w_1,\dots,w_n)$ in terms of $X_{N-4}(w_2,\dots,w_{i-1},w_{i+1},\dots, w_n)$ when
 \begin{equation}
   w_1 = \pm q w_i,\pm {\bar q} w_i,\pm q {\bar w}_i,\pm {\bar q} {\bar w}_i, 
 \end{equation}
 for each $i=2,\dots,n$. As $X_0$ and $X_1$ are known by \textit{(i)}, it follows by strong induction on $N$ that $X_N(w_1,\dots,w_n)$ is indeed known at $8n-7$ distinct values of $w_1$. (In fact, for $N=2,3$, it is even known at $4$ distinct values of $w_1$, and for $N\geqslant 4$ at $8n-4$ distinct values of $w_1$.)
  
    Second, the equality $X_N(w_1,\dots,w_n) = Y_N(w_1,\dots,w_n)$ follows from checking that $Y_N$ satisfies \textit{(i)}-\textit{(iv)}, too. The definition of $Y_N$  straightforwardly implies \textit{(i)}. The property \textit{(ii)} follows from \cref{prop:YBC}. Moreover, the reduction relations \textit{(iii)} and \textit{(iv)} are a straightforward consequence of \cref{prop:ZBarReduction2} and \cref{prop:ZBarReduction4}, respectively.
\end{proof}

The unique characterisation of a quantity related to an integrable model through symmetry, analyticity, and reduction relations often allows one to derive explicit exact expressions in the form of determinants or Pfaffians. The strategy consists of guessing an expression for the quantity of interest and verifying that it has all the desired properties (see, for example, \cite{izergin:92,kuperberg:02}). In our case, we currently do not have a guess for $Y_N$ with arbitrary $q$ and $\beta$ in terms of an explicit determinant or Pfaffian formula. (We defer the discussion of a few \emph{special} but non-trivial cases, where such a formula can be found, to a separate publication \cite{lienardy:tbp}.) However, in the forthcoming sections, we demonstrate that $Y_N$ can nonetheless be constructed in terms of the partition function of a six-vertex model on a square-grid graph of triangular shape, despite the absence of an explicit formula. In the homogeneous limit, this model and its partition function allow us to relate the sum of components of the special eigenvector to the enumeration of TSASMs.

\section{The six-vertex model and TSASMs}
\label{sec:6VTSASM}

In this section, we discuss the relation between the six-vertex model and a weighted enumeration of TSASMs. We review this enumeration and its corresponding generating function in \cref{sec:TSASM}. In \cref{sec:6VTriangular}, we define the configurations of a six-vertex model on a square-grid graph with triangular shape and relate them to TSASMs. The introduction of weights for the six-vertex model in \cref{sec:Weights} allows us to express the TSASM generating function as a specialisation of the model's partition function with certain boundary conditions. In \cref{sec:Overlap}, we introduce and analyse an overlap that allows us to compute the partition functions. The analysis allows us to prove that the overlap coincides, up to a suitable parameter identification and an elementary factor, with the generalised sum of components.

\subsection{TSASM enumeration}
\label{sec:TSASM}
Let $M\geqslant 1$ be an integer. An alternating sign matrix (ASM) of order $M$ is an $M\times M$ square matrix whose entries are $0,\,1$ or $-1$, such that in each row and column, the non-zero entries alternate in sign, and all row and column sums are equal to one \cite{bressoudbook}. 

A totally-symmetric ASM (TSASM) is an ASM of odd order $2N+1$ that is invariant under all symmetries of the square \cite{bousquet:95,behrend:23}. (There are no TSASMs of even order.)  Since the symmetry group of the square is generated by the reflections with respect to the vertical median and the main diagonal, the entries of a TSASM $A =(A_{i,j})_{i,j=1}^{2N+1}$
satisfy the relations
\begin{equation}
A_{i,2(N+1)-j} = A_{i,j}, \quad  A_{j,i} = A_{i,j},
  \label{eqn:TSASMSymmetry}
\end{equation}
for all $i,j=1,\dots,2N+1$. We denote by $\mathrm{TSASM}(2N+1)$ the set of all TSASMs of order $2N+1$. Two examples of TSASMs of order seven ($N=3$) are\footnote{To improve the readability, we write $+$ and $-$ for the entries $+1$ and $-1$, respectively.}
\begin{equation}
\label{eqn:TSASMExample}
\begin{pmatrix}
  0 & 0 & 0 & + & 0 & 0 & 0  \\
  0 & + & 0 & - & 0 & + & 0  \\
  0 & 0 & 0 & + & 0 & 0 & 0  \\
  + & - & + & - & + & - & +  \\
  0 & 0 & 0 & + & 0 & 0 & 0  \\
  0 & + & 0 & - & 0 & + & 0 \\
  0 & 0 & 0 & + & 0 & 0 & 0  \\
  \end{pmatrix},\quad 
\begin{pmatrix}
  0 & 0 & 0 & + & 0 & 0 & 0  \\
  0 & 0 & + & - & + & 0 & 0  \\
  0 & + & - & + & - & + & 0  \\
  + & - & + & - & + & - & +  \\
  0 & + & - & + & - & + & 0  \\
  0 & 0 & + & - & + & 0 & 0 \\
  0 & 0 & 0 & + & 0 & 0 & 0  
 \end{pmatrix}
  .
\end{equation}
More generally, $\mathrm{TSASM}(2N+1)$ is non-empty for each $N\geqslant 0$:
\begin{example}
\label{example:TSASM}
For each $N\geqslant 0$, the matrix $A^\diamond=(A^\diamond_{i,j})_{i,j=1}^{2N+1}$ with entries
\begin{equation}
  A^\diamond_{i,j} = (-1)^{i+j+N},
\end{equation}
if $|i-j| \leqslant N$ and $|2(N+1)-i-j|\leqslant N$, and $A^\diamond_{i,j}=0$ otherwise, is a TSASM. The second TSASM in \eqref{eqn:TSASMExample} corresponds to $A^\diamond$ with
 $N=3$. 
\end{example}

By \eqref{eqn:TSASMSymmetry}, the entries of the vertical and horizontal median of $A\in \mathrm{TSASM}(2N+1)$ are frozen to
 \begin{equation}
   A_{i,N+1} = (-1)^{i+1}, \quad A_{N+1,j} = (-1)^{j+1},
   \label{eqn:TSASMMedians}
 \end{equation}
 for all $i,j=1,\dots, 2N+1$, irrespectively of the other entries. The medians partition $A$ into four $N\times N$ submatrices, which can be obtained one from another by the symmetry operations. Without loss of generality, we focus on the upper right submatrix. Moreover, combining the two relations \eqref{eqn:TSASMSymmetry} implies
\begin{equation}
  A_{2(N+1)-j,2(N+1)-i}=A_{i,j},
\end{equation}
i.e. the TSASM, and thus the upper right submatrix, is invariant under the reflection with respect to its antidiagonal. Hence, we may reconstruct the full matrix $A$ from the submatrix' entries along and below this antidiagonal. Finally, we recall that the first and last rows and columns of any ASM contain a single entry $+1$, whereas all their other entries are $0$ \cite{bressoudbook}. For the TSASM $A$, this single entry $+1$ is at the row's or column's central position, thanks to \eqref{eqn:TSASMMedians}. It follows that all the entries of the last column of the submatrix are $0$. For a reason to be discussed later, we discard this column for odd $N$, but retain it for even $N$. Hence, we consider the triangular array
\begin{equation}
  \label{eqn:TriangularArray}
  \begin{array}{cccc}
              &             &                       & A_{1+\epsilon,2N+1-\epsilon}\\
              &             & \hspace{-3mm} \tikz[baseline=-2.25mm]{\draw node[rotate=65] (0,0) {$\ddots$};} & \vdots\\
              & A_{N-1,N+3} & \cdots & A_{N-1,2N+1-\epsilon}\\
    A_{N,N+2} & A_{N,N+3} & \cdots & A_{N,2N+1-\epsilon}
  \end{array}
  \vspace{.3cm}
\end{equation}
where $\epsilon$ is defined in \eqref{eqn:Defnnbar}. 
Both the number of rows and the number of columns in this triangular array are $2n$, with $n$ as given in \eqref{eqn:Defnnbar}. Using \eqref{eqn:TSASMSymmetry} and \eqref{eqn:TSASMMedians}, one may reconstruct the full matrix $A$ from the array. \Cref{fig:TSASMTriangles4,fig:TSASMTriangles5} show all triangular arrays for $N=4$ and $N=5$, respectively.
\begin{figure}
\centering
\begin{tikzpicture}
  \draw (-4.5,0) node
  {$\begin{array}{cccc}
         &   &   & 0\\
         &   & 0 & 0\\
         & 0 & 0 & 0\\
       0 & 0 & + & 0
  \end{array}$};
  \draw (-1.5,0) node
  {$\begin{array}{cccc}
         &   &   & 0\\
         &   & + & 0\\
         & 0 & 0 & 0\\
       + & 0 & 0 & 0
  \end{array}$};
  \draw (1.5,0) node
  {$\begin{array}{cccc}
         &   &   & 0\\
         &   & 0 & 0\\
         & - & + & 0\\
       + & 0 & 0 & 0
  \end{array}$};
  \draw (4.5,0) node
  {$\begin{array}{cccc}
         &   &   & 0\\
         &   & 0 & 0\\
         & + & 0 & 0\\
       + & - & + & 0
  \end{array}$};
\end{tikzpicture}
\caption{The triangular arrays corresponding to the elements of $\mathrm{TSASM(9)}$.}
\label{fig:TSASMTriangles4}
\end{figure}
\begin{figure}
\centering
\begin{tikzpicture}
  \draw (-4.5,0) node
  {$
  \begin{array}{cccc}
  &  &  & 0 \\
  &  & 0 & 0 \\
  & 0 & 0 & + \\
 0 & 0 & 0 & 0 \\
\end{array}$
};
 \draw (-1.5,0) node
  {$
  \begin{array}{cccc}
   &   &   & 0 \\
   &   & 0 & 0 \\
   & 0 & + & 0 \\
 0 & 0 & - & + \\
\end{array}$
};
 \draw (1.5,0) node
  {$
  \begin{array}{cccc}
   &   &   & + \\
   &   & 0 & 0 \\
   & + & 0 & 0 \\
 0 & 0 & 0 & 0 \\
\end{array}$
};
 \draw (4.5,0) node
  {$
 \begin{array}{cccc}
   &   &   & 0 \\
   &   & + & 0 \\
   & + & - & + \\
 0 & 0 & 0 & 0 \\
\end{array}$
};
\begin{scope}[yshift=-2.25cm]
  \draw (-4.5,0) node
  {$
 \begin{array}{cccc}
   &   &   & 0 \\
   &   & + & 0 \\
   & + & 0 & 0 \\
 0 & 0 & - & + \\
\end{array}$
};
 \draw (-1.5,0) node
  {$
  \begin{array}{cccc}
   &   &   & 0 \\
   &   & - & + \\
   & + & 0 & 0 \\
 0 & 0 & 0 & 0 \\
\end{array}$
};
 \draw (1.5,0) node
  {$
  \begin{array}{cccc}
   &   &   & + \\
   &   & 0 & 0 \\
   & 0 & 0 & 0 \\
 - & + & 0 & 0 \\
\end{array}$
};
 \draw (4.5,0) node
  {$
 \begin{array}{cccc}
   &   &   & 0 \\
   &   & + & 0 \\
   & 0 & - & + \\
 - & + & 0 & 0 \\
\end{array}$
};
\end{scope}
\begin{scope}[yshift=-4.5cm]
  \draw (-4.5,0) node
  {$
 \begin{array}{cccc}
   &   &   & 0 \\
   &   & + & 0 \\
   & 0 & 0 & 0 \\
 - & + & - & + \\
\end{array}$
};
 \draw (-1.5,0) node
  {$
  \begin{array}{cccc}
   &   &   & 0 \\
   &   & 0 & 0 \\
   & + & 0 & 0 \\
 - & 0 & 0 & + \\
\end{array}$
};
 \draw (1.5,0) node
  {$
  \begin{array}{cccc}
   &   &   & 0 \\
   &   & - & + \\
   & 0 & 0 & 0 \\
 - & + & 0 & 0 \\
\end{array}$
};
 \draw (4.5,0) node
  {$
 \begin{array}{cccc}
   &   &   & 0 \\
   &   & 0 & 0 \\
   & - & 0 & + \\
 - & + & 0 & 0 \\
\end{array}$
};
\end{scope}
\begin{scope}[yshift=-6.75cm]
  \draw (0,0) node
  {$
 \begin{array}{cccc}
   &   &   & 0 \\
   &   & 0 & 0 \\
   & - & + & 0 \\
 - & + & - & + \\
\end{array}$
};
\end{scope}
\end{tikzpicture}
\caption{The triangular arrays corresponding to the elements of $\mathrm{TSASM}(11)$.}
\label{fig:TSASMTriangles5}
\end{figure}

For each $A\in\mathrm{TSASM}(2N+1)$, we define $\mu(A)$ and $\nu(A)$ as the numbers of the non-zero entries along and below the diagonal of its triangular array:
\begin{equation}
  \mu(A) = \sum_{i=1+\epsilon}^N |A_{i,2(N+1)-i}|,\quad \nu(A) = \sum_{1+\epsilon \leqslant j < i \leqslant N}|A_{i,2(N+1)-j}|.
\end{equation}
We define the TSASM generating function as
\begin{equation}
  \label{eqn:TSASMGF}
  A_{\mathrm{TS}}(2N+1;t,\tau) = \sum_{A \in \mathrm{TSASM}(2N+1)}t^{\mu(A)}\tau^{\nu(A)},
\end{equation}
where $t$ and $\tau$ are complex numbers. 
\begin{example}
  The first generating functions are $A_{\mathrm{TS}}(1;t,\tau) =1$, $A_{\mathrm{TS}}(3;t,\tau) = 1$, and
  \begin{align}
      A_{\mathrm{TS}}(5;t,\tau) &= t,\\
      A_{\mathrm{TS}}(7;t,\tau) &= t(1+\tau),\\
      A_{\mathrm{TS}}(9;t,\tau) &= \tau + t^2 (1+\tau + \tau^2),\\
      A_{\mathrm{TS}}(11;t,\tau) & =\tau(1+\tau^2)+t^2(1+3\tau+4\tau^2+2\tau^3+\tau^4).
  \end{align}
\end{example}

\subsection{Six-vertex configurations and TSASMs}
\label{sec:6VTriangular}
We now introduce a six-vertex model on a square-grid graph of triangular shape, whose configurations are in one-to-one correspondence with the TSASM triangular arrays defined above. To this end, we adapt the considerations of Behrend, Fischer, and Koutschan \cite[Section 5.1]{behrend:23} to our setting. 

For $n\geqslant 1$, we consider the following square-grid graph:
\begin{equation}
\label{eqn:TriangularGraph}
\begin{tikzpicture}[scale=.75,baseline=2cm]
    \foreach \i in {1,2,3}
    {
      \draw (\i cm,0) -- (\i cm,\i cm) -- (3,\i cm);
      \draw[dotted] (3,\i cm) -- (4,\i cm);
      \draw (4 cm,\i cm) -- (6,\i cm);
    }
    
    \foreach \i in {4,5}
    {
      \draw (\i cm,0) -- (\i cm,3);
      \draw[dotted] (\i cm,3) -- (\i cm,4);
      \draw (\i cm,4) -- (\i cm, \i cm) -- (6,\i cm);
    }
    
    \foreach \i in {1,2,...,5}
    {
      \foreach \j in {1,...,\i}
      {
        \fill (\i,\j) circle (1.5pt);
      }
    }
    
    \foreach \i in {1,2,...,5}
    {
      \fill (\i,0) circle (1.5pt);
      \fill (6,\i) circle (1.5pt);
    }
    
    \draw (1,0) node[below] {\tiny ({0},{1})};
    \draw (2,0) node[below] {\tiny ({0},{2})};
    \draw (5,0) node[below] {\tiny ({0},{$2n$})};

    \draw (1,1) node[above] {\tiny ({1},{1})};
    \draw (2,2) node[above] {\tiny ({2},{2})};
 
    \draw (5,5) node[above] {\tiny ($2n$,$2n$)};
     
    \draw (6,1) node[right] {\tiny ($1$,$2n{+}1$)};
    \draw (6,2) node[right] {\tiny ($2$,$2n{+}1$)};
    \draw (6,5) node[right] {\tiny ($2n$,$2n{+}1$)};
    
    \draw (4,-.4) node {\tiny $\cdots$};
    \draw (6.8,3.6) node {\tiny $\vdots$};
    \draw (3.25,3.75) node[rotate=85] {\tiny $\ddots$};
     
  \end{tikzpicture}.
\end{equation}
It has four types of vertices: the bulk vertices $(i,j),\, 1\leqslant i< j\leqslant 2n$, of degree $4$, the corner vertices $(i,i),\,1\leqslant i \leqslant 2n$, of degree $2$, and the bottom and right boundary vertices,  $(0,j),\,1\leqslant j \leqslant 2n$, and $(i,2n+1),\,1\leqslant i \leqslant 2n$, respectively, of degree $1$. We call an edge a bottom or right edge if it is incident to a bottom or right boundary vertex, respectively. 

We consider a six-vertex model on the graph \eqref{eqn:TriangularGraph} with boundary conditions at the bottom encoded in $\alpha_1,\alpha_2,\dots, \alpha_{2n}\in \{\uparrow,\downarrow\}$. A configuration of this model is an orientation of the edges that obeys the following conditions: \textit{(i)} Two edges are directed toward and two edges are directed away from each internal vertex (ice rule); \textit{(ii)} the right boundary edges are oriented leftward; \textit{(iii)} for each $1\leqslant i \leqslant 2n$, the bottom boundary edge incident to the boundary vertex $(i,0)$ possesses the orientation $\alpha_i$. We denote the set of these configurations by $\mathrm{6V}^{\bm\alpha}_n$, where $\bm \alpha = \alpha_1\alpha_2 \cdots \alpha_{2n}$. 

For each configuration $C\in \mathrm{6V}^{\bm\alpha}_n$ and each vertex $(i,j)$, the local configuration $C_{ij}$ at $(i,j)$ is the orientation of the edges that are incident to $(i,j)$. There are six possible local configurations at a bulk vertex (hence the model's name),
\begin{equation}
  \begin{tikzpicture}[baseline=0cm]
    \foreach \i in {1,2,...,6}
    {
      \drawvertex{\i}{1.5*\i}{0} 
    }
    \end{tikzpicture},
\end{equation}
and four possible local configurations at a corner vertex
\begin{equation}
   \begin{tikzpicture}[baseline=0cm]
     \draw[white] (0,0)--(0,.5);
     \foreach \i in {1,2,3,4}
     {       
       \drawcornervertex{\i}{1.5*\i}{0} 
     }    
   \end{tikzpicture}
   \,.
\end{equation}
Finally, the possible local configurations at the boundary vertices are
\begin{equation}
  \begin{tikzpicture}[baseline=.25cm]
     \foreach \i in {0,1}
     {       
       \fill (1.5*\i,0) circle (1.5pt);
     }
     \draw[postaction={on each segment={mid arrow}}] (0,0)--(0,.5);
     \draw[xshift=1.5cm,postaction={on each segment={mid arrow}}] (0,.5)--(0,0);
     
     \draw (3,.25) node {$\mathrm{and}$};
 
     \fill (4.5,.25) circle (1.5pt);
     \draw[xshift=4.5cm,postaction={on each segment={mid arrow}}] (0,.25)--(-.5,.25);
\end{tikzpicture}
\,.
\end{equation}
The boundary conditions may exclude some of these local configurations for boundary vertices or vertices that are adjacent to a boundary vertex.

We are particularly interested in the cases where the boundary condition $\bm \alpha$ at the bottom is fixed to a sequence of alternating orientations:
\begin{equation}
  \bm \alpha_+ =\, \uparrow\downarrow \cdots \uparrow\downarrow, \quad \bm \alpha_- = \,\downarrow\uparrow\cdots \downarrow\uparrow.
\end{equation}
To simplify the notation, we write $\mathrm{6V}^\pm_n$ for $\mathrm{6V}^{\bm{\alpha}_\pm}_n$. \Cref{fig:6VC1,fig:6VC2} illustrate the elements of $\mathrm{6V}_n^-$ and $\mathrm{6V}_n^+$, respectively, for $n=2$.
\begin{figure}[h]
  \begin{tikzpicture} 
    \confsixvertex{{4,1,2,3},{0,3,3,3},{0,0,3,3},{0,0,0,3}}{0}{0}
    \confsixvertex{{2,3,2,3},{0,3,2,3},{0,0,2,3},{0,0,0,3}}{3}{0}
    \confsixvertex{{2,3,2,3},{0,1,2,3},{0,0,3,3},{0,0,0,3}}{6}{0}
	\confsixvertex{{2,1,2,3},{0,2,3,3},{0,0,3,3},{0,0,0,3}}{9}{0} 	
  \end{tikzpicture}
    \caption{Illustration of the elements of $\mathrm{6V}_n^-$ for $n=2$.}
    \label{fig:6VC1}
\end{figure}
\begin{figure}[ht]
  \begin{tikzpicture}  
    \confsixvertex{{3,2,3,2},{0,4,1,2},{0,0,3,3},{0,0,0,3}}{0}{0}
    \confsixvertex{{3,2,1,2},{0,4,2,3},{0,0,3,3},{0,0,0,3}}{3}{0}
    \confsixvertex{{3,2,3,2},{0,2,3,2},{0,0,3,2},{0,0,0,2}}{6}{0}
	\confsixvertex{{3,2,3,2},{0,2,1,2},{0,0,2,3},{0,0,0,3}}{9}{0}

    \confsixvertex{{3,2,1,2},{0,2,2,3},{0,0,2,3},{0,0,0,3}}{0}{-3}
    \confsixvertex{{3,2,3,2},{0,2,3,2},{0,0,1,2},{0,0,0,3}}{3}{-3}
    \confsixvertex{{1,2,3,2},{0,3,3,2},{0,0,3,2},{0,0,0,2}}{6}{-3}
	\confsixvertex{{1,2,3,2},{0,3,1,2},{0,0,2,3},{0,0,0,3}}{9}{-3}

	\confsixvertex{{1,2,1,2},{0,3,2,3},{0,0,2,3},{0,0,0,3}}{0}{-6}
    \confsixvertex{{1,4,1,2},{0,2,3,3},{0,0,3,3},{0,0,0,3}}{3}{-6}
	\confsixvertex{{1,2,3,2},{0,3,3,2},{0,0,1,2},{0,0,0,3}}{6}{-6}
	\confsixvertex{{1,2,3,2},{0,1,1,2},{0,0,3,3},{0,0,0,3}}{9}{-6}

	\confsixvertex{{1,2,1,2},{0,1,2,3},{0,0,3,3},{0,0,0,3}}{4.5}{-9}

  \end{tikzpicture}

  \caption{Illustration of the elements of $\mathrm{6V}_n^+$ for $n=2$.}
    \label{fig:6VC2}
\end{figure}

We now describe the connection between the six-vertex model configurations and TSASMs. To this end, we recall that, for each integer $M\geqslant 1$, there is an explicit bijection between the set of ASMs of order $M$ and the configurations of the six-vertex model with domain-wall boundary conditions on a square-grid graph associated with an $M\times M$ square \cite{elkies:92}. We set $M=2N+1$ and consider the restriction of this bijection to $\mathrm{TSASM}(2N+1)$. For each $A\in \mathrm{TSASM}(2N+1)$, the property \eqref{eqn:TSASMMedians} implies that the corresponding six-vertex configuration is fixed along its horizontal and vertical median to an alternating sequence of the local configurations
\begin{equation}
  \tikz[baseline=-.1cm]{\drawvertex{5}{0}{0}} \quad \text{and} \quad \tikz[baseline=-.1cm]{\drawvertex{6}{0}{0}}.
\end{equation}
The medians divide the configuration into four subconfigurations corresponding to $N\times N$ squares. We focus on the upper-right square and note that the symmetries \eqref{eqn:TSASMSymmetry} imply that the local configurations
\begin{equation}
    \tikz[baseline=-.1cm]{\drawvertex{1}{0}{0}} \quad \text{and} \quad \tikz[baseline=-.1cm]{\drawvertex{2}{0}{0}}
\end{equation}
cannot occur at the vertices on the antidiagonal of this upper-right square \cite[Section 5.1]{behrend:23}. We consider the part on or below its antidiagonal, removing the last row for odd $N$, but keeping it for even $N$. This restriction provides a bijection between $\mathrm{TSASM}(2N+1)$ and $\mathrm{6V}^-_n$ if $N$ is even, and $\mathrm{6V}^+_n$ if $N$ is odd. The matrix $A\in \mathrm{TSASM}(2N+1)$ corresponding to $C\in \mathrm{6V}^\pm_n$ is defined through its triangular array \eqref{eqn:TriangularArray}, whose entries are given by
\begin{equation}
  \label{eqn:6VTSASM}
  A_{i,j} =
  \begin{cases}
    1, & C_{N+1-i,j-(N+1)} = 
	\begin{tikzpicture}[baseline=-.1cm]\drawvertex{5}{0}{0}\end{tikzpicture}\ \mathrm{or}\
	\begin{tikzpicture}[baseline=-.1cm]\drawcornervertex{2}{0}{0}\end{tikzpicture}, \\[3mm]
      -1, & C_{N+1-i,j-(N+1)} = 
	\begin{tikzpicture}[baseline=-.1cm]\drawvertex{6}{0}{0}\end{tikzpicture} \ \mathrm{or}\  
	\begin{tikzpicture}[baseline=-.1cm]\drawcornervertex{1}{0}{0}\end{tikzpicture}, \\[3mm]
      0,& C_{N+1-i,j-(N+1)} = 
	\begin{tikzpicture}[baseline=-.1cm]\drawvertex{1}{0}{0}\end{tikzpicture},\,
	\begin{tikzpicture}[baseline=-.1cm]\drawvertex{2}{0}{0}\end{tikzpicture},\,
	\begin{tikzpicture}[baseline=-.1cm]\drawvertex{3}{0}{0}\end{tikzpicture},\,
	\begin{tikzpicture}[baseline=-.1cm]\drawvertex{4}{0}{0}\end{tikzpicture},\,
	\begin{tikzpicture}[baseline=-.1cm]\drawcornervertex{3}{0}{0}\end{tikzpicture}\ \mathrm{or}\ 
	\begin{tikzpicture}[baseline=-.1cm]\drawcornervertex{4}{0}{0}\end{tikzpicture},
  \end{cases}
\end{equation}
where $i=1+\epsilon,\dots N$ and $j=2(N+1)-i,\dots, 2N+1-\epsilon$. Note that applying this mapping to the configurations shown in \cref{fig:6VC1,fig:6VC2} yields the triangular arrays shown in \cref{fig:TSASMTriangles4,fig:TSASMTriangles5}, respectively.

We now exploit the bijection to characterise the number of non-zero entries along and below the diagonal of the triangular domain associated with each $\mathrm{TSASM}$.

\begin{proposition}\label{prop:boundsDegree}
 For each $N\geqslant 0$ and $A\in \mathrm{TSASM}(2N+1)$, we have $\mu(A) \leqslant n$, $\mu(A) \equiv n \mod 2$, and $(n-\mu(A))/2 \leqslant \nu(A) \leqslant n(n'-1)$.
\end{proposition}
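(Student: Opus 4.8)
The plan is to work entirely on the six-vertex side, using the bijection between $\mathrm{TSASM}(2N+1)$ and $\mathrm{6V}^\pm_n$ together with the dictionary \eqref{eqn:6VTSASM}. Under this dictionary the diagonal entries of the triangular array are read off at the $2n$ corner vertices and the below-diagonal entries at the bulk vertices; hence $\mu(A)$ equals the number of corners carrying a configuration with $A_{i,j}=\pm 1$ (the two corners having one incoming and one outgoing edge), while $2n-\mu(A)$ is the number of corners with $A_{i,j}=0$, which split into those with both edges incoming and those with both edges outgoing, whose counts I denote $N_{\mathrm{in}}$ and $N_{\mathrm{out}}$. Likewise $\nu(A)$ equals the number of bulk vertices carrying a $c$-configuration (the two bulk vertices with $A_{i,j}=\pm 1$). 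I will prove the three assertions as statements about these vertex counts.

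For $\mu(A)\leqslant n$ and $\mu(A)\equiv n \bmod 2$, I would run a discrete divergence (flux) argument. Regarding each edge orientation as a unit flow, set $\operatorname{div}(v)=(\#\text{outgoing})-(\#\text{incoming})$ at every internal vertex. The ice rule makes $\operatorname{div}=0$ at every bulk vertex, and also at the two turn corners, whereas a both-outgoing corner has $\operatorname{div}=+2$ and a both-incoming corner $\operatorname{div}=-2$. Summing over all internal vertices, the divergence theorem equates $2N_{\mathrm{out}}-2N_{\mathrm{in}}$ with the net flux through the boundary edges; since the right boundary edges are all inward and the bottom edges are alternating, this flux is $-2n+0=-2n$. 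Hence $N_{\mathrm{in}}-N_{\mathrm{out}}=n$, and combining with $N_{\mathrm{in}}+N_{\mathrm{out}}=2n-\mu(A)$ gives $\mu(A)=n-2N_{\mathrm{out}}$. This yields at once $\mu(A)\leqslant n$, the congruence $\mu(A)\equiv n\bmod 2$, and the identification $N_{\mathrm{out}}=(n-\mu(A))/2$.

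For the lower bound $\nu(A)\geqslant (n-\mu(A))/2$, I would read the horizontal edge orientations along a fixed row of the graph, from the corner to the right boundary. Across a bulk vertex the horizontal orientation is preserved unless the vertex is a $c$-vertex, at which it reverses; thus the number of $c$-vertices in a row equals the number of reversals. As the rightmost (boundary) edge points inward and the leftmost orientation is fixed by the corner type, the parity of the number of $c$-vertices in a row equals the indicator that the corner of that row is either a turn corner with outgoing right-edge or a both-outgoing corner. In particular every row whose corner is both-outgoing contains an odd, hence nonzero, number of $c$-vertices; as there are $N_{\mathrm{out}}=(n-\mu(A))/2$ such rows, summing over rows gives $\nu(A)\geqslant (n-\mu(A))/2$.

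The upper bound $\nu(A)\leqslant n(n'-1)$ is the main obstacle. I would first record a clean reduction: the frozen medians \eqref{eqn:TSASMMedians} and the total symmetry show that the number of nonzero entries of the top-left $N\times N$ corner block equals $\mu(A)+2\nu(A)$, while the palindromicity of the rows and columns forces every row sum and column sum of that block to be $1$ if the index is even and $0$ if it is odd; counting signs then gives $\mu(A)+2\nu(A)=n+2m_-$, where $m_-$ is the number of entries $-1$ in the block, whence $\nu(A)=N_{\mathrm{out}}+m_-$. The desired inequality thus reads $N_{\mathrm{out}}+m_-\leqslant n(n'-1)$, with equality for the diamond $A^\diamond$ of \cref{example:TSASM} (for which $N_{\mathrm{out}}=0$ and $m_-=n(n'-1)$). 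The difficulty is that this is an extremal-density statement: flux and parity control only \emph{net} quantities and cannot bound the \emph{total} number of $c$-vertices, so one must genuinely show that $A^\diamond$ maximises $N_{\mathrm{out}}+m_-$. I expect the cleanest route is through the height-function representation, in which the $c$-vertices are the interior local extrema and their number is controlled by the fixed height profile along the triangular boundary; alternatively one may try to import the known maximal number of $-1$'s in an alternating sign matrix, but then the trade-off between $m_-$ and $N_{\mathrm{out}}$, both of which decrease as one moves away from the diamond, must be tracked carefully, and making this trade-off precise is the crux of the proof.
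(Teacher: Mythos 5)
Your first two parts are correct and coincide with the paper's own proof: your discrete divergence argument giving $N_{\mathrm{in}}-N_{\mathrm{out}}=n$, hence $\mu(A)=n-2N_{\mathrm{out}}$, is exactly the paper's in-/out-degree count over the internal vertices, and your row-reversal argument (each row whose corner is both-outgoing must contain an odd, hence positive, number of $c$-vertices) is exactly how the paper obtains $\nu(A)\geqslant (n-\mu(A))/2$.

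The genuine gap is the upper bound $\nu(A)\leqslant n(n'-1)$, which you explicitly leave unproven. Your reduction $\nu(A)=N_{\mathrm{out}}+m_-$ is correct, but neither of the routes you sketch (height functions, or importing bounds on the number of $-1$'s with a trade-off analysis) is carried out, and you yourself flag the trade-off as ``the crux''. The anticipated difficulty is, however, avoidable: the bound localises column by column, so no extremal or trade-off argument is needed. Write $\nu_i(A)$ for the number of nonzero below-diagonal entries in the $i$-th column of the triangular array, so that $\nu(A)=\sum_{i=1}^{2n}\nu_i(A)$. The triangular shape gives $\nu_i(A)\leqslant i-1$. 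On the other hand, the $i$-th array column sits inside column $N+1+i$ of the full ASM, whose total number of nonzero entries is at most $2\min(N+1+i,\,N+1-i)-1=2(N+1-i)-1$ by a known lemma on ASM columns \cite[Lemma 2.1]{brualdi:13}; since that column is symmetric about its central entry, which is nonzero by \eqref{eqn:TSASMMedians}, at most $N-i$ of its nonzero entries lie in rows $1,\dots,N$, whence $\nu_i(A)\leqslant N-i$. Summing $\min(i-1,N-i)$ over $i=1,\dots,2n$ gives exactly $n(n'-1)$. The diamond $A^\diamond$ of \cref{example:TSASM} simply saturates every per-column bound simultaneously, which is why it is extremal --- one never has to prove that it maximises any global functional such as $N_{\mathrm{out}}+m_-$. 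If you insist on your own reduction, you would still need an input of precisely this per-column type; as you correctly observe, the flux and parity identities alone cannot supply it.
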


\begin{proof}
  Let $C\in \mathrm{6V}^\pm_n$ be the six-vertex model configuration associated with $A$ through the bijection \eqref{eqn:6VTSASM}, where the superscript $+$ corresponds to odd $N$ and $-$ corresponds to even $N$.

We write
  \begin{equation}
\gamma \left(\begin{tikzpicture}[baseline=-.25cm]\drawcornervertex{1}{0}{0}\end{tikzpicture}\right), 
\gamma \left(\begin{tikzpicture}[baseline=-.25cm]\drawcornervertex{2}{0}{0}\end{tikzpicture}\right), 
\gamma \left(\begin{tikzpicture}[baseline=-.25cm]\drawcornervertex{3}{0}{0}\end{tikzpicture}\right),
\gamma \left(\begin{tikzpicture}[baseline=-.25cm]\drawcornervertex{4}{0}{0}\end{tikzpicture}\right)
  \end{equation}
for the total number of occurrences in $C$ of the four possible local configurations at the corner vertices. Clearly, their sum equals the number of corner vertices:
  \begin{equation}
   \label{eqn:GammaIdentity1}
\gamma \left(\begin{tikzpicture}[baseline=-.25cm]\drawcornervertex{1}{0}{0}\end{tikzpicture}\right)+
\gamma \left(\begin{tikzpicture}[baseline=-.25cm]\drawcornervertex{2}{0}{0}\end{tikzpicture}\right)+
\gamma \left(\begin{tikzpicture}[baseline=-.25cm]\drawcornervertex{3}{0}{0}\end{tikzpicture}\right)+
\gamma \left(\begin{tikzpicture}[baseline=-.25cm]\drawcornervertex{4}{0}{0}\end{tikzpicture}\right)= 2n.
  \end{equation}

Recall that for any oriented graph, the sum of the in-degrees over all vertices equals the sum of the out-degrees over all vertices, and both are equal to the number of edges. Any local configuration at the bulk vertices and the local configurations
  \begin{equation}
    \begin{tikzpicture}[baseline=-.25cm]\drawcornervertex{1}{0}{0}\end{tikzpicture},
    \quad \begin{tikzpicture}[baseline=-.25cm]\drawcornervertex{2}{0}{0}\end{tikzpicture},
  \end{equation}
  at the corner vertices have equal in-degree and out-degree. Therefore, their contributions to both sides of the equality are the same and can thus be removed. Considering the contributions of the remaining corner and boundary vertices of $C$, we obtain the identity
  \begin{equation}
    2\left(\gamma\left(\begin{tikzpicture}[baseline=-.25cm]\drawcornervertex{3}{0}{0}\end{tikzpicture}
    \right)-\gamma \left(\begin{tikzpicture}[baseline=-.25cm]\drawcornervertex{4}{0}{0}\end{tikzpicture}\right)
    \right) = 
    2n.
    \label{eqn:GammaIdentity2}
  \end{equation}

From \eqref{eqn:6VTSASM}, we infer 
  \begin{equation}
    \mu(A) = \gamma \left(\begin{tikzpicture}[baseline=-.25cm]\drawcornervertex{1}{0}{0}\end{tikzpicture}\right)+ \gamma \left(\begin{tikzpicture}[baseline=-.25cm]\drawcornervertex{2}{0}{0}\end{tikzpicture}\right).
  \end{equation}
  We combine this identification with the identities \eqref{eqn:GammaIdentity1} and \eqref{eqn:GammaIdentity2}, which leads to
  \begin{equation}
    \mu(A) = n - 2 \gamma \left(\begin{tikzpicture}[baseline=-.25cm]\drawcornervertex{4}{0}{0}\end{tikzpicture}\right),
\label{eqn:GammaIdentity3}
  \end{equation}
  This equality straightforwardly implies $\mu(A) \leqslant n$ and $\mu(A) \equiv n \mod 2$.


To obtain a lower bound on $\nu(A)$, we focus on the horizontal lines formed by the horizontal edges of the graph \eqref{eqn:TriangularGraph}. The boundary conditions impose that their right boundary edges are oriented leftward. As for the leftmost edge, it is oriented rightward if the corner vertex has the configuration
  \begin{equation}
    \begin{tikzpicture}[baseline=-.25cm]\drawcornervertex{1}{0}{0}\end{tikzpicture},
    \quad \begin{tikzpicture}[baseline=-.25cm]\drawcornervertex{4}{0}{0}\end{tikzpicture}.
  \end{equation}
Therefore, a necessary condition for the ice rule to be respected is that
\begin{equation}
\nu(A) \geqslant \gamma\left(\begin{tikzpicture}[baseline=-.25cm]\drawcornervertex{1}{0}{0}\end{tikzpicture}\right) +
\gamma\left(\begin{tikzpicture}[baseline=-.25cm]\drawcornervertex{4}{0}{0}\end{tikzpicture}\right) \geqslant \gamma\left(\begin{tikzpicture}[baseline=-.25cm]\drawcornervertex{4}{0}{0}\end{tikzpicture}\right).
\end{equation}
Combining this inequality with \eqref{eqn:GammaIdentity3}, we obtain $ \frac{n-\mu(A)}{2} \leqslant \nu(A)$.


Finally, to find the upper bound on $\nu(A)$, we define $\nu_i(A)$ as the number of non-zero entries below the diagonal in the $i$-th column of the triangular array associated with $A$, for each $i=1,\dots,2n$. Clearly, we have
\begin{equation}
  \nu(A) = \sum_{i=1}^{2n}\nu_i(A).
  \label{eqn:NuSum}
\end{equation}
An obvious upper bound is $\nu_i(A) \leqslant i-1$. This estimate can be sharpened using the fact that the number of non-zero entries in the $j$-th column of $A$ is at most $2\min(j,2N+2-j)-1$ \cite[Lemma 2.1]{brualdi:13}. Setting $j=N+1+i$ and exploiting the symmetries \eqref{eqn:TSASMSymmetry}, we infer $\nu_i(A) \leqslant N-i.$ Combining the two bounds, we find
\begin{equation}
\nu_i(A) \leqslant \min(i-1,N-i).
\label{eqn:GammaIdentity4}
\end{equation}
Hence, by \eqref{eqn:NuSum}, we obtain
\begin{equation}
\nu(A) \leqslant\sum_{i=1}^{2n} \min(i-1,N-i) = n(n'-1),
\end{equation}
which concludes the proof.
\end{proof}

Note that this proposition bounds the degrees of $A_{\mathrm{TS}}(2N+1;t,\tau)$ as a polynomial in $t$ and $\tau$. 
The maximal degrees in both $\tau$ and $t$ are attained by the TSASM $A^\diamond$, introduced in \cref{example:TSASM}, for which we have $\mu(A^\diamond)=n$ and $\nu(A^\diamond) = n(n'-1)$, for each $N\geqslant 0$. Moreover, we immediately infer the parity of $A_{\mathrm{TS}}(2N+1;t,\tau)$ as a function of $t$:
\begin{corollary}
  \label{cor:ParityATS}
  $A_{\mathrm{TS}}(2N+1;-t,\tau)=(-1)^nA_{\mathrm{TS}}(2N+1;t,\tau)$.
\end{corollary}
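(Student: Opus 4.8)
The plan is to read this off directly from the definition \eqref{eqn:TSASMGF} together with the parity statement already obtained in \cref{prop:boundsDegree}. Starting from
\begin{equation*}
  A_{\mathrm{TS}}(2N+1;-t,\tau) = \sum_{A \in \mathrm{TSASM}(2N+1)}(-t)^{\mu(A)}\tau^{\nu(A)} = \sum_{A \in \mathrm{TSASM}(2N+1)}(-1)^{\mu(A)}\,t^{\mu(A)}\tau^{\nu(A)},
\end{equation*}
I would observe that each summand carries the sign factor $(-1)^{\mu(A)}$, where the exponent varies with $A$ a priori. The point is that \cref{prop:boundsDegree} asserts $\mu(A) \equiv n \bmod 2$ for \emph{every} $A \in \mathrm{TSASM}(2N+1)$, so that $(-1)^{\mu(A)} = (-1)^n$ holds uniformly across all terms of the sum.

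With this uniform sign in hand, I would pull the common factor $(-1)^n$ out of the sum,
\begin{equation*}
  A_{\mathrm{TS}}(2N+1;-t,\tau) = (-1)^n\sum_{A \in \mathrm{TSASM}(2N+1)}t^{\mu(A)}\tau^{\nu(A)} = (-1)^n A_{\mathrm{TS}}(2N+1;t,\tau),
\end{equation*}
which is precisely the claimed identity. There is essentially no obstacle here: the entire content of the corollary rests on the congruence $\mu(A) \equiv n \bmod 2$, and this has already been proved via the in-degree/out-degree counting argument leading to \eqref{eqn:GammaIdentity3}, namely $\mu(A) = n - 2\gamma(\begin{tikzpicture}[baseline=-.25cm]\drawcornervertex{4}{0}{0}\end{tikzpicture})$. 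The only thing to verify is that the substitution $t \mapsto -t$ indeed commutes with the finite sum, which is immediate since the generating function is a genuine polynomial with finitely many terms.
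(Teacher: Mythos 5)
Your proposal is correct and is precisely the argument the paper intends: the corollary is stated as an immediate consequence of the congruence $\mu(A)\equiv n \bmod 2$ established in \cref{prop:boundsDegree}, and pulling the uniform sign $(-1)^{\mu(A)}=(-1)^n$ out of the finite sum \eqref{eqn:TSASMGF} is exactly that deduction. Nothing is missing.
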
 
This parity property implies $A_{\mathrm{TS}}(8k+5;0,\tau) = A_{\mathrm{TS}}(8k+7;0,\tau)=0$ for each integer $k\geqslant 0$. Hence, there are no TSASMs of order $8k+5$ or $8k+7$ without non-zero entries along their (anti)diagonals, which is a well-known property \cite{okada:06}.

\subsection{Weights and the partition function}
\label{sec:Weights}

To compute the TSASM generating function using the bijection defined above, we assign a weight $\WW(C)$ to each configuration $C\in \mathrm{6V}^{\bm \alpha}_n$. 

\subsubsection{Definition} The weight depends on $2n$ complex numbers $z_1,\dots,z_{2n}\in \mathbb C^\times$, called the inhomogeneity parameters, and on two parameters $t\in \mathbb C,\,q\in \mathbb C^\times$ with the restriction $q^4\neq 1$. The inhomogeneity parameters are assigned to the horizontal and vertical lines of the graph \eqref{eqn:TriangularGraph} as follows:
\begin{equation}
  \begin{tikzpicture}[scale=.75,baseline=2cm]
    \foreach \i in {1,2,3}
    {
      \draw (\i cm,0) -- (\i cm,\i cm) -- (3,\i cm);
      \draw[dotted] (3,\i cm) -- (4,\i cm);
      \draw (4 cm,\i cm) -- (6,\i cm);
    }
    
    \foreach \i in {4,5}
    {
      \draw (\i cm,0) -- (\i cm,3);
      \draw[dotted] (\i cm,3) -- (\i cm,4);
      \draw (\i cm,4) -- (\i cm, \i cm) -- (6,\i cm);
    }
    
    \foreach \i in {1,2,...,5}
    {
      \foreach \j in {1,...,\i}
      {
        \fill (\i,\j) circle (1.5pt);
      }
    }
    
    \foreach \i in {1,2,...,5}
    {
      \fill (\i,0) circle (1.5pt);
      \fill (6,\i) circle (1.5pt);
    }
  
%
        
    \draw (1,-.5) node {\tiny $\bar z_1$};
    \draw (2,-.5) node {\tiny $\bar z_2$};
    \draw (3.5,-.5) node {$\cdots$};
    \draw (5,-.5) node {\tiny $\bar z_{2n}$};
    
    \draw (6.5,1) node {\tiny $z_1$};
    \draw (6.5,2) node {\tiny $z_2$};
    \draw (6.5,3.65) node {$\vdots$};
    \draw (6.5,5) node {\tiny $z_{2n}$};
  
  \end{tikzpicture}
\end{equation}

We first define the weights of the local configurations. The weights of the six local configurations at the bulk vertices are
\begin{subequations}
\label{eqn:BulkWeights}
\begin{align}
  \WW\Biggl(
  \tikz[baseline=-.1cm]{
    \drawvertex{1}{0}{0}
    \draw (0.7,0) node {\tiny $z_i$};
    \draw (0,-0.7) node {\tiny $\bar z_j$};
  }\hspace{-.15cm}
  \Biggr)& =\WW\Biggl(
  \tikz[baseline=-.1cm]{
    \drawvertex{2}{0}{0}
    \draw (0.7,0) node {\tiny $z_i$};
    \draw (0,-0.7) node {\tiny $\bar z_j$};
  }\hspace{-.15cm}
  \Biggr)=[q\bar z_i\bar z_j],\\
  \WW\Biggl(
  \tikz[baseline=-.1cm]{
    \drawvertex{3}{0}{0}
    \draw (0.7,0) node {\tiny $z_i$};
    \draw (0,-0.7) node {\tiny $\bar z_j$};
  }\hspace{-.15cm}
  \Biggr)&=\WW\Biggl(
  \tikz[baseline=-.1cm]{
    \drawvertex{4}{0}{0}
    \draw (0.7,0) node {\tiny $z_i$};
    \draw (0,-0.7) node {\tiny $\bar z_j$};
  }\hspace{-.15cm}
  \Biggr)=[q z_iz_j],
  \\
  \WW\Biggl(
  \tikz[baseline=-.1cm]{
    \drawvertex{5}{0}{0}
    \draw (0.7,0) node {\tiny $z_i$};
    \draw (0,-0.7) node {\tiny $\bar z_j$};
  }\hspace{-.15cm}
  \Biggr)&=\WW\Biggl(
  \tikz[baseline=-.1cm]{
    \drawvertex{6}{0}{0}
    \draw (0.7,0) node {\tiny $z_i$};
    \draw (0,-0.7) node {\tiny $\bar z_j$};
  }\hspace{-.15cm}
  \Biggr)=-[q^2],
\end{align}
\end{subequations}%
where we use the notation introduced in \eqref{eqn:NotationBrackets}. Moreover, it is often convenient to extend the definition of the weight to local configurations that do not respect the ice rule. We define their weight as zero. The weights of the local configurations at the corner vertices are
\begin{subequations}
\label{eqn:CornerWeights}
\begin{align}
   \WW  \Biggl(
  \begin{tikzpicture}[baseline=-.4cm]
    \fill (0,0) circle (1.5pt);
    \draw[postaction={on each segment={mid arrow}}] (0,-.5) -- (0,0) -- (.5,0);
    \draw (0,-.75) node {\tiny $\bar z_i$};
    \draw (.75,0) node {\tiny $z_i$};
  \end{tikzpicture}
  \hspace{-.1cm}
  \Biggr)
  &=
  \WW  \Biggl(
  \begin{tikzpicture}[baseline=-.4cm]
    \fill (0,0) circle (1.5pt);
    \draw[postaction={on each segment={mid arrow}}] (.5,0) -- (0,0) -- (0,-.5);
    \draw (0,-.75) node {\tiny $\bar z_i$};
    \draw (.75,0) node {\tiny $z_i$};
  \end{tikzpicture}
  \hspace{-.1cm}
  \Biggr)=t,\\
   \WW  \Biggl(
  \begin{tikzpicture}[baseline=-.4cm]
    \fill (0,0) circle (1.5pt);
    \draw[postaction={on each segment={mid arrow}}] (0,-.5) -- (0,0);
    \draw[postaction={on each segment={mid arrow}}] (.5,0) -- (0,0);
     \draw (0,-.75) node {\tiny $\bar z_i$};
    \draw (.75,0) node {\tiny $z_i$};
  \end{tikzpicture}
  \hspace{-.1cm}
  \Biggr)
  &=
  \WW  \Biggl(
  \begin{tikzpicture}[baseline=-.4cm]
    \fill (0,0) circle (1.5pt);
    \draw[postaction={on each segment={mid arrow}}] (0,0) -- (0,-.5);
    \draw[postaction={on each segment={mid arrow}}] (0,0) -- (.5,0);
    \draw (0,-.75) node {\tiny $\bar z_i$};
    \draw (.75,0) node {\tiny $z_i$};
  \end{tikzpicture}
  \hspace{-.1cm}
  \Biggr)=\{q^{1/2}z_i\}/\{q^{1/2}\}.
\end{align}
\end{subequations}%
Finally, no weight is assigned to the local configuration at the boundary vertices.

The weight of a configuration $C\in \mathrm{6V}^{\bm \alpha}$ is the product of the weights assigned to the local configurations of all its bulk or corner vertices. The partition function of the six-vertex model is obtained by summing the weights of all configurations:
\begin{equation}
  \label{eqn:PartitionFunction6V}
  \ZZ_{n}^{\bm \alpha}(z_1,\dots,z_{2n})=\sum_{C \in \mathrm{6V}^{\bm \alpha}_n} \WW(C)
 = \sum_{C \in \mathrm{6V}^{\bm \alpha}_n} \prod_{1\leqslant i \leqslant j \leqslant 2n}\WW (C_{ij}).
\end{equation}
In the following, we write $\mathbb Z_{n}^{\pm}(z_1,\dots,z_{2n})$ for $\mathbb Z_{n}^{\bm \alpha_{\pm}}(z_1,\dots,z_{2n})$. Moreover, when evaluating the weight of (a subgraph of) \eqref{eqn:TriangularGraph} with unoriented edges, the summation over all possible orientations of these edges is implied.

\begin{example}
  \label{ex:Zpm1}
  For $n=1$, we obtain
  \begin{align}
    \ZZ_n^+(z_1,z_2) &= \WW
    \Biggl(
    \begin{tikzpicture}[baseline=-.2cm]
      \draw[postaction={on each segment={mid arrow}}] (0,-.5) -- (0,0);
      \draw[postaction={on each segment={mid arrow}}] (.5,0) -- (.5,-.5);
      \draw[postaction={on each segment={mid arrow}}] (1,0) -- (.5,0);
      \draw[postaction={on each segment={mid arrow}}] (1,.5) -- (.5,.5);
      \fill (0,0) circle (1.5pt);
      \fill (.5,0) circle (1.5pt);
      \fill (.5,.5) circle (1.5pt);      
      \fill (1,.5) circle (1.5pt);
      \fill (1,0) circle (1.5pt);
      \fill (0,-.5) circle (1.5pt);
      \fill (.5,-.5) circle (1.5pt);
      \draw (0,0) -- (.5,0)--(.5,.5);
      \draw (0,-.5) node[below] {\tiny  $\bar z_1$};
      \draw (.5,-.5) node[below] {\tiny  $\bar z_2$};
      \draw (1,.5) node[right] {\tiny $z_2$};
      \draw (1,0) node[right] {\tiny $z_1$};
    \end{tikzpicture}
    \Biggr)
    =
    \WW
    \Biggl(
    \begin{tikzpicture}[baseline=-.2cm]
      \draw[postaction={on each segment={mid arrow}}] (0,-.5) -- (0,0);
      \draw[postaction={on each segment={mid arrow}}] (.5,0) -- (.5,-.5);
      \draw[postaction={on each segment={mid arrow}}] (1,0) -- (.5,0);
      \draw[postaction={on each segment={mid arrow}}] (1,.5) -- (.5,.5);
      \fill (0,0) circle (1.5pt);
      \fill (.5,0) circle (1.5pt);
      \fill (.5,.5) circle (1.5pt);      
      \fill (1,.5) circle (1.5pt);
      \fill (1,0) circle (1.5pt);
      \fill (0,-.5) circle (1.5pt);
      \fill (.5,-.5) circle (1.5pt);
      \draw[postaction={on each segment={mid arrow}}] (0,0) -- (.5,0)--(.5,.5);
      \draw (0,-.5) node[below] {\tiny  $\bar z_1$};
      \draw (.5,-.5) node[below] {\tiny  $\bar z_2$};
      \draw (1,.5) node[right] {\tiny $z_2$};
      \draw (1,0) node[right] {\tiny $z_1$};
    \end{tikzpicture}
    \Biggr)
    +
     \WW
    \Biggl(
    \begin{tikzpicture}[baseline=-.2cm]
      \draw[postaction={on each segment={mid arrow}}] (0,-.5) -- (0,0);
      \draw[postaction={on each segment={mid arrow}}] (.5,0) -- (.5,-.5);
      \draw[postaction={on each segment={mid arrow}}] (1,0) -- (.5,0);
      \draw[postaction={on each segment={mid arrow}}] (1,.5) -- (.5,.5);
      \fill (0,0) circle (1.5pt);
      \fill (.5,0) circle (1.5pt);
      \fill (.5,.5) circle (1.5pt);      
      \fill (1,.5) circle (1.5pt);
      \fill (1,0) circle (1.5pt);
      \fill (0,-.5) circle (1.5pt);
      \fill (.5,-.5) circle (1.5pt);
      \draw[postaction={on each segment={mid arrow}}] (.5,.5)-- (.5,0)--(0,0) ;
      \draw (0,-.5) node[below] {\tiny  $\bar z_1$};
      \draw (.5,-.5) node[below] {\tiny  $\bar z_2$};
      \draw (1,.5) node[right] {\tiny $z_2$};
      \draw (1,0) node[right] {\tiny $z_1$};
    \end{tikzpicture}
    \Biggr)\\
    &= -\frac{t[q^2]\{q^{1/2}z_2\}}{\{q^{1/2}\}}+\frac{t[q\bar z_1 \bar z_2]\{q^{1/2}z_1\}}{\{q^{1/2}\}} = -\frac{t[q z_1z_2]\{q^{3/2}\bar z_1\}}{\{q^{1/2}\}}.
  \end{align}
  and
  \begin{align}
    \ZZ_n^-(z_1,z_2) &= \WW
    \Biggl(
    \begin{tikzpicture}[baseline=-.2cm]
      \draw[postaction={on each segment={mid arrow}}] (0,0) -- (0,-.5);
      \draw[postaction={on each segment={mid arrow}}] (.5,-.5) -- (.5,0);
      \draw[postaction={on each segment={mid arrow}}] (1,0) -- (.5,0);
      \draw[postaction={on each segment={mid arrow}}] (1,.5) -- (.5,.5);
      \fill (0,0) circle (1.5pt);
      \fill (.5,0) circle (1.5pt);
      \fill (.5,.5) circle (1.5pt);      
      \fill (1,.5) circle (1.5pt);
      \fill (1,0) circle (1.5pt);
      \fill (0,-.5) circle (1.5pt);
      \fill (.5,-.5) circle (1.5pt);
      \draw (0,0) -- (.5,0)--(.5,.5);
      \draw (0,-.5) node[below] {\tiny  $\bar z_1$};
      \draw (.5,-.5) node[below] {\tiny  $\bar z_2$};
      \draw (1,.5) node[right] {\tiny $z_2$};
      \draw (1,0) node[right] {\tiny $z_1$};
    \end{tikzpicture}
    \Biggr)
    =
    \WW
    \Biggl(
    \begin{tikzpicture}[baseline=-.2cm]
      \draw[postaction={on each segment={mid arrow}}] (0,0) -- (0,-.5);
      \draw[postaction={on each segment={mid arrow}}] (.5,-.5) -- (.5,0);
      \draw[postaction={on each segment={mid arrow}}] (1,0) -- (.5,0);
      \draw[postaction={on each segment={mid arrow}}] (1,.5) -- (.5,.5);
      \fill (0,0) circle (1.5pt);
      \fill (.5,0) circle (1.5pt);
      \fill (.5,.5) circle (1.5pt);      
      \fill (1,.5) circle (1.5pt);
      \fill (1,0) circle (1.5pt);
      \fill (0,-.5) circle (1.5pt);
      \fill (.5,-.5) circle (1.5pt);
      \draw[postaction={on each segment={mid arrow}}] (.5,0)--(.5,.5);
      \draw[postaction={on each segment={mid arrow}}] (.5,0)--(0,0);
      \draw (0,-.5) node[below] {\tiny  $\bar z_1$};
      \draw (.5,-.5) node[below] {\tiny  $\bar z_2$};
      \draw (1,.5) node[right] {\tiny $z_2$};
      \draw (1,0) node[right] {\tiny $z_1$};
    \end{tikzpicture}
    \Biggr)=\frac{t[q z_1z_2]\{q^{1/2}z_2\}}{\{q^{1/2}\}}.
  \end{align}
\end{example}
As in \cref{sec:GeneralisedSumComps}, we refer to the point $z_1=\dots=z_{2n}=1$ as the homogeneous limit. In this limit, the partition functions $\ZZ_n^\pm$ yield the TSASM generating function, up to a factor:
\begin{lemma}
\label{lem:TSASMGFZ}
For each $N\geqslant 0$ and $\tau = -\{q\}$, we have
  \begin{equation}
  A_{\mathrm{TS}}(2N+1;t,\tau) =[q]^{-n(2n-1)}\times
  \begin{cases}
    \ZZ^-_n(1,\dots,1), & \text{even }N,\\
    \ZZ^+_n(1,\dots,1), &\text{odd }N.
  \end{cases}
  \end{equation}
  where $n$ is defined in \eqref{eqn:Defnnbar}.
\end{lemma}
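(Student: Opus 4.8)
The plan is to compute $\ZZ_n^\pm(1,\dots,1)$ configuration by configuration, using the bijection between $\mathrm{TSASM}(2N+1)$ and $\mathrm{6V}_n^-$ (for even $N$) or $\mathrm{6V}_n^+$ (for odd $N$) recorded in \eqref{eqn:6VTSASM}. Concretely, I would show that, in the homogeneous limit, the weight of the configuration $C$ associated with a TSASM $A$ equals $[q]^{n(2n-1)}\,t^{\mu(A)}\tau^{\nu(A)}$; summing this identity over all configurations then yields $\ZZ_n^\pm(1,\dots,1)=[q]^{n(2n-1)}A_{\mathrm{TS}}(2N+1;t,\tau)$, which is the statement after dividing by $[q]^{n(2n-1)}$. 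The cases $N=0,1$ are degenerate: there $n=0$, the graph \eqref{eqn:TriangularGraph} is empty, both partition functions equal the empty product $1$, and $A_{\mathrm{TS}}(1;t,\tau)=A_{\mathrm{TS}}(3;t,\tau)=1$, so the identity holds trivially. Hence I would assume $N\geqslant 2$, so that $n\geqslant 1$, in the main argument.

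First I would specialise the local weights \eqref{eqn:BulkWeights} and \eqref{eqn:CornerWeights} to $z_1=\dots=z_{2n}=1$. Since then $\bar z_i=1$, the four bulk configurations of types $1$--$4$ all collapse to the common value $[q]$, while the two bulk configurations of types $5,6$ take the value $-[q^2]$. At the corner vertices, the two configurations carrying the weight $t$ are unaffected by the specialisation, whereas the remaining two evaluate to $\{q^{1/2}\}/\{q^{1/2}\}=1$. Thus in the homogeneous limit the weight $\WW(C)$ is a product in which each bulk vertex contributes $[q]$ or $-[q^2]$, each corner vertex contributes $t$ or $1$, and nothing else enters.

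Next I would read off the exponents through \eqref{eqn:6VTSASM}. The non-zero entries strictly below the diagonal of the triangular array are exactly the bulk vertices of types $5,6$, so their number is $\nu(A)$; the non-zero entries on the diagonal are exactly the corner vertices carrying weight $t$, so their number is $\mu(A)$ (this is the identification already used in \cref{prop:boundsDegree}). The graph \eqref{eqn:TriangularGraph} has $\binom{2n}{2}=n(2n-1)$ bulk vertices in total, of which $\nu(A)$ are of type $5,6$ and the remaining $n(2n-1)-\nu(A)$ contribute $[q]$; the corner vertices contribute the factor $t^{\mu(A)}$, the others being $1$. Collecting the factors gives
\begin{equation*}
  \WW(C)=[q]^{\,n(2n-1)-\nu(A)}\,(-[q^2])^{\nu(A)}\,t^{\mu(A)}
  =[q]^{\,n(2n-1)}\left(\frac{-[q^2]}{[q]}\right)^{\!\nu(A)}t^{\mu(A)}.
\end{equation*}

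The only genuinely algebraic step, and the sole place where the hypothesis $\tau=-\{q\}$ intervenes, is the simplification of this ratio: factoring $[q^2]=q^2-q^{-2}=(q-q^{-1})(q+q^{-1})=[q]\{q\}$ gives $-[q^2]/[q]=-\{q\}=\tau$. Hence $\WW(C)=[q]^{n(2n-1)}t^{\mu(A)}\tau^{\nu(A)}$, and summing over $C\in\mathrm{6V}_n^\pm$ — equivalently, over $A\in\mathrm{TSASM}(2N+1)$ via the bijection — produces $\ZZ_n^\pm(1,\dots,1)=[q]^{n(2n-1)}A_{\mathrm{TS}}(2N+1;t,\tau)$, as desired. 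I do not expect a serious obstacle: the argument is essentially bookkeeping, and the two points that require care are the correct count $n(2n-1)$ of bulk vertices and the factorisation $[q^2]=[q]\{q\}$, which is precisely what converts the bulk weight $-[q^2]$ into the generating-function variable $\tau$.
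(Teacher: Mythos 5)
Your proposal is correct and takes essentially the same route as the paper's own proof: specialise the local weights at $z_1=\dots=z_{2n}=1$, note that the graph has $n(2n-1)$ bulk vertices, use the bijection \eqref{eqn:6VTSASM} to identify the type-$5$/$6$ bulk vertices with the $\nu(A)$ non-zero entries below the diagonal and the weight-$t$ corner vertices with the $\mu(A)$ non-zero diagonal entries, and convert the bulk ratio via $-[q^2]/[q]=-\{q\}=\tau$. Your separate treatment of the degenerate cases $N=0,1$ and the explicit factorisation $[q^2]=[q]\{q\}$ are harmless additions, not differences of method.
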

\begin{proof}
  For each $C \in \mathrm{6V}^\pm_n$, let $\tilde\mu(C)$ be the number of corner vertices with the local configuration 
  \begin{equation}
    \begin{tikzpicture}[baseline=-.2cm]
    \fill (0,0) circle (1.5pt);
    \draw[postaction={on each segment={mid arrow}}] (0,-.5) -- (0,0) -- (.5,0);
    \end{tikzpicture}
    \quad 
    \mathrm{or}
    \quad
    \begin{tikzpicture}[baseline=-.2cm]
    \fill (0,0) circle (1.5pt);
    \draw[postaction={on each segment={mid arrow}}] (.5,0) -- (0,0) -- (0,-.5);
    \end{tikzpicture},
  \end{equation}
  and $\tilde \nu(C)$ the number of bulk vertices with the local configurations
  \begin{equation}
    \begin{tikzpicture}[baseline=0cm]
       \drawvertex{5}{0}{0}
    \end{tikzpicture}
       \quad 
       \mathrm{or}
       \quad
    \begin{tikzpicture}[baseline=0cm]
       \drawvertex{6}{0}{0}
    \end{tikzpicture}
    .
  \end{equation}
  The weights \eqref{eqn:BulkWeights} and \eqref{eqn:CornerWeights} straightforwardly lead to the following expression for homogeneous limit of the partition function (note that there are $2n\times (2n-1)/2 = n(2n-1)$ bulk vertices):
  \begin{equation}
    \ZZ_n^\pm(1,\dots,1) = [q]^{n(2n-1)}\sum_{C\in \mathrm{6V}^\pm_n} t^{\tilde \mu(C)}\tau^{\tilde \nu(C)}.
  \end{equation}
  Here, $\tau = -[q^2]/[q] = -\{q\}$. Using the bijection of \cref{sec:6VTriangular}, to each $C\in \mathrm{6V}^-_n$ or $C\in \mathrm{6V}^+_n$ corresponds a unique $A\in \mathrm{TSASM}(2N+1)$ for even $N$ or odd $N$, respectively. By \eqref{eqn:6VTSASM}, we have
  \begin{equation}
    \tilde \mu(C) = \mu(A),\quad \tilde \nu(C) = \nu(A).
  \end{equation}
  The proposition follows from \eqref{eqn:TSASMGF}.
\end{proof}

\subsubsection{R- and K-matrices}

In the following, we choose to analyse the six-vertex model algebraically with the help of $R$- and $K$-matrices. Its $R$-matrix is a linear operator $\mathbb R(z) \in \mathrm{End}(\mathbb C^2 \otimes \mathbb C^2)$, depending on an indeterminate $z\in \mathbb C^\times$, whose matrix representation with respect to the standard basis of $\mathbb C^2\otimes \mathbb C^2$ is
\begin{equation}
  \RR(z)
  =
  \begin{pmatrix}
    [q\bar z] & 0 & 0 & 0\\
    0 & [qz] & -[q^2] & 0\\
    0 & -[q^2] & [qz] & 0\\
    0 & 0 & 0 & [q\bar z]
  \end{pmatrix}.
\end{equation}
We also work with a slightly modified version of the $R$-matrix. Let $P$ denote the permutation operator on $\mathbb C^2 \otimes \mathbb C^2$. It is the linear operator acting on the standard basis vectors as $P\ket{\alpha\beta} = \ket{\beta\alpha}$ for all $\alpha,\beta \in \{\uparrow,\downarrow\}$. We use it to define the $\check R$-matrix\footnote{Our definition of $\check{\RR}(z)$ differs from the usual conventions found in the literature, but greatly simplifies our presentation. }
\begin{equation}
  \check{\mathbb R}(z) = -P\mathbb R(-{\bar q}z).
\end{equation}
With this notation, the Yang-Baxter equation for the $R$-matrix takes the form
\begin{equation}
  \label{eqn:YBERR}
  \check{\mathbb R}_{1,2}(z\bar w)\mathbb R_{1,3}(z)\mathbb R_{2,3}(w) = \mathbb R_{1,3}(w)\mathbb R_{2,3}(z)\check{\mathbb R}_{1,2}(z\bar w)
\end{equation}
on $\mathbb C^2 \otimes \mathbb C^2\otimes \mathbb C^2$ for all $z,w\in \mathbb C^\times$.

The $K$-matrix is a linear operator $\mathbb K(z) \in \mathrm{End}(\mathbb C^2)$, depending on $z\in \mathbb C^\times$, whose matrix representation with respect to the standard basis is
\begin{equation}
  \KK(z) = 
  \begin{pmatrix}
    t & \{q^{1/2}z\}/\{q^{1/2}\}\\
    \{q^{1/2}z\}/\{q^{1/2}\} & t
  \end{pmatrix}.
\end{equation}
It obeys the boundary Yang-Baxter equation
\begin{equation}
  \label{eqn:bYBERR}
\check{\RR}_{1,2}(z{\bar w})\mathbb K_{1}(z)\mathbb R_{1,2}(z w)\mathbb K_{2}(w)=\mathbb K_{1}(w)\mathbb R_{1,2}(z w)\mathbb K_{2}(z)\check{\RR}_{1,2}(z{\bar w})
\end{equation}
on $\mathbb C^2 \otimes \mathbb C^2$, for all $z,w\in \mathbb C^\times$. 

To investigate the partition function \eqref{eqn:PartitionFunction6V}, we use the operator 
\begin{equation}
  \label{eqn:FactorisationM}
  \mathbb M_{1,\dots,2n}(z_1,\dots,z_{2n}) = \prod_{j=1}^{2n} {\mathbb M}^{(j)}_{j,\dots,2n}(z_1,\dots,z_{2n})
\end{equation}
on $(\mathbb C^2)^{\otimes 2n}$, depending on the model's inhomogeneity parameters. The factors of the product of the right-hand side are operators on $(\mathbb C^2)^{\otimes 2n}$ defined as
\begin{equation}
{\mathbb M}^{(j)}_{j,\dots,2n}(z_1,\dots,z_{2n})=\mathbb K_j(z_j)\prod_{k=j+1}^{2n} \mathbb R_{j,k}(z_jz_k).
\end{equation}
In the next lemma, we show that the matrix elements of this operator with respect to the canonical basis allow us to compute the partition function:
\begin{lemma}
  For each $n\geqslant 1$ and $\bm \alpha =\alpha_1\cdots \alpha_{2n}$ with $\alpha_i \in \{\uparrow,\downarrow\}$, we have
  \begin{equation}
    \ZZ^{\bm \alpha}_n(z_1,\dots,z_{2n}) = \langle \bm \alpha |\mathbb M_{1,\dots,2n}(z_1,\dots,z_{2n})\ket{\downarrow\cdots\downarrow}.
  \end{equation}
\end{lemma}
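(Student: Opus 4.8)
The plan is to read the right-hand side as a graphical sum over edge orientations that matches $\ZZ^{\bm\alpha}_n$ configuration by configuration. First I would fix a dictionary between the orientation of an edge of \eqref{eqn:TriangularGraph} and a basis vector of the attached $\mathbb C^2$: on a horizontal edge, associate $\ket{\uparrow}$ to a rightward arrow and $\ket{\downarrow}$ to a leftward one; on a vertical edge, $\ket{\uparrow}$ to an upward arrow and $\ket{\downarrow}$ to a downward one. With this dictionary I would establish the local weight correspondence. Viewing the bulk vertex $(i,j)$ with $i<j$ as the crossing of horizontal line $i$ (right edge as input, left edge as output) and vertical line $j$ (top edge input, bottom edge output), a direct inspection of the six admissible local configurations shows that the weights \eqref{eqn:BulkWeights} are exactly the six nonzero entries of $\RR(z_iz_j)$. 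Moreover, under the dictionary the six admissible orientations are precisely those realised by nonzero entries, so the ten vanishing entries of $\RR$ encode exactly the orientations forbidden by the ice rule. Similarly, reading the corner $(i,i)$ as the turn of line $i$ (right edge input, bottom edge output), the two corner weights \eqref{eqn:CornerWeights} coincide with the diagonal and off-diagonal entries of $\KK(z_i)$; all four entries are nonzero, reflecting that no ice rule constrains a degree-two corner.

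Next I would match the combinatorial structure of $\mathbb M_{1,\dots,2n}$ with the graph. By \eqref{eqn:FactorisationM}, the factor $\KK_j(z_j)$ occupies the corner $(j,j)$ and each $\RR_{j,k}(z_jz_k)$ with $k>j$ occupies the bulk vertex $(j,k)$; conversely every bulk vertex $(i,j)$ appears once, as $\RR_{i,j}(z_iz_j)$ inside $\mathbb M^{(i)}$, and every corner appears once. Thus $\mathbb M_{1,\dots,2n}$ carries exactly one operator per vertex, with the inhomogeneity demanded by the weights. The chosen ordering is compatible with traversing each line from its right end to its bottom end: along line $i$ one meets, in order, the crossings with lines $2n,\dots,i+1$ (the factors $\RR_{i,k}$ of $\mathbb M^{(i)}$, read from the right), then the corner ($\KK_i$), then the crossings with lines $i-1,\dots,1$ (the factors $\RR_{i',i}$ living in $\mathbb M^{(i')}$ with $i'<i$, which act later since $\mathbb M^{(i')}$ stands to the left of $\mathbb M^{(i)}$).

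I would then expand $\bra{\bm\alpha}\mathbb M_{1,\dots,2n}\ket{\downarrow\cdots\downarrow}$ by inserting a resolution of the identity of $(\mathbb C^2)^{\otimes 2n}$ between consecutive factors. Since each factor is the identity on the legs it does not touch, the state of leg $j$ changes only when a factor acting on it is applied, and the resulting sequence of states of leg $j$ is in bijection with the sequence of edges met along line $j$. Hence the intermediate summation variables are in one-to-one correspondence with the internal edges of \eqref{eqn:TriangularGraph}, the vector $\ket{\downarrow\cdots\downarrow}$ fixes every right boundary edge to be leftward, and the covector $\bra{\bm\alpha}$ fixes the bottom boundary edges to $\bm\alpha$. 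Each term of the expansion is a product of matrix entries which, by the weight correspondence, equals $\prod_{1\leqslant i\leqslant j\leqslant 2n}\WW(C_{ij})$ for the associated orientation $C$, while orientations violating the ice rule drop out because of the vanishing entries of $\RR$. Summing over the intermediate states therefore yields $\sum_{C\in\mathrm{6V}^{\bm\alpha}_n}\WW(C)=\ZZ^{\bm\alpha}_n$, as claimed.

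The step demanding the most care, and the main obstacle, is to make the bijection between intermediate states and internal edges watertight and to check that the fixed order of the factors is a consistent linearisation of the vertices: one must verify that, at the instant its operator acts, each vertex reads precisely the orientations of its incoming edges, so that no edge is used before being assigned. The per-line traversal above provides this, since the operator order simultaneously respects the traversal order of every line. For a fully rigorous account I would recast it as an induction on the number of lines, peeling off line $1$ through the factorisation $\mathbb M_{1,\dots,2n}=\mathbb M^{(1)}\,\mathbb M_{2,\dots,2n}(z_2,\dots,z_{2n})$ and identifying $\mathbb M_{2,\dots,2n}$ with the transfer operator of the triangular graph with one line fewer.
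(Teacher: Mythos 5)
Your proposal is correct and takes essentially the same approach as the paper: both arguments rest on identifying the (vanishing and non-vanishing) entries of $\mathbb R$ and $\mathbb K$ with the bulk and corner weights (the zero entries enforcing the ice rule), and on reading the factorisation \eqref{eqn:FactorisationM} as a transfer-operator decomposition of the triangular graph, with $\ket{\downarrow\cdots\downarrow}$ fixing the leftward orientation of the right boundary edges and $\bra{\bm \alpha}$ fixing the bottom boundary condition. In particular, the line-by-line induction you sketch at the end is exactly the paper's proof, which successively applies an identity expressing the action of each factor $\mathbb M^{(j)}$ on a vector whose $j$-th leg is $\ket{\downarrow}$ as the weighted sum over the configurations of row $j$ of the graph.
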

\begin{proof}
First, we note that
\begin{equation}
  \mathbb R(z_j{\bar z}_k)\ket{\gamma\delta} =\sum_{\alpha,\beta \in \{\uparrow,\downarrow\}} \WW
  \Biggl(
  \begin{tikzpicture}[scale=.5,baseline=-.075cm]
    \fill (0,0) circle (3pt);
    \draw (-1,0) -- (1,0);
    \draw (0,-1) -- (0,1);
     
    \fill[fill=white] (-.5,0) circle (7pt);
    \fill[fill=white] (.5,0) circle (7pt);
    \fill[fill=white] (0,-.5) circle (7pt);
    \fill[fill=white] (0,.5) circle (7pt);

    \draw (-.5,0) node {\tiny $\alpha$};    
    \draw (.5,0) node {\tiny $\gamma$};    
    \draw (0,-.5) node {\tiny $\beta$};    
    \draw (0,.5) node {\tiny $\delta$}; 
        
    \draw (.1,-1.35) node {\tiny $\bar z_k$};
    \draw (1.35,0) node {\tiny $z_j$};
  \end{tikzpicture}
  \hspace{-.2cm}
  \Biggr)\ket{\alpha\beta},
  \label{eqn:RRWeights}
\end{equation}
for all integers $1\leqslant j<k\leqslant 2n$ and $\gamma,\delta \in \{\uparrow,\downarrow\}$. Here and in the following, we identify the horizontal orientations $\rightarrow$ and $\leftarrow$ with $\uparrow$ and $\downarrow$, respectively. Similarly, 
  \begin{equation}
  \mathbb K(z_j)\ket{\beta}
  =
     \sum_{\alpha\in \{\uparrow,\downarrow\}} \WW  \Biggl(
  \begin{tikzpicture}[scale=.5,baseline=-.4cm]
    \fill (0,0) circle (3pt);
    \draw (0,-1) -- (0,0) -- (1,0);
    \fill[fill=white] (.5,0) circle (6pt);
    \fill[fill=white] (0,-.5) circle (6pt);
    \draw (0,-.5) node {\tiny $\alpha$};
    \draw (.5,0) node {\tiny $\beta$};
    \draw (1.3,0) node {\tiny $z_j$};
    \draw (.1,-1.3) node {\tiny $\bar z_j$};
  \end{tikzpicture}
  \hspace{-.1cm}
  \Biggr)\ket{\alpha},
  \label{eqn:KKWeights}
\end{equation}
for each $j=1,\dots,2n$ and $\beta \in \{\uparrow,\downarrow\}$.

 Second, let $1\leqslant j \leqslant 2n$ be an integer and $\delta_{j+1},\dots,\delta_{2n}\in \{\uparrow,\downarrow\}$. Using \eqref{eqn:RRWeights} and \eqref{eqn:KKWeights}, one readily obtains the identity
   \begin{multline}
 \mathbb M^{(j)}_{j,\dots,2n}(z_1,\dots,z_{2n}) \Bigl(\II^{\otimes (j-1)}\otimes \ket{\downarrow\delta_{j+1}\cdots \delta_{2n}}\Bigr) 
   \\
    = \II^{\otimes (j-1)}\otimes \sum_{\beta_j,\dots,\beta_{2n} \in \{\uparrow,\downarrow\}}
    \WW\Biggl(
    \begin{tikzpicture}[baseline=-.1cm]
      \draw (0,-.75) -- (0,0) -- (1.5,0);
      \draw[dotted] (1.5,0) -- (2.25,0);
      \draw (2.25,0) -- (3,0);
      \foreach \x in {.75,3}
      {
          \draw (\x,-.75) -- (\x,.75);
          \fill (\x,0) circle (1.5pt);
      }    
      \fill (0,0) circle (1.5pt);
      \foreach \x in {0,.75,3}
      {
         \fill[fill=white] (\x,-.375) circle (6pt);
         \fill[fill=white] (\x,.375) circle (6pt);
      }  
      \draw[postaction={on each segment={mid arrow}}] (3.75,0) --  (3,0); 
      \draw (0,-.375) node {\tiny $\beta_j$};
      \draw (.75,-.375) node {\tiny $\beta_{j+1}$};
      \draw (1.925,-.375) node {$\cdots$};
      \draw (3,-.375) node {\tiny $\beta_{2n}$};
      \draw (.75,.375) node {\tiny $\delta_{j+1}$};
      \draw (3,.375) node {\tiny $\delta_{2n}$};     
      \draw (1.925,.375) node {$\cdots$};
      \draw (0,-1) node {\tiny $\bar z_j$};     
      \draw (.75,-1) node {\tiny $\bar z_{j+1}$};     
      \draw (3,-1) node {\tiny $\bar z_{2n}$};     
      \draw (4.15,0) node {\tiny $z_j$};
      \draw (1.925,-1) node {$\cdots$};
      \end{tikzpicture}
    \Biggr)
\ket{\beta_{j}\beta_{j+1}\cdots \beta_{2n}}.
  \end{multline}
  (Recall that $\II$ denotes the identity operator on $\mathbb C^2$, i.e. the $2\times 2$ identity matrix.)
  A successive application of this identity allows us to obtain
  \begin{equation}
    \mathbb M_{1,\dots,2n}(z_1,\dots,z_{2n})\ket{\downarrow\cdots\downarrow} = \sum_{\beta_1,\dots,\beta_{2n}\in \{\uparrow,\downarrow\}} \ZZ_n^{\bm \beta}(z_1,\dots,z_n)|\bm \beta\rangle,
  \end{equation}
  where we used the abbreviation $\bm \beta = \beta_1\cdots \beta_{2n}$. The lemma follows from computing the dual pairing of $\bra{\bm \alpha}$ with both sides of this equality.
\end{proof}

\subsection{An overlap}
\label{sec:Overlap}

At present, we do not have an explicit formula for the partition function $\ZZ^{\bm \alpha}_n(z_1,\dots,z_{2n})$ with arbitrary $\bm \alpha$, even in the cases of interest $\bm \alpha = \bm \alpha^\pm$. However, by \cref{lem:TSASMGFZ}, computing the TSASM generating function only requires the knowledge of the homogeneous limit $\ZZ^{\pm}_n(1,\dots,1)$. To obtain this homogeneous limit, we study a linear combination of half-specialised partition functions $\ZZ^{\bm \alpha}_n(w_1,\bar w_1,\dots,w_{n},\bar w_n)$ with selected $\bm \alpha$. This linear combination exhibits a particularly rich structure that enables us to make further progress. For this purpose, we define the covector
\begin{equation}
  \bra{\nu(w)} = [{\bar q}bw]\bra{\uparrow\downarrow}+[qb{\bar w}]\bra{\downarrow\uparrow} \in (\mathbb C^2\otimes \mathbb C^2)^\ast,
  \label{eqn:DefNu}
\end{equation}
where $b\in \mathbb C^\times$ is an additional parameter. Moreover, for each $n\geqslant 1$, we use the abbreviation
\begin{equation}
  \bra{\nu_n(w_1,\dots,w_n)} = \bigotimes_{i=1}^n \bra{\nu(w_i)}.
\end{equation}
Using this covector, we define for each $n\geqslant 1$ the overlap
\begin{equation}
  \label{eqn:DefZZ}
  \ZZ_n(w_1,\dots,w_{n}) = \langle \nu_{n}(w_1,\dots,w_{n})|\mathbb M_{1,\dots,2n}(w_1,\bar{w}_1,\dots,w_n,\bar{w}_n)\ket{\downarrow\cdots\downarrow}.
\end{equation}
It follows from \eqref{eqn:DefNu} that this overlap is a linear combination of the half-specialised partition functions $\ZZ^{\bm \alpha}_n(w_1,\bar w_1,\dots,w_{n},\bar w_n)$, where $\bm \alpha$ runs over $2^n$ distinct spin configurations with $\alpha_{2i-1}\alpha_{2i}=\,\uparrow\downarrow$ or $\downarrow\uparrow$, for each $i=1,\dots,n$. Furthermore, we define $\ZZ_0=1$.

\begin{example}
 \label{ex:ZZ1}
 For $n=1$, using the explicit expressions of \cref{ex:Zpm1}, we obtain \begin{multline}
   \ZZ_1(w_1)  = [{\bar q}bw_1]\ZZ_1^+(w_1,\bar w_1)+[q b{\bar w_1}]\ZZ_1^-(w_1,\bar w_1) = t[q^{1/2}]\{bq^{-1/2}\}[q^2\bar{w}_1^2].
 \end{multline}
 For $n=2$, we have
 \begin{multline}
   \ZZ_2(w_1,w_2) = [{\bar q}bw_1][{\bar q}bw_2]\ZZ_2^{+}(w_1,\bar w_1,w_2,\bar w_2)+[qb{\bar w}_1][{\bar q}bw_2]\ZZ_2^{\downarrow\uparrow\uparrow\downarrow}(w_1,\bar w_1,w_2,\bar w_2)\\ + [{\bar q}bw_1][qb{\bar w}_2]\ZZ_2^{\uparrow\downarrow\downarrow\uparrow}(w_1,\bar w_1,w_2,\bar w_2)+[qb{\bar w}_1][qb{\bar w}_2]\ZZ_2^{-}(w_1,\bar w_1,w_2,\bar w_2) , \end{multline}
 whose explicit expression is, however, rather complicated.
\end{example}
We also note that $\ZZ_n$ can be identified with Kuperberg's partition function on a triangular graph with a UOS-boundary \cite{kuperberg:02}. However, Kuperberg assigns weights to the local configurations at the corner vertices that are independent of the inhomogeneity parameters. By contrast, some of our weights depend on these parameters and, thus, vary from one corner vertex to another.

We now examine several properties of $\ZZ_n$, closely following the structure of \cref{sec:PropGenSum}. 
\subsubsection{Symmetry and degree}
We use the following lemma to analyse the symmetries of $\ZZ_n$:
\begin{lemma}
  \label{lem:RM}
  For each $i=1,\dots, 2n-1$, we have 
  \begin{multline}
    \label{eqn:RM}
    {\check {\mathbb R}}_{i,i+1}(z_i\bar{z}_{i+1})\mathbb M_{1,\dots,2n}(\dots,z_i,z_{i+1},\dots)\\=  \mathbb M_{1,\dots,2n}(\dots,z_{i+1},z_{i},\dots){\check {\mathbb R}}_{i,i+1}(z_i\bar{z}_{i+1}).
  \end{multline}
\end{lemma}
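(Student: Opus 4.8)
The plan is to prove \eqref{eqn:RM} by a \emph{train} (or zipper) argument: transport the operator $\check{\mathbb R}_{i,i+1}(z_i\bar z_{i+1})$ from the left to the right through the ordered product $\mathbb M_{1,\dots,2n}=\mathbb M^{(1)}\cdots\mathbb M^{(2n)}$ defined in \eqref{eqn:FactorisationM} (with $\mathbb M^{(1)}$ leftmost), showing that each elementary passage swaps the arguments $z_i\leftrightarrow z_{i+1}$ while leaving $\check{\mathbb R}_{i,i+1}(z_i\bar z_{i+1})$ intact. I will use two facts. First, the explicit form of $\mathbb R(z)$ shows it is invariant under exchanging its two legs, $P\mathbb R(z)P=\mathbb R(z)$, so $\mathbb R_{a,b}(z)=\mathbb R_{b,a}(z)$ as operators. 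Second, any two elementary factors $\mathbb K_a(\cdot)$, $\mathbb R_{a,b}(\cdot)$ acting on disjoint tensor factors commute. Since $\check{\mathbb R}_{i,i+1}$ acts only on factors $i,i+1$, and $\mathbb M^{(j)}$ with $j>i+1$ acts only on factors $\geq j$, those trailing factors commute with $\check{\mathbb R}_{i,i+1}$ and are unaffected by the swap; hence it suffices to transport $\check{\mathbb R}_{i,i+1}$ through $\mathbb M^{(1)}\cdots\mathbb M^{(i+1)}$.

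Consider first a factor $\mathbb M^{(j)}=\mathbb K_j(z_j)\prod_{k>j}\mathbb R_{j,k}(z_jz_k)$ with $j<i$. All its constituents commute with $\check{\mathbb R}_{i,i+1}$ except the adjacent pair $\mathbb R_{j,i}(z_jz_i)\mathbb R_{j,i+1}(z_jz_{i+1})$ (the indices $k=i,i+1$ are consecutive in the product). Since the leg symmetry gives $\mathbb R_{j,i}=\mathbb R_{i,j}$ and $\mathbb R_{j,i+1}=\mathbb R_{i+1,j}$, the Yang--Baxter equation \eqref{eqn:YBERR} with $(1,2,3)\mapsto(i,i+1,j)$ and $z=z_jz_i$, $w=z_jz_{i+1}$ (so $z\bar w=z_i\bar z_{i+1}$) yields
\begin{equation}
\check{\mathbb R}_{i,i+1}(z_i\bar z_{i+1})\mathbb R_{j,i}(z_jz_i)\mathbb R_{j,i+1}(z_jz_{i+1})=\mathbb R_{j,i}(z_jz_{i+1})\mathbb R_{j,i+1}(z_jz_i)\check{\mathbb R}_{i,i+1}(z_i\bar z_{i+1}).
\end{equation}
Together with the commutations, this gives $\check{\mathbb R}_{i,i+1}(z_i\bar z_{i+1})\mathbb M^{(j)}(\dots,z_i,z_{i+1},\dots)=\mathbb M^{(j)}(\dots,z_{i+1},z_i,\dots)\check{\mathbb R}_{i,i+1}(z_i\bar z_{i+1})$ for each $j<i$.

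The core step is the passage through $\mathbb M^{(i)}\mathbb M^{(i+1)}$, where the boundary Yang--Baxter equation \eqref{eqn:bYBERR} enters. Here $\mathbb M^{(i)}=\mathbb K_i(z_i)\mathbb R_{i,i+1}(z_iz_{i+1})\prod_{k>i+1}\mathbb R_{i,k}(z_iz_k)$ and $\mathbb M^{(i+1)}=\mathbb K_{i+1}(z_{i+1})\prod_{k>i+1}\mathbb R_{i+1,k}(z_{i+1}z_k)$. As $\mathbb K_{i+1}(z_{i+1})$ acts only on factor $i+1$, it commutes with the tail $\prod_{k>i+1}\mathbb R_{i,k}(z_iz_k)$ and may be moved left, producing the head $\mathbb K_i(z_i)\mathbb R_{i,i+1}(z_iz_{i+1})\mathbb K_{i+1}(z_{i+1})$ followed by $\bigl[\prod_{k>i+1}\mathbb R_{i,k}(z_iz_k)\bigr]\bigl[\prod_{k>i+1}\mathbb R_{i+1,k}(z_{i+1}z_k)\bigr]$. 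Prepending $\check{\mathbb R}_{i,i+1}(z_i\bar z_{i+1})$ and applying \eqref{eqn:bYBERR} with $(1,2)\mapsto(i,i+1)$, $z=z_i$, $w=z_{i+1}$ turns the head into $\mathbb K_i(z_{i+1})\mathbb R_{i,i+1}(z_iz_{i+1})\mathbb K_{i+1}(z_i)\check{\mathbb R}_{i,i+1}(z_i\bar z_{i+1})$. It then remains to carry $\check{\mathbb R}_{i,i+1}$ through the tails: for each $k>i+1$ one commutes $\mathbb R_{i+1,k}(z_{i+1}z_k)$ leftward to stand next to $\mathbb R_{i,k}(z_iz_k)$ (all intervening factors act on disjoint spaces) and applies \eqref{eqn:YBERR} with $(1,2,3)\mapsto(i,i+1,k)$ and $z=z_iz_k$, $w=z_{i+1}z_k$, swapping $z_i\leftrightarrow z_{i+1}$ in that pair; undoing the commutations restores the standard tail order in the swapped arguments. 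Moving $\mathbb K_{i+1}(z_i)$ back to the right past $\prod_{k>i+1}\mathbb R_{i,k}(z_{i+1}z_k)$ reproduces exactly $[\mathbb M^{(i)}\mathbb M^{(i+1)}](\dots,z_{i+1},z_i,\dots)\check{\mathbb R}_{i,i+1}(z_i\bar z_{i+1})$.

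Finally, commuting $\check{\mathbb R}_{i,i+1}(z_i\bar z_{i+1})$ freely through the trailing factors $\mathbb M^{(j)}$, $j>i+1$, yields \eqref{eqn:RM}. I expect the main obstacle to be the bookkeeping in the core step: checking that the commutations used to isolate the head $\mathbb K_i\mathbb R_{i,i+1}\mathbb K_{i+1}$ and to pair each $\mathbb R_{i,k}$ with $\mathbb R_{i+1,k}$ only ever involve operators on disjoint tensor factors, and that after the boundary Yang--Baxter step every remaining factor carries precisely the argument dictated by the swap $z_i\leftrightarrow z_{i+1}$ — in particular the correct matching of spectral parameters $z,w$ to $z_i z_k, z_{i+1}z_k$ in each application of \eqref{eqn:YBERR}.
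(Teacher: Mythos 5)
Your proposal is correct and takes essentially the same route as the paper's proof: the same factorisation \eqref{eqn:FactorisationM}, trivial commutation through the factors $\mathbb M^{(j)}$ acting on disjoint tensor legs, the Yang--Baxter equation \eqref{eqn:YBERR} for the factors with $j<i$, and the boundary Yang--Baxter equation \eqref{eqn:bYBERR} combined with \eqref{eqn:YBERR} for the joint passage through $\mathbb M^{(i)}\mathbb M^{(i+1)}$. The only differences are presentational: you make explicit the leg symmetry $P\mathbb R(z)P=\mathbb R(z)$ and the interleaving bookkeeping of the tails, which the paper uses implicitly.
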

\begin{proof}
Consider the left-hand side of \eqref{eqn:RM} and recall the factorisation \eqref{eqn:FactorisationM}. We commute $\check {\mathbb R}_{i,i+1}(z_i\bar{z}_{i+1})$ through the factors $\mathbb M^{(j)}_{j,\dots,2n}(z_1,\dots,z_{2n})$ for each $j=1,\dots,n$. We distinguish the following two cases.
\medskip

\noindent \textit{Case (i):} $i<j-1$ or $i>j$. In this case, we claim that
\begin{multline}
\label{eqn:RMM1}
{\check {\mathbb R}}_{i,i+1}(z_i\bar{z}_{i+1})\mathbb M^{(j)}_{j,\dots,2n}(\dots,z_i,z_{i+1},\dots)\\
=\mathbb M^{(j)}_{j,\dots,2n}(\dots,z_{i+1},z_i,\dots){\check {\mathbb R}}_{i,i+1}(z_i\bar{z}_{i+1}).
\end{multline}
For $i<j-1$, this relation holds trivially because both the $\check R$-matrix and the operator $\mathbb M^{(j)}_{j,\dots,2n}(\dots,z_i,z_{i+1},\dots)$ act non-trivially on disjoint tensor factors of $(\mathbb C^2)^{\otimes 2n}$ and, moreover, $\mathbb M^{(j)}_{j,\dots,2n}(\dots,z_i,z_{i+1},\dots)$ is independent of $z_i,z_{i+1}$. 

For $i>j$, we observe that ${\check {\mathbb R}}_{i,i+1}(z_i\bar{z}_{i+1})$ trivially commutes with the $K$-matrix and the first $i-j-1$ factors of the product in \eqref{eqn:FactorisationM}. Hence,
\begin{multline}
  \label{eqn:RMIntermediate2}
{\check {\mathbb R}}_{i,i+1}(z_i\bar{z}_{i+1})\mathbb M^{(j)}_{j,\dots,2n}(\dots,z_i,z_{i+1},\dots)=\mathbb K_j(z_j) \prod_{k=j+1}^{i-1}\mathbb R_{j,k}(z_jz_k)\\
\times \left({\check {\mathbb R}}_{i,i+1}(z_i\bar{z}_{i+1})\mathbb R_{j,i}(z_jz_i)\mathbb R_{j,i+1}(z_jz_{i+1})\right)\prod_{k=i+2}^{2n}\mathbb R_{j,k}(z_jz_k).
\end{multline}
We use the Yang-Baxter equation \eqref{eqn:YBERR} to rewrite
\begin{equation}
  {\check {\mathbb R}}_{i,i+1}(z_i\bar{z}_{i+1})\mathbb R_{j,i}(z_jz_i)\mathbb R_{j,i+1}(z_jz_{i+1})=\mathbb R_{j,i}(z_jz_{i+1})\mathbb R_{j,i+1}(z_jz_i){\check{\mathbb R}}_{i,i+1}(z_i\bar{z}_{i+1}).
\end{equation}
Next, we pull ${\check{\mathbb R}}_{i,i+1}(z_i\bar{z}_{i+1})$ through the rightmost product in \eqref{eqn:RMIntermediate2}, as the two operators act on different tensor factors of $(\mathbb C^2)^{\otimes 2n}$. Recomposing the different products leads to \eqref{eqn:RMM1}.

\medskip

\noindent \textit{Case (ii):} $i=j,j+1$. We have
\begin{multline}
  {\check {\mathbb R}}_{i,i+1}(z_i\bar{z}_{i+1})\mathbb M^{(i)}_{i,\dots,2n}(\dots,z_i,z_{i+1},\dots)\mathbb M^{(i+1)}_{i+1,\dots,2n}(\dots,z_i,z_{i+1},\dots)\\
  = {\check {\mathbb R}}_{i,i+1}(z_i\bar{z}_{i+1})\mathbb K_i(z_i){\mathbb R}_{i,i+1}(z_iz_{i+1})\mathbb K_{i+1}(z_{i+1})\prod_{k=i+2}^{2n} \mathbb R_{i,k}(z_iz_k)\mathbb R_{i+1,k}(z_{i+1}z_k).\nonumber
\end{multline}
We use the boundary Yang-Baxter equation \eqref{eqn:bYBERR} to write
\begin{multline}
  {\check {\mathbb R}}_{i,i+1}(z_i\bar{z}_{i+1})\mathbb K_i(z_i){\mathbb R}_{i,i+1}(z_iz_{i+1})\mathbb K_{i+1}(z_{i+1})\\
  = \mathbb K_i(z_{i+1}){\mathbb R}_{i,i+1}(z_iz_{i+1})\mathbb K_{i+1}(z_{i}){\check {\mathbb R}}_{i,i+1}(z_i\bar{z}_{i+1}).
\end{multline}
Moreover, by \eqref{eqn:YBERR}, we have
\begin{multline}
  {\check {\mathbb R}}_{i,i+1}(z_i\bar{z}_{i+1})\left( \prod_{k=i+2}^{2n} \mathbb R_{i,k}(z_iz_k)\mathbb R_{i+1,k}(z_{i+1}z_k)\right)\\
  = \left(\prod_{k=i+2}^{2n} \mathbb R_{i,k}(z_{i+1}z_k)\mathbb R_{i+1,k}(z_iz_k)\right){\check {\mathbb R}}_{i,i+1}(z_i\bar{z}_{i+1}).
\end{multline}
Combining these two observations, we obtain
\begin{multline}
  \label{eqn:RMM2}
  {\check {\mathbb R}}_{i,i+1}(z_i\bar{z}_{i+1})\mathbb M^{(i)}_{i,\dots,2n}(\dots,z_i,z_{i+1},\dots)\mathbb M^{(i+1)}_{i+1,\dots,2n}(\dots,z_i,z_{i+1},\dots)\\
  =  \mathbb M^{(i)}_{i,\dots,2n}(\dots,z_{i+1},z_i,\dots)\mathbb M^{(i+1)}_{i+1,\dots,2n}(\dots,z_{i+1},z_i,\dots){\check {\mathbb R}}_{i,i+1}(z_i\bar{z}_{i+1}).
\end{multline}
The lemma follows from combining \eqref{eqn:RMM1} and \eqref{eqn:RMM2}.
\end{proof}

\begin{proposition}
  \label{prop:ZZsym}
  For each $n\geqslant 2$, $\ZZ_{n}$ is a symmetric function of $w_1,\dots,w_n$.
\end{proposition}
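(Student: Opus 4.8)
The plan is to prove symmetry by establishing invariance under each adjacent transposition $w_i \leftrightarrow w_{i+1}$, since these generate the symmetric group on $w_1,\dots,w_n$. Fix $i$ and note that, under the half-specialisation \eqref{eqn:DefZZ}, interchanging $w_i$ and $w_{i+1}$ amounts to swapping the two consecutive argument blocks $(w_i,\bar w_i)$ and $(w_{i+1},\bar w_{i+1})$ of $\mathbb M_{1,\dots,2n}$, which occupy the tensor positions $2i-1,\dots,2i+2$. This block swap is a permutation of length four, so I realise it through four successive applications of \cref{lem:RM}. Composing the lemma along the reduced word that first exchanges the inner pair $(\bar w_i,w_{i+1})$, then the two outer pairs, and finally the inner pair again, I obtain an operator identity
\begin{equation*}
  \Pi\,\mathbb M_{1,\dots,2n}(\dots,w_i,\bar w_i,w_{i+1},\bar w_{i+1},\dots)
   = \mathbb M_{1,\dots,2n}(\dots,w_{i+1},\bar w_{i+1},w_i,\bar w_i,\dots)\,\Pi,
\end{equation*}
where $\Pi=\check{\mathbb R}_{2i,2i+1}(w_iw_{i+1})\check{\mathbb R}_{2i+1,2i+2}(\bar w_iw_{i+1})\check{\mathbb R}_{2i-1,2i}(w_i\bar w_{i+1})\check{\mathbb R}_{2i,2i+1}(\bar w_i\bar w_{i+1})$ is an explicit product of four $\check{\mathbb R}$-matrices acting on the four sites of the block.

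Next I record two facts about the boundary data. First, $\ket{\downarrow\cdots\downarrow}$ is a common eigenvector of every $\check{\mathbb R}$-matrix in $\Pi$: the fully polarised state is preserved by the $R$-matrix, and a short computation using $\check{\mathbb R}(z)=-P\mathbb R(-\bar q z)$ gives $\check{\mathbb R}(z)\ket{\downarrow\downarrow}=[q^2\bar z]\ket{\downarrow\downarrow}$. Hence $\Pi\ket{\downarrow\cdots\downarrow}=\kappa\ket{\downarrow\cdots\downarrow}$ with the scalar $\kappa=[q^2w_iw_{i+1}][q^2\bar w_iw_{i+1}][q^2w_i\bar w_{i+1}][q^2\bar w_i\bar w_{i+1}]$, which is manifestly symmetric in $w_i\leftrightarrow w_{i+1}$. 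Second—and this is the analogue of \eqref{eqn:bYBEChi} in the present setting—the covector $\bra{\nu_n}$ obeys the intertwining relation
\begin{equation*}
  \bra{\nu_n(\dots,w_{i+1},w_i,\dots)}\,\Pi=\kappa\,\bra{\nu_n(\dots,w_i,w_{i+1},\dots)},
\end{equation*}
equivalently $\bra{\nu_n(\dots,w_i,w_{i+1},\dots)}\Pi^{-1}=\kappa^{-1}\bra{\nu_n(\dots,w_{i+1},w_i,\dots)}$, with the \emph{same} scalar $\kappa$.

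Combining these ingredients closes the argument. Since $\check{\mathbb R}(z)$ is invertible for generic $z$ (its determinant is $[q^2\bar z]^2([q^2]^2-[z]^2)$), the operator $\Pi$ is invertible, so $\mathbb M_{1,\dots,2n}(\dots,w_i,\bar w_i,w_{i+1},\bar w_{i+1},\dots)=\Pi^{-1}\mathbb M_4\Pi$ with $\mathbb M_4$ the block-swapped operator. Sandwiching between $\bra{\nu_n(\dots,w_i,w_{i+1},\dots)}$ and $\ket{\downarrow\cdots\downarrow}$, I apply the eigenvector property on the right to turn $\Pi\ket{\downarrow\cdots\downarrow}$ into $\kappa\ket{\downarrow\cdots\downarrow}$ and the intertwining relation on the left to turn $\bra{\nu_n(\dots,w_i,w_{i+1},\dots)}\Pi^{-1}$ into $\kappa^{-1}\bra{\nu_n(\dots,w_{i+1},w_i,\dots)}$; the two factors of $\kappa$ cancel, and the right-hand side is precisely $\ZZ_n(\dots,w_{i+1},w_i,\dots)$. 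As this is an algebraic identity and $\ZZ_n$ is a Laurent polynomial in each $w_i$, equality on a dense set of parameters extends it to all values.

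The genuine content of the proof is the covector intertwining relation of the second paragraph; everything else is bookkeeping. It is an identity of covectors on $(\mathbb C^2)^{\otimes 4}$, and because the $\check{\mathbb R}$-matrices conserve the number of down spins while $\bra{\nu(w)}$ lies in the one-particle sector of each pair, it reduces to a check in the six-dimensional two-down-spin sector, which can be verified directly from the explicit $\check{\mathbb R}$-matrix and \eqref{eqn:DefNu}. More conceptually, I expect it to follow by iterating the Yang--Baxter equation \eqref{eqn:YBERR} together with the boundary Yang--Baxter equation \eqref{eqn:bYBERR}, the latter encoding the compatibility of the boundary covector $\bra{\nu}$ (whose parameter $b$ plays the role of a boundary field) with the $\check{\mathbb R}$-matrix. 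The step requiring the most care is matching the four spectral parameters $w_iw_{i+1},\bar w_iw_{i+1},w_i\bar w_{i+1},\bar w_i\bar w_{i+1}$ carried by $\Pi$ against those dictated by the reflection equation, and checking that the scalar produced there is exactly the eigenvalue $\kappa$ of the previous step, so that the final cancellation indeed yields symmetry rather than a spurious factor.
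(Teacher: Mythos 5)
Your proposal is correct and follows essentially the same route as the paper: your operator $\Pi$ coincides with the paper's $\mathbb O_{2i-1,\dots,2i+2}$ (the two middle factors act on disjoint sites and commute), your covector intertwining relation is exactly the paper's identity \eqref{eqn:bYBENu} with the same scalar $r(w_iw_{i+1})r(w_i\bar w_{i+1})=\kappa$, and the commutation with $\mathbb M$ and the eigenvector property of $\ket{\downarrow\cdots\downarrow}$ are used identically. The only cosmetic difference is that you invert $\Pi$ for generic parameters and extend by polynomiality, whereas the paper multiplies both sides of the identity by $\kappa$ and divides at the end using the fact that $\ZZ_n$ is a Laurent polynomial.
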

\begin{proof}
 The proof relies on the identity
 \begin{multline}
   \left(\bra{\nu(w)}\otimes \bra{\nu(z)}\right)\check \RR_{2,3}(zw) \check \RR_{1,2}(z\bar w)\check \RR_{3,4}(\bar z w)  \check \RR_{2,3}(\bar z \bar w) \\
   =r(zw)r(z\bar w)\bra{\nu(z)}\otimes \bra{\nu(w)},
   \label{eqn:bYBENu}
 \end{multline}
 where $r(z) = [q^2z][q^2{\bar z}]$, which can be checked by explicit calculation.
 
 Let $1\leqslant i \leqslant n-1$ be an integer. Generalising the product of $\check R$-matrices in \eqref{eqn:bYBENu}, we define the following operator on $(\mathbb C^2)^{\otimes 2n}$:
  \begin{multline}
    \mathbb O_{2i-1,\dots,2i+2} = \check{\RR}_{2i,2i+1}(w_iw_{i+1})\check{\RR}_{2i-1,2i}(w_i\bar w_{i+1})\\
    \times \check{\RR}_{2i+1,2i+2}(w_{i+1}\bar w_i)\check{\RR}_{2i,2i+1}(\bar w_i\bar w_{i+1}).
   \end{multline}
   We exploit three simple properties of this operator. First, \eqref{eqn:bYBENu} implies
   \begin{multline}
     \label{eqn:OInt1}
     r(w_i\bar w_{i+1})r(w_iw_{i+1})  \bra{\nu_n(\dots,w_i,w_{i+1},\dots)}\\=\bra{\nu_n(\dots,w_{i+1},w_i,\dots)} \mathbb O_{2i-1,\dots,2i+2}.
   \end{multline}
   It allows us to write
   \begin{multline}
     r(w_i\bar w_{i+1})r(w_iw_{i+1}) \ZZ_n(\dots,w_i,w_{i+1},\dots) \\
     = \bra{\nu_n(\dots,w_{i+1},w_i,\dots)}\mathbb O_{2i-1,\dots,2i+2}\mathbb M_{1,\dots,2n}(\dots,w_{i},\bar w_{i},w_{i+1},\bar w_{i+1},\dots)\ket{\downarrow\cdots\downarrow}.\nonumber
   \end{multline}
   Second, using \cref{lem:RM} repeatedly, we obtain the commutation relation
   \begin{multline}
     \label{eqn:OInt2}
     \mathbb O_{2i-1,\dots,2i+2}\mathbb M_{1,\dots,2n}(\dots,w_{i},\bar w_{i},w_{i+1},\bar w_{i+1},\dots)\\
     =\mathbb M_{1,\dots,2n}(\dots,w_{i+1},\bar w_{i+1},w_{i},\bar w_{i},\dots)\mathbb O_{2i-1,\dots,2i+2}.
   \end{multline} 
   It leads to 
   \begin{multline}
     r(w_i\bar w_{i+1})r(w_iw_{i+1}) \ZZ_n(\dots,w_i,w_{i+1},\dots) \\
     = \bra{\nu_n(\dots,w_{i+1},w_i,\dots)}\mathbb M_{1,\dots,2n}(\dots,w_{i+1},\bar w_{i+1},w_{i},\bar w_{i},\dots)\mathbb O_{2i-1,\dots,2i+2}\ket{\downarrow\cdots\downarrow}.\nonumber
   \end{multline}
   Third, one verifies $\mathbb O_{2i-1,\dots,2i+2}\ket{\downarrow\cdots\downarrow}=r(w_i\bar w_{i+1})r(w_iw_{i+1})\ket{\downarrow\cdots\downarrow}$. Hence, we obtain the equality
   \begin{multline}
   r(w_i\bar w_{i+1})r(w_iw_{i+1}) \ZZ_n(\dots,w_i,w_{i+1},\dots)\\=r(w_i\bar w_{i+1})r(w_iw_{i+1}) \ZZ_n(\dots,w_{i+1},w_i,\dots).
   \end{multline}
 Since $\ZZ_n(\dots,w_{i},w_{i+1},\dots)$ is a Laurent polynomial, we conclude
   \begin{equation}
     \ZZ_n(\dots,w_{i+1},w_{i},\dots)=\ZZ_n(\dots,w_{i},w_{i+1},\dots).
   \end{equation}
   This equality holds for each $i=1, \dots, n-1$, which straightforwardly implies the symmetry of $\ZZ_n$.
\end{proof}

\begin{proposition}
  \label{prop:ZZinv}
  For each $i=1,\dots,n$, we have
  \begin{equation}
    [q^2 w_i^2] {\ZZ}_n(\dots,w_i,\dots) = [q^2\bar{w}_i^2] {\ZZ}_n(\dots,\bar w_i,\dots).
  \end{equation}
\end{proposition}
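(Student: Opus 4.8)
The plan is to mirror the proof of the inversion relation \eqref{eqn:ZInversion} in \cref{prop:ZSymmetry2}: substituting $w_i\mapsto\bar w_i$ in the definition \eqref{eqn:DefZZ} exchanges the two half-specialised arguments $z_{2i-1}=w_i$ and $z_{2i}=\bar w_i$ of the operator $\mathbb M_{1,\dots,2n}$, and such an exchange is governed by \cref{lem:RM}. By \cref{prop:ZZsym} it suffices to treat a single index, say $i=n$, although the argument in fact applies verbatim to each $i=1,\dots,n$.

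First I would realise the exchange of $z_{2i-1}$ and $z_{2i}$ through conjugation by a $\check R$-matrix. Applying \cref{lem:RM} at the tensor factors $(2i-1,2i)$ with $z_{2i-1}=w_i$ and $z_{2i}=\bar w_i$, so that $z_{2i-1}\bar z_{2i}=w_i^2$, gives
\begin{equation*}
  \check{\RR}_{2i-1,2i}(w_i^2)\,\mathbb M_{1,\dots,2n}(\dots,w_i,\bar w_i,\dots)
  = \mathbb M_{1,\dots,2n}(\dots,\bar w_i,w_i,\dots)\,\check{\RR}_{2i-1,2i}(w_i^2).
\end{equation*}
Two short auxiliary computations then feed into this identity. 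On the one hand, from $\check{\RR}(z)=-P\mathbb R(-\bar q z)$ and the explicit form of $\mathbb R$, one reads off $\check{\RR}(z)\ket{\downarrow\downarrow}=[q^2\bar z]\ket{\downarrow\downarrow}$, so that $\check{\RR}_{2i-1,2i}(w_i^2)\ket{\downarrow\cdots\downarrow}=[q^2\bar w_i^2]\ket{\downarrow\cdots\downarrow}$. On the other hand, I would establish the covector intertwining relation
\begin{equation*}
  \bra{\nu(\bar w)}\check{\RR}(w^2)=[q^2w^2]\bra{\nu(w)},
\end{equation*}
which is the $\nu$-analogue of the $\chi$-relation invoked in the proof of \cref{prop:ZSymmetry2}.

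Granting these two facts, I would act with the displayed \cref{lem:RM} identity on $\ket{\downarrow\cdots\downarrow}$, pull out the eigenvalue $[q^2\bar w_i^2]$ on the right-hand side, and then contract on the left with $\bra{\nu_n(\dots,\bar w_i,\dots)}$. The local relation $\bra{\nu(\bar w_i)}\check{\RR}(w_i^2)=[q^2w_i^2]\bra{\nu(w_i)}$, applied to the $i$-th block of factors $(2i-1,2i)$ and leaving the others untouched, turns the left-hand side into $[q^2w_i^2]\,\ZZ_n(\dots,w_i,\dots)$, while the right-hand side equals $[q^2\bar w_i^2]\,\ZZ_n(\dots,\bar w_i,\dots)$. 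This is exactly the asserted equality.

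The only genuine obstacle is the covector relation. I expect to verify it by a direct computation with the bracket notation \eqref{eqn:NotationBrackets}: writing out both components of $\bra{\nu(\bar w)}\check{\RR}(w^2)$ in the basis $\{\bra{\uparrow\downarrow},\bra{\downarrow\uparrow}\}$ and reducing every product of brackets via $[a]\{b\}=[ab]+[a\bar b]$ and $[a][b]=\{ab\}-\{a\bar b\}$, both sides collapse to the same $\{\cdot\}$-expressions once one uses that $\{z\}$ is invariant under $z\mapsto\bar z$ (e.g.\ $\{\bar q b\bar w^3\}=\{q\bar b w^3\}$). This is routine, but it is where all of the $b$- and $q$-dependence must conspire, so it is the step I would carry out most carefully.
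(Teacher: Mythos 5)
Your proposal is correct and takes essentially the same route as the paper's proof: the same three ingredients — the commutation identity of \cref{lem:RM} applied at positions $(2i-1,2i)$ with argument $w_i^2$, the covector relation $\bra{\nu(\bar w)}\check\RR(w^2)=[q^2w^2]\bra{\nu(w)}$ verified by direct calculation, and the eigenvalue action $\check\RR(w_i^2)\ket{\downarrow\cdots\downarrow}=[q^2\bar w_i^2]\ket{\downarrow\cdots\downarrow}$ — with the reduction to a single index via \cref{prop:ZZsym}. The only difference is organisational (you sandwich the commutation identity between the covector and the vacuum vector, whereas the paper starts from $[q^2w_1^2]\ZZ_n$ and transforms it step by step), which is immaterial.
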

\begin{proof}
  By \cref{prop:ZZsym}, it is sufficient to consider $i=1$. We use the identity
  \begin{equation}
    \bra{\nu(\bar{z})}\check \RR(z^2) = [q^2 z^2]\bra{\nu(z)},
  \end{equation} 
  which can be checked by direct calculation. It allows us to write
  \begin{align}
    [q^2 w_i^2] \ZZ_n(\dots,w_i,\dots) = \bra{\nu_n(\bar{w}_1,\dots)}\check \RR_{1,2}(w_i^2)\mathbb M_{1,\dots,2n}(w_1,\bar{w}_1,\dots)\ket{\downarrow\cdots \downarrow}.
  \end{align}
  Next, we commute $\RR_{1,2}(w_i^2)$ to the right of $\mathbb M_{1,\dots,2n}(w_1,{\bar w}_1,\dots)$, using \cref{lem:RM}, which yields
  \begin{align}
    [q^2 w_i^2] Z_n(\dots,w_i,\dots) = \bra{\nu_n(\bar{w}_1,\dots)}\mathbb M_{1,\dots,2n}(\bar{w}_1,w_1,\dots)\check\RR_{1,2}(w_i^2)\ket{\downarrow\cdots \downarrow}.
  \end{align}
  The proposition follows from the action $\check \RR_{1,2}(w_i^2)\ket{\downarrow\downarrow\dots \downarrow} = [q^2\bar{w}_i^2]\ket{\downarrow\downarrow\dots \downarrow}$.
\end{proof}

\begin{proposition}
  \label{prop:ZZop}
  For each $i=1,\dots,n$, we have $\ZZ_n(\dots,-w_i,\dots) = \mathbb Z_n(\dots,w_i,\dots)$.
\end{proposition}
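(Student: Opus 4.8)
The plan is to mirror the proof of the reflection relation \eqref{eqn:ZReflection} in \cref{prop:ZSymmetry2}: I track how each of the three ingredients of the overlap \eqref{eqn:DefZZ} --- the covector $\bra{\nu_n}$, the operator $\mathbb M_{1,\dots,2n}$, and the reference vector $\ket{\downarrow\cdots\downarrow}$ --- transforms under $w_i \mapsto -w_i$, and then verify that the accumulated signs multiply to $+1$. By \cref{prop:ZZsym} it is enough to treat $i=1$. Throughout I abbreviate $\Sigma = \sigma^z_1\sigma^z_2$ and note that, since $\overline{(-w_1)} = -{\bar w}_1$, the substitution $w_1\mapsto -w_1$ flips the sign of \emph{both} arguments $z_1 = w_1$ and $z_2 = {\bar w}_1$ fed into $\mathbb M$.

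First I would record the elementary parity facts $[-z] = -[z]$ and $\{-z\}=-\{z\}$, immediate from \eqref{eqn:NotationBrackets}. These yield three transformation rules, each checked by inspecting matrix entries. From \eqref{eqn:DefNu} one gets $\bra{\nu(-w)} = -\bra{\nu(w)}$, so the covector alone contributes a factor $-1$. Since the off-diagonal entry of $\mathbb K$ is odd in $z$ while its diagonal is constant, $\mathbb K(-z) = \sigma^z\mathbb K(z)\sigma^z$. Finally, since the four diagonal entries of $\mathbb R$ are odd in $z$ while the off-diagonal $-[q^2]$ is even, one finds $\mathbb R(-y) = -\sigma^z_1\mathbb R(y)\sigma^z_1 = -\sigma^z_2\mathbb R(y)\sigma^z_2$; equivalently, conjugating $\mathbb R$ by $\sigma^z$ on a single leg produces $-\mathbb R(-y)$, whereas conjugating by $\sigma^z$ on both legs leaves $\mathbb R$ invariant.

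The crux is to propagate these rules through the factorised operator \eqref{eqn:FactorisationM} and establish
\begin{equation}
  \mathbb M_{1,\dots,2n}(-w_1,-{\bar w}_1,w_2,{\bar w}_2,\dots) = \Sigma\,\mathbb M_{1,\dots,2n}(w_1,{\bar w}_1,w_2,{\bar w}_2,\dots)\,\Sigma.
\end{equation}
Since $\Sigma^2 = \II$, I can insert $\Sigma\Sigma$ between the factors of $\mathbb M$ and conjugate each factor in isolation. The factors $\mathbb K_j$ and $\mathbb R_{j,k}$ not involving legs $1,2$ commute with $\Sigma$ and are unchanged, matching the substitution; the factors $\mathbb K_1,\mathbb K_2$ turn into $\mathbb K_1(-w_1),\mathbb K_2(-{\bar w}_1)$ by the $\mathbb K$-rule, again matching; and $\mathbb R_{1,2}(w_1{\bar w}_1)$ is conjugated on both legs, hence invariant, while its argument is invariant under the double sign flip. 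The only discrepancies come from the $\mathbb R_{j,k}$ with \emph{exactly one} leg in $\{1,2\}$: conjugation by the single relevant $\sigma^z$ reproduces the sign-flipped factor up to an overall $-1$. There are $2(2n-2)=4(n-1)$ such factors --- each of the two legs pairs with the $2n-2$ remaining lines --- an even number, so these signs cancel and the identity holds.

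Assembling the pieces with this identity gives $\ZZ_n(-w_1,\dots) = \bra{\nu_n(-w_1,\dots)}\,\Sigma\,\mathbb M_{1,\dots,2n}(w_1,{\bar w}_1,\dots)\,\Sigma\,\ket{\downarrow\cdots\downarrow}$. On the right, $\Sigma\ket{\downarrow\cdots\downarrow} = (-1)^2\ket{\downarrow\cdots\downarrow} = \ket{\downarrow\cdots\downarrow}$. On the left, $\Sigma$ acts on $\bra{\uparrow\downarrow}$ and $\bra{\downarrow\uparrow}$ by $-1$, so $\bra{\nu_n(-w_1,\dots)}\Sigma = (-1)(-1)\bra{\nu_n(w_1,\dots)}=\bra{\nu_n(w_1,\dots)}$, the first $-1$ coming from $\bra{\nu(-w_1)}=-\bra{\nu(w_1)}$ and the second from $\bra{\nu(w_1)}\Sigma=-\bra{\nu(w_1)}$. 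Collecting, the right-hand side equals $\ZZ_n(w_1,\dots)$, as claimed. I expect the main obstacle to be precisely the sign bookkeeping in the $\mathbb R$-matrix rule together with the parity count of the single-leg factors; once those are pinned down, the rest is routine.
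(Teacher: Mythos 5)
Your proof is correct and follows essentially the same route as the paper: reduce to $i=1$ via \cref{prop:ZZsym}, use the parity rules $\mathbb K(-z)=\sigma^z\mathbb K(z)\sigma^z$ and $\RR(-z)=-(\sigma^z\otimes\II)\RR(z)(\sigma^z\otimes\II)$ to obtain $\mathbb M_{1,\dots,2n}(-w_1,-\bar w_1,\dots)=\sigma^z_1\sigma^z_2\,\mathbb M_{1,\dots,2n}(w_1,\bar w_1,\dots)\,\sigma^z_1\sigma^z_2$, and conclude with $\bra{\nu(-w)}(\sigma^z\otimes\sigma^z)=\bra{\nu(w)}$ and the invariance of $\ket{\downarrow\cdots\downarrow}$. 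Your explicit count of the $4(n-1)$ single-leg $R$-matrix factors simply spells out the sign cancellation that the paper dismisses as straightforward.
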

\begin{proof}
  By \cref{prop:ZZsym}, it is sufficient to consider $i=1$. We note the properties \begin{equation}
    \mathbb K(-w) = \sigma^z \mathbb K(w)\sigma^z, \quad \RR(-z) = -(\sigma^z \otimes \II)\RR(z)(\sigma^z\otimes \II).
  \end{equation}
  They straightforwardly imply
  \begin{equation}
    {\mathbb M}_{1,\dots,2n}(-w_1,-\bar{w}_1,\dots) = \sigma_1^z\sigma_2^z {\mathbb M}_{1,\dots,2n}(w_1,\bar{w}_1,\dots)\sigma_1^z\sigma_2^z.
  \end{equation}
  Hence,
  \begin{equation}
    \ZZ_n(-w_1,\dots) = \bra{\nu_n(-w_1,\dots)}\sigma_1^z\sigma_2^z {\mathbb M}_{1,\dots,2n}(w_1,{\bar w}_1,\dots)\sigma_1^z\sigma_2^z\ket{\downarrow\downarrow\cdots \downarrow}.
  \end{equation}
  The proposition follows from the identity $\bra{\nu(-w)}(\sigma^z\otimes \sigma^z)= \bra{\nu(w)}$, which is immediate, and $\sigma_1^z\sigma_2^z\ket{\downarrow\downarrow\cdots \downarrow}=\ket{\downarrow\downarrow\cdots \downarrow}$.
\end{proof}

\Cref{prop:ZZsym,prop:ZZinv,prop:ZZop} justify defining
\begin{equation}
  \label{eqn:DefYY}
  \YY_n(w_1,\dots,w_n) = \frac{(-1)^{n(n+1)/2}}{[q^{1/2}]^n\prod_{i=1}^n [q^2\bar{w}_i^2]} \mathbb Z_n(w_1,\dots,w_n).
\end{equation}

\begin{example}
 \label{ex:YY1}
For $n=0$, we have $\YY_0=\ZZ_0=1$. For $n=1$, using the expression given in \cref{ex:ZZ1}, we have
\begin{equation}
\YY_1(w_1) = \frac{-1}{[q^{1/2}][q^2\bar w_1^2]}\ZZ_1(w_1) = -t\{bq^{-1/2}\}.
\end{equation}
\end{example}

\begin{proposition}
  \label{prop:YYBC}
 For each $n\geqslant 1$, $\YY_n$ is an even $BC_n$-symmetric Laurent polynomial in $w_1,\dots,w_n$. For each $i=1,\dots, n$, its degree width as a Laurent polynomial in $w_i$ is at most $8(n-1)$.
\end{proposition}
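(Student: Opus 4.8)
The plan is to reduce every assertion about $\YY_n$ to the corresponding property of $\ZZ_n$ proved in \cref{prop:ZZsym,prop:ZZinv,prop:ZZop}, exploiting that the prefactor in \eqref{eqn:DefYY} is independent of the $w_i$ apart from the symmetric product $\prod_{i=1}^n[q^2\bar w_i^2]$. First I would dispatch the three symmetry properties as identities of rational functions. The symmetry of $\YY_n$ is immediate, since both $\ZZ_n$ and this product are symmetric in $w_1,\dots,w_n$. For the inversion $w_i\mapsto\bar w_i$, the product acquires a factor $[q^2 w_i^2]/[q^2\bar w_i^2]$, which is cancelled exactly by the ratio produced by \cref{prop:ZZinv}; hence $\YY_n$ is inversion-invariant. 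Evenness holds because $[q^2\bar w_i^2]$ is unchanged under $w_i\mapsto -w_i$ while $\ZZ_n$ is even by \cref{prop:ZZop}. Together these show $\YY_n$ is even and $BC_n$-symmetric, \emph{provided} it is genuinely a Laurent polynomial.

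The substantive point is therefore polynomiality, i.e.\ that $\ZZ_n$ is divisible by $\prod_{i=1}^n[q^2\bar w_i^2]$. By the symmetry just established it suffices, for each $i$, to locate the four simple zeros $w_i\in\{q,-q,\ii q,-\ii q\}$ of $[q^2\bar w_i^2]$ among the zeros of $\ZZ_n$. Here I would specialise \cref{prop:ZZinv}: at $w_i=q$ one has $[q^2\bar w_i^2]=[1]=0$ while $[q^2 w_i^2]=[q^4]$, and at $w_i=\ii q$ one has $[q^2\bar w_i^2]=[-1]=0$ while $[q^2 w_i^2]=-[q^4]$; since $[q^4]\neq 0$ for generic $q$, the relation forces $\ZZ_n(\dots,q,\dots)=\ZZ_n(\dots,\ii q,\dots)=0$, and evenness supplies the zeros at $-q$ and $-\ii q$. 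As the four values are distinct and the factors for different $i$ involve distinct variables, $\ZZ_n$ is divisible by the full product, so $\YY_n$ is a Laurent polynomial; the passage from generic $q$ to all admissible $q$ is by continuity, all quantities being (Laurent-)polynomial in $q^{1/2},b,t$.

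For the degree bound I would again fix $i=1$ by symmetry and estimate the degree width of $\ZZ_n$ in $w_1$, since dividing by $[q^2\bar w_1^2]$ lowers it by $4$. The variable $w_1$ enters \eqref{eqn:DefZZ} only through $z_1=w_1$ and $z_2=\bar w_1$, namely through $\bra{\nu(w_1)}$, the boundary matrices $\mathbb K_1(w_1)$ and $\mathbb K_2(\bar w_1)$, and the $2(2n-2)$ matrices $\mathbb R_{1,k}(w_1 z_k)$ and $\mathbb R_{2,k}(\bar w_1 z_k)$ for $k=3,\dots,2n$, the factor $\mathbb R_{1,2}(w_1\bar w_1)=\mathbb R_{1,2}(1)$ being constant in $w_1$; each of these $1+1+1+(2n-2)+(2n-2)=4n-1$ factors has leading degree at most $1$ and trailing degree at least $-1$, so a naive count yields degree width at most $8n-2$. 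The main obstacle is precisely that this overshoots the required value $8n-4$ by exactly $2$, and the decisive extra input is \emph{parity}: by \cref{prop:ZZop}, $\ZZ_n$ is even in $w_1$, so its putative leading degree $4n-1$, being odd, cannot occur, whence the leading degree is at most $4n-2$ and, symmetrically, the trailing degree at least $-(4n-2)$. Thus $\ZZ_n$ has degree width at most $8n-4$ in $w_1$, and by \eqref{eqn:DefYY} the function $\YY_n$ has degree width at most $8n-4-4=8(n-1)$, as claimed. This cancellation of the spurious odd leading term by parity is the one step where more than bookkeeping is required; everything else is a direct transcription of \cref{prop:ZZsym,prop:ZZinv,prop:ZZop}.
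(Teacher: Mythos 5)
Your proposal is correct and follows essentially the same route as the paper: the symmetry and evenness properties are transcribed from \cref{prop:ZZsym,prop:ZZinv,prop:ZZop}, and the degree bound comes from the naive count of the $4n-1$ factors depending on $w_1$ in \eqref{eqn:DefZZ}, improved to $4n-2$ by parity and then reduced by the exact division by $[q^2\bar w_1^2]$. The only difference is that you spell out the divisibility of $\ZZ_n$ by $\prod_{i=1}^n[q^2\bar w_i^2]$ (via the zeros at $w_i=\pm q,\pm\ii q$ forced by \cref{prop:ZZinv,prop:ZZop} and a genericity-plus-continuity argument in $q$), a step the paper's proof subsumes under ``straightforwardly imply''; this is a faithful filling-in of the same argument rather than a different one.
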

\begin{proof}
  Clearly, $\ZZ_n$ is a Laurent polynomial in $w_1,\dots,w_n$. The definition \eqref{eqn:DefYY} and \cref{prop:ZZsym,prop:ZZinv,prop:ZZop} straightforwardly imply that $\YY_n$ is an even $BC_n$-symmetric Laurent polynomial in $w_1,\dots,w_n$.
  
  To find the degree width, we note that the leading term of $\ZZ_n$ as a Laurent polynomial in $w_1$ is of degree at most $4n-1$ by construction. This bound on the degree can be improved to $4n-2$, because $\ZZ_n$ is an even function of $w_1$ by \cref{prop:ZZop}. It follows from \eqref{eqn:DefYY}, that the leading term of $\YY_n$ as a Laurent polynomial in $w_1$ is of degree at most $4(n-1)$. The bound $8(n-1)$ for the degree width of $\YY_n$ in $w_1$, and consequently in each $w_i$, follows from the $BC_n$-symmetry.
  \end{proof}

\subsubsection{Reduction}
In this section, we establish two reduction relations for $\YY_n$.

\begin{proposition}
  \label{prop:YYRed1}
  For each $i=1,\dots,n$, we have
  \begin{multline}
  \YY_n(\dots,w_i= \ii q^{1/2},\dots) = (-1)^nt\{bq^{-1/2} \}\\
  \times \Biggl(\prod_{j\neq i}^n \{q^{3/2} w_j\}^2 \{q^{3/2}\bar w_i\}^2\Biggr) \YY_{n-1}(\dots,\widehat{w_i},\dots)
  \end{multline}
  where $\widehat{\cdots}$ denotes omission.
\end{proposition}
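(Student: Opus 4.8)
The plan is to prove the relation for $i=n$ only; the general case then follows from the symmetry of $\YY_n$ established in \cref{prop:YYBC}. Setting $w_n = \ii q^{1/2}$ gives $\bar w_n = -\ii q^{-1/2}$, and I would read off $\ZZ_n$ from \eqref{eqn:DefZZ} as the partition function on the triangular graph \eqref{eqn:TriangularGraph} whose bottom boundary on each pair of columns $(2i-1,2i)$ carries the two weights $[\bar q b w_i]$ (for the pattern $\uparrow\downarrow$) and $[q b \bar w_i]$ (for $\downarrow\uparrow$) dictated by \eqref{eqn:DefNu}. With this specialisation, the top two rows $2n-1,2n$ have spectral parameters $\ii q^{1/2},-\ii q^{-1/2}$ and the two rightmost columns $2n-1,2n$ have parameters $-\ii q^{-1/2},\ii q^{1/2}$.

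The heart of the argument is a freezing cascade triggered by the vanishing $\{q^{1/2}\bar w_n\}=\{-\ii\}=0$, which on the operator side is exactly $\KK_{2n}(\bar w_n)=t\,\II$. By \eqref{eqn:CornerWeights}, this vanishing kills the weight of the ``both-in/both-out'' configurations at the corner vertex $(2n,2n)$, forcing it into the unique local configuration compatible with its leftward right-boundary edge (weight $t$). Propagating the ice rule down column $2n$, where every vertex additionally receives an incoming arrow along its leftward right-boundary edge, forces all of column $2n$ into a single frozen configuration and fixes the bottom orientation there to $\downarrow$. Consequently only the $\bra{\uparrow\downarrow}$ part of $\bra{\nu(\ii q^{1/2})}$, with coefficient $[\bar q b\,\ii q^{1/2}]=\ii\{bq^{-1/2}\}$, survives, the $\bra{\downarrow\uparrow}$ part multiplying a partition function with incompatible bottom boundary (hence zero); this is precisely why $\{bq^{-1/2}\}$, and not $\{bq^{1/2}\}$, appears. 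The now-fixed orientation $\uparrow$ on column $2n-1$, together with the frozen column $2n$, then forces column $2n-1$ into its own frozen configuration up to and including the corner $(2n-1,2n-1)$. What remains is exactly the triangular graph on rows and columns $1,\dots,2n-2$ with the inherited leftward right-boundary and the bottom weights $\bra{\nu_{n-1}(w_1,\dots,w_{n-1})}$, i.e. $\ZZ_{n-1}(w_1,\dots,w_{n-1})$.

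It then remains to collect the product of the frozen weights. Using \eqref{eqn:BulkWeights} and \eqref{eqn:CornerWeights}, the frozen entries of columns $2n-1$ and $2n$ contribute, for each $k=1,\dots,n-1$, the factors $[q\bar z_{2k-1}\bar z_{2n}]=\ii\{q^{3/2}\bar w_k\}$, $[q\bar z_{2k}\bar z_{2n}]=\ii\{q^{3/2}w_k\}$ and $[qz_{2k-1}z_{2n-1}]=\ii\{q^{3/2}w_k\}$, $[qz_{2k}z_{2n-1}]=\ii\{q^{3/2}\bar w_k\}$, producing $\prod_{j=1}^{n-1}\{q^{3/2}w_j\}^2\{q^{3/2}\bar w_j\}^2$ up to sign, while the two corners, the constant weight $t$, the top bulk weight $[q]$ and the coefficient $\ii\{bq^{-1/2}\}$ give the remaining scalar. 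Assembling these yields $\ZZ_n(\dots,\ii q^{1/2})=-t\{bq^{-1/2}\}\frac{[q]^2}{\{q^{1/2}\}}\prod_{j=1}^{n-1}\{q^{3/2}w_j\}^2\{q^{3/2}\bar w_j\}^2\,\ZZ_{n-1}$. Passing to $\YY$ through \eqref{eqn:DefYY}, using $[q^2\bar w_n^2]=[-q]=-[q]$ together with the identity $[q^{1/2}]\{q^{1/2}\}=[q]$, all constant factors collapse and one is left with exactly $(-1)^n t\{bq^{-1/2}\}\prod_{j=1}^{n-1}\{q^{3/2}w_j\}^2\{q^{3/2}\bar w_j\}^2\,\YY_{n-1}$, as claimed; the base case $n=1$ is the direct check against \cref{ex:YY1}.

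The main obstacle I anticipate is the rigorous justification that the frozen region is genuinely forced, namely that the ice rule together with the vanishing corner weight determines a \emph{unique} admissible configuration on columns $2n-1,2n$ step by step, with no branching, so that the partition function factorises cleanly; and, secondarily, the careful sign and prefactor bookkeeping when translating the partition-function identity into the normalised quantity $\YY_n$. A purely algebraic alternative would mirror the reduction proofs of \cref{prop:ZBarReduction2,prop:ZBarReduction4}, repeatedly invoking \cref{lem:RM} and the boundary Yang--Baxter equation \eqref{eqn:bYBERR} to decouple sites $2n-1,2n$; I expect both routes to deliver the same result, with the cancellation $[q^{1/2}]\{q^{1/2}\}=[q]$ as the decisive simplification.
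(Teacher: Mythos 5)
Your proof is correct, and it reaches the paper's intermediate identity and final formula by a genuinely different route. Both arguments start the same way (base case $n=1$ via \cref{ex:YY1}, reduction to $i=n$ via \cref{prop:YYBC}), but the paper then works purely algebraically: it isolates the last two tensor factors using the reversed factorisation of $\MM_{1,\dots,2n}$, the identity $\KK(-\ii q^{-1/2})=t\,\II$, explicit evaluations of $R$-matrix products on $\ket{\downarrow\cdots\downarrow}$, and finally $\bra{\nu(\ii q^{1/2})}(\II\otimes\ket{\downarrow})=\ii\{bq^{-1/2}\}\bra{\uparrow}$, the ladder identity $(\II\otimes \bra{\uparrow})\RR(z)(\ket{\downarrow}\otimes \II) = [q z](\II\otimes \bra{\uparrow}) (\ket{\downarrow}\otimes \II)$ and $\bra{\uparrow}\KK(\ii q^{1/2})\ket{\downarrow}=\ii[q]/\{q^{1/2}\}$. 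You instead run an Izergin--Korepin-style freezing argument directly on the lattice: the corner weight $\{q^{1/2}z_{2n}\}/\{q^{1/2}\}$ vanishes at $z_{2n}=\bar w_n=-\ii q^{-1/2}$, the ice rule then propagates a unique nonzero-weight configuration down column $2n$ (forcing the bottom orientation $\downarrow$ there, hence annihilating the $\bra{\downarrow\uparrow}$ term of $\bra{\nu(w_n)}$ and selecting the coefficient $\ii\{bq^{-1/2}\}$) and up column $2n-1$, so that the partition function factorises into frozen weights times $\ZZ_{n-1}$. The two computations are term-by-term equivalent --- each operator identity in the paper is the algebraic encoding of one of your frozen pieces --- but your version makes every factor geometrically visible, while the paper's operator form handles general $n$ mechanically and never needs to argue uniqueness of the frozen region; your own justification of that uniqueness (zero weight for the both-in corner configuration, then two imposed incoming edges at every subsequent vertex, hence no branching) is complete and disposes of the obstacle you flagged. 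Your bookkeeping is also right: the frozen weights give $\ZZ_n(\dots,\ii q^{1/2})=-t\{bq^{-1/2}\}[q]^2\{q^{1/2}\}^{-1}\prod_{j=1}^{n-1}\{q^{3/2}w_j\}^2\{q^{3/2}\bar w_j\}^2\,\ZZ_{n-1}(w_1,\dots,w_{n-1})$, which agrees with the paper's penultimate display (whose product bound $2n-2$ is a leftover typo for $n-1$ after half-specialisation), and the collapse via $[q^{1/2}]\{q^{1/2}\}=[q]$, $[q^2\bar w_n^2]=-[q]$ and $(-1)^{n(n+1)/2-n(n-1)/2}=(-1)^n$ yields exactly the stated reduction.
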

\begin{proof}
  For $n=1$, the proposition straightforwardly follows from \cref{ex:YY1}. For $n>1$, it is sufficient to consider the case where $i=n$ by \cref{prop:YYBC}. Note that
   \begin{equation}
     \label{eqn:ZZFromPhi}
    \ZZ_n(w_1,\dots,w_{n-1},\ii q^{1/2}) = \bra{\nu_{n-1}(w_1,\dots,w_{n-1})}\phi(w_1,\bar w_1,\dots,w_{n-1},\bar w_{n-1})\rangle,
  \end{equation}
  with a vector $\ket{\phi}= \ket{\phi(z_1,\dots,z_{2n-2})} \in (\mathbb C^2)^{\otimes (2n-2)}$, given by
   \begin{equation}
     \ket{\phi} = (\II^{\otimes (2n-2)}
\otimes \bra{\nu(\ii q^{1/2})})\MM_{1,\dots,2n}(z_1,\dots,z_{2n-2},\ii q^{1/2},-\ii q^{-1/2})\ket{\underset{2n}{\underbrace{\downarrow\cdots \downarrow}}}.
   \end{equation}
  We evaluate this vector with the help of the alternative factorisation
  \begin{equation}
    \MM_{1,\dots,2n}(z_1,\dots,z_{2n}) = \prod_{j=1}^{2n} \left(\prod_{i=1}^{j-1} \RR_{i,j}(z_iz_j) \right) \KK_j(z_j).
\end{equation}
   Upon specialisation $z_{2n-1} = \ii q^{1/2},\,z_{2n}=-\ii q^{-1/2}$, this factorisation allows us to write   
\begin{multline}
     \ket{\phi} 
     =\MM_{1,\dots,2n-2}(z_1,\dots,z_{2n-2})(\II^{\otimes (2n-2)}
 \otimes \bra{\nu(\ii q^{1/2})})\prod_{i=1}^{2n-2} \RR_{i,2n-1}(\ii q^{1/2}z_i)\\
     \quad \times \KK_{2n-1}(\ii q^{1/2})\prod_{i=1}^{2n-2} \RR_{i,2n}(-\ii z_i q^{-1/2}) \mathbb R_{2n-1,2n}(1) \KK_{2n}(-\ii q^{-1/2})\ket{\underset{2n}{\underbrace{\downarrow\cdots \downarrow}}}.
   \end{multline}
   Here, $\MM_{1,\dots,2n-2}(z_1,\dots,z_{2n-2})$ should be understood as an operator on $(\mathbb C^2)^{\otimes (2n-2)}$. We use $\KK(-\ii q^{-1/2}) = t \II$ to obtain
  \begin{multline}
    \prod_{i=1}^{2n-2} \RR_{i,2n}(-\ii z_i q^{-1/2}) \mathbb R_{2n-1,2n}(1) \KK_{2n}(-\ii q^{-1/2})\ket{\underset{2n}{\underbrace{\downarrow\cdots \downarrow}}}\\  = (-1)^{n-1}t[q]\prod_{i=1}^{2n-2}\{q^{3/2}\bar z_i\}\ket{\underset{2n}{\underbrace{\downarrow\cdots \downarrow}}}.
  \end{multline}
  Combining this evaluation with $\bra{\nu(\ii q^{1/2})}\left(\II \otimes \ket{\downarrow}\right)=\ii\{bq^{-1/2}\}\bra{\uparrow}$, we find
   \begin{multline}
       \ket{\phi}= (-1)^{n-1}\ii\{bq^{-1/2}\}t[q]\prod_{i=1}^{2n-2}\{q^{3/2}\bar z_i\}\MM_{1,\dots,2n-2}(z_1,\dots,z_{2n-2})\\
	\times (\II^{\otimes (2n-2)}
\otimes \bra{\uparrow})\prod_{i=1}^{2n-2} \RR_{i,2n-1}(\ii z_i q^{1/2}) \KK_{2n-1}(\ii q^{1/2})\ket{\underset{2n-1}{\underbrace{\downarrow\cdots \downarrow}}}.
   \end{multline}
  Next, we repeatedly use the identity
  \begin{equation}
    (\II\otimes \bra{\uparrow})\RR(z)(\ket{\downarrow}\otimes \II) = [q z](\II\otimes \bra{\uparrow}) (\ket{\downarrow}\otimes \II),
  \end{equation}
  and $\bra{\uparrow}\mathbb K(\ii q^{1/2})\ket{\downarrow}=\ii[q]/\{q^{1/2}\}$, to obtain
  \begin{multline}
    (\II^{\otimes (2n-2)} \otimes \bra{\uparrow})\prod_{i=1}^{2n-2} \RR_{i,2n-1}(\ii z_i q^{1/2}) \KK_{2n-1}(\ii q^{1/2})\ket{\underset{2n-1}{\underbrace{\downarrow\cdots \downarrow}}}\\
    =(-1)^{n-1}\ii[q]/\{q^{1/2}\}\prod_{i=1}^{2n-2}\{q^{3/2}z_i\}\ket{\underset{2n-2}{\underbrace{\downarrow\cdots\downarrow}}}.
  \end{multline}
  Hence, we obtain
  \begin{multline}
    |\phi\rangle = -\{bq^{-1/2}\}[q]^2t/\{q^{1/2}\}\prod_{i=1}^{2n-2}\{q^{3/2}z_i\}\{q^{3/2}\bar z_i\}\\
    \times \MM_{1,\dots,2n-2}(z_1,\dots,z_{2n-2})\ket{\underset{2n-2}{\underbrace{\downarrow\cdots\downarrow}}}.
  \end{multline}
  Upon half-specialisation $z_{2i-1} = w_i,\,z_{2i} = \bar w_i,\,i=1,\dots,n-1$ and substitution into \eqref{eqn:ZZFromPhi}, this expression straightforwardly leads to  \begin{multline}
    \mathbb Z_n(w_1,\dots,w_{n-1},\ii q^{1/2}) = -\{bq^{-1/2}\}[q]^2t/\{q^{1/2}\}\prod_{i=1}^{2n-2}\{q^{3/2}w_i\}^2\{q^{3/2}\bar w_i\}^2\\
    \times \mathbb Z_{n-1}(w_1,\dots,w_{n-1}).
  \end{multline}
  The proposition follows from this relation and \eqref{eqn:DefYY}.
\end{proof}

For the next reduction relation, it will be useful to introduce the abbreviation
\begin{equation}
  F(w) = - [q^2]^2 [b\bar w][b\bar q w] \{q^{1/2} w\}\{q^{3/2}\bar w\}\det \KK(\bar w).
  \label{eqn:DefF}
\end{equation}

\begin{proposition}
  \label{prop:YYRed2}
  For $n\geqslant 2$ and all $1 \leqslant i < j\leqslant n$, we have
  \begin{multline}
    \label{eqn:ReductionY}
    \YY_n(\dots,w_i,\dots,w_j = \bar q w_i,\dots) = F(w_i) \\ \times\Biggl(\prod_{k\neq i,j}^n[q^2 \bar w_i \bar w_k][q^2\bar w_i w_k][qw_iw_k][q w_i\bar w_k]\Biggr)^2
   \YY_{n-2}(\dots,\widehat{w_i},\dots,\widehat{w_j},\dots).
  \end{multline}
\end{proposition}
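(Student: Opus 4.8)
The plan is to transcribe the proof of \cref{prop:ZBarReduction4} into the language of the six-vertex transfer operator $\mathbb M$, the $R$- and $K$-matrices, and the covector $\bra{\nu}$, replacing each $\Psi$-ingredient by its $\ZZ_n$-counterpart from \cref{sec:Overlap}. By the $BC_n$-symmetry of $\YY_n$ (\cref{prop:YYBC}) it suffices to treat $i=1,\,j=2$, and by the definition \eqref{eqn:DefYY} it is enough to establish the corresponding reduction for the unnormalised overlap $\ZZ_n$ and then divide by the normalisation of $\ZZ_n$ and $\ZZ_{n-2}$. Setting $w_2=\bar q w_1$ and half-specialising, the first two tensor pairs of $\mathbb M$ carry the spectral parameters $(w_1,\bar w_1,\bar q w_1,q\bar w_1)$, against which the covectors $\bra{\nu(w_1)}$ and $\bra{\nu(\bar q w_1)}$ are contracted; the two remaining entries of each block are crossing-related, while $\bra{\nu(\bar q w_1)}=[\bar q^2 b w_1]\bra{\uparrow\downarrow}+[q^2 b\bar w_1]\bra{\downarrow\uparrow}$.

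The engine of the reduction is the degeneration $\RR(q)=[q^2]\ket{\zeta}\bra{\zeta}$ onto the singlet: the choice $w_2=\bar q w_1$ is precisely what forces a \emph{cross-block} product of spectral parameters to equal $q$ (for instance $z_1z_4=w_1\cdot q\bar w_1=q$), so that an $R$-matrix factor inside $\mathbb M$ collapses onto $\ket{\zeta}$. First I would use \cref{lem:RM} to transport parameters until this degenerate factor is exposed and a singlet is created; the reordering conjugates the overlap by $\check{\mathbb R}$-matrices, which on the right act on $\ket{\downarrow\cdots\downarrow}$ and only produce scalars (since $\check{\mathbb R}(z)\ket{\downarrow\downarrow}\propto\ket{\downarrow\downarrow}$), and on the left act on $\bra{\nu_n}$ and are absorbed via the intertwining identity \eqref{eqn:bYBENu} and the inversion relation $\bra{\nu(\bar z)}\check{\mathbb R}(z^2)=[q^2z^2]\bra{\nu(z)}$ used in the proofs of \cref{prop:ZZsym,prop:ZZinv}. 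As in \cref{prop:ZBarReduction4}, which applies \cref{lem:PsiReduction} twice, I expect two successive collapses to be needed here, each removing one block of two lines, so that $\mathbb M_{1,\dots,2n}$ is reduced to $\mathbb M_{1,\dots,2n-4}$, i.e.\ $\ZZ_n\to\ZZ_{n-2}$.

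Once a singlet has been formed, I would run the standard train argument: a singlet $\ket{\zeta}$ on two crossing-related lines, contracted against the pair of $R$-matrices joining those lines to any spectator block $k$, collapses to a scalar multiple of $\ket{\zeta}$. Applied to both removed blocks and all $n-2$ remaining blocks, the $4\times 2=8$ relevant $R$-matrices per block $k$ furnish exactly the eight bracket factors $\big([q^2\bar w_1\bar w_k][q^2\bar w_1 w_k][q w_1 w_k][q w_1\bar w_k]\big)^2$ appearing in \eqref{eqn:ReductionY} (the square reflecting the two collapses). The prefactor $F(w_1)$ of \eqref{eqn:DefF} is then the value of the residual finite-dimensional overlap involving $\bra{\nu(w_1)}$, $\bra{\nu(\bar q w_1)}$, the two singlets, the boundary $K$-matrices $\KK_1,\KK_2$, and the degeneration scalars: contracting the two $K$-matrices against a singlet produces $\det\KK(\bar w_1)$, while the overlaps of the two $\nu$-covectors with the singlets yield the factors $[b\bar w_1][b\bar q w_1]\{q^{1/2}w_1\}\{q^{3/2}\bar w_1\}$ and the powers of $[q^2]$. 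Unlike \cref{prop:ZBarReduction4}, no case distinction on the parity of $N$ is required, since $\mathbb M$ and $\bra{\nu}$ are defined uniformly in $n$.

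The main obstacle is bookkeeping rather than strategy: one must track every scalar generated by the $\check{\mathbb R}$-reorderings, by the two singlet collapses, and by the $K$-matrix contractions, and verify that their product reassembles exactly into $F(w_1)$ together with the squared product over $k$. I expect the cleanest route is to first isolate and prove a self-contained reduction identity for $\mathbb M$ at $z_a z_b=q$ — the six-vertex analogue of \cref{lem:PsiReduction}, whose proof is the train argument above — and then to apply it twice in parallel with the two uses of \cref{lem:PsiReduction} in \cref{prop:ZBarReduction4}, exactly as \cref{prop:YYRed1} mirrors \cref{prop:ZBarReduction2}.
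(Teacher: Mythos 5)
Your framing is sound in its peripheral claims (reduction to a single pair via \cref{prop:YYBC}, working with the unnormalised $\ZZ_n$, the count of $4\times 2=8$ spectator $R$-matrices per block, the absence of a parity distinction), and the identity $\RR(q)=[q^2]\ket{\zeta}\bra{\zeta}$ is indeed correct in the paper's conventions. But the engine of your proof fails. With $w_2=\bar q w_1$ the specialised parameters of the first block are $(z_1,z_2,z_3,z_4)=(w_1,\bar w_1,\bar q w_1,q\bar w_1)$, and exactly \emph{one} factor of $\MM$ degenerates to the singlet projector, namely $\RR_{1,4}(z_1z_4)=\RR_{1,4}(q)$; the other degenerate factor, $\RR_{2,3}(z_2z_3)=\RR_{2,3}(\bar q)$, is $[q^2]$ times a signed permutation, not a rank-one projector, so the ``two successive collapses onto $\ket{\zeta}$'' you plan do not exist. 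More fatally, the train argument is unavailable: a direct computation shows that $\RR_{1,m}(a)\RR_{4,m}(b)$ preserves $\ket{\zeta}_{1,4}\otimes\ket{v}_m$ up to a scalar only under a \emph{ratio} condition between the two lines (generically $b=-q^2a$, i.e.\ $z_4=-q^2z_1$), because the spectator arguments involve $z_1,z_4$ only through $z_1z_m$ and $z_4z_m$. The singlet degeneration, by contrast, imposes the \emph{product} condition $z_1z_4=q$, i.e.\ $z_4=q\bar z_1$. For generic $w_1$ these are incompatible, so the eight spectator $R$-matrices per block do not collapse against the singlet; this clash between ratio-type crossing identities and the product-type arguments $z_iz_j$ in \eqref{eqn:FactorisationM} is exactly why a naive Izergin--Korepin-style recursion does not transcribe to this boundary setting. (In addition, $\RR_{1,m}$ and $\RR_{4,m}$ are separated in the ordered product \eqref{eqn:FactorisationM} by $\KK_4$ and by $R$-matrices acting on leg $4$, so even a valid invariance identity could not be applied without a reorganisation that your sketch does not supply.)

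The paper's proof runs on a different mechanism in which no singlet is ever formed. Using the evenness of $\YY_n$, it replaces $w_n=\bar q w_{n-1}$ by $w_n=-q\bar w_{n-1}$; the degenerate factor is then the permutation $\RR_{2,4}(-\bar q)=-[q^2]P_{2,4}$, which is easy to pull through the product. Together with the boundary Yang--Baxter equation \eqref{eqn:bYBERR}, this brings $\KK_2(\bar w)$ and $\KK_2(-\bar q w)$ next to each other, and the inversion identity $\KK(\bar w)\KK(-\bar q w)=(\det\KK(\bar w))\,\II$ is the actual source of the factor $\det\KK(\bar w)$ in \eqref{eqn:DefF} --- not a contraction of $K$-matrices against a singlet, as you suggest. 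The factors $\{q^{1/2}w\}\{q^{3/2}\bar w\}$ arise from a spin-flip argument in which $\KK_1(w)$ and $\KK_3(-q\bar w)$ are replaced by multiples of $\sigma^+$, the factors $[b\bar w][b\bar q w]$ come from the components \eqref{eqn:DefNu} of the $\nu$-covectors, and an explicit three-site matrix element supplies the remaining $[q][q^2]$; the spectators for $n>2$ are handled by a factorisation in which their $R$-matrices act diagonally on definite spin states, not by a crossing identity. To repair your argument you would have to rebuild it around the (signed) permutation degeneration and the $K$-matrix inversion, which is essentially the paper's route.
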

\begin{proof}
  By \cref{prop:YYBC}, it is sufficient to consider $i=n-1$ and $j=n$. Moreover, we have    \begin{equation}
  \YY_n(\dots,w_{n-1},w_n = \bar q w_{n-1}) = \YY_n(\dots,w_{n-1},w_n = -q \bar w_{n-1}).
  \end{equation}
  It will be convenient to work with the choice of arguments on the right-hand side. To simplify the presentation, we write $w_{n-1} = w$. The reasoning has two parts. First, we consider in detail the case where $n=2$, as it presents all necessary ingredients. Second, we use it to address the cases where $n>2$.
    
  For $n=2$, we need to evaluate the overlap
  \begin{equation}
    \mathbb Z_2(w,-q\bar w) = \left(\bra{\nu(w)}\otimes \bra{\nu(-q\bar w)}\right)\MM_{1,2,3,4}(w,\bar w,-q\bar w,-\bar q w)\ket{\downarrow\downarrow\downarrow\downarrow}.
  \end{equation}
  Explicitly, we have
  \begin{multline}
    \MM_{1,2,3,4}(w,\bar w,-q\bar w,-\bar q w) = \KK_1(w)\RR_{1,2}(1)\RR_{1,3}(-q)\RR_{1,4}(-\bar q w^2)\KK_2(\bar w)\\
    \times \RR_{2,3}(-q\bar{w}^2)\RR_{2,4}(-\bar q)\KK_3(-q\bar w)\RR_{3,4}(1)\KK_4(-\bar{q}w).
    \label{eqn:MM1234Intermediate}
  \end{multline}
  We use the simplification $\RR(-\bar q)=-[q^2]P$ and the boundary Yang-Baxter equation \eqref{eqn:bYBERR} to rewrite the product of $R$- and $K$-matrices on the second line of this equality as follows:
  \begin{multline}
    \RR_{2,3}(-q\bar{w}^2)\RR_{2,4}(-\bar q)\KK_3(-q\bar w)\RR_{3,4}(1)\KK_4(-\bar{q}w)\\
    = -[q^2]\KK_2(-\bar q w) \RR_{2,3}(1)\KK_3(-q\bar w)\RR_{2,3}(-q\bar{w}^2)P_{2,4}.
  \end{multline}
  Upon substitution into \eqref{eqn:MM1234Intermediate}, we simplify this expression
 with the help of the identity $\KK(\bar w)\KK(-\bar q w) = (\det \KK(\bar w)) \II$, which follows from a straightforward calculation, and the Yang-Baxter equation \eqref{eqn:YBERR}:
  \begin{multline}
    \MM_{1,2,3,4}(w,\bar w,-q\bar w,-\bar q w) = -[q^2](\det \KK(\bar w))\KK_1(w)\RR_{2,3}(1)\RR_{1,3}(-q)\\
    \times \KK_3(-q\bar w)\RR_{1,2}(1)\RR_{2,3}(-q\bar{w}^2)\RR_{1,4}(-\bar q w^2)P_{2,4}.
  \end{multline}
  We now act with this expression on $\ket{\downarrow\downarrow\downarrow\downarrow}$. The action of the last four factors readily follows from $\RR(z)\ket{\downarrow\downarrow} = [q\bar z]\ket{\downarrow\downarrow}$:
  \begin{multline}
   \MM_{1,2,3,4}(w,\bar w,-q\bar w,-\bar q w)\ket{\downarrow\downarrow\downarrow\downarrow} = 
   -[q][q^2][w^2][q^2\bar w^2](\det \KK(\bar w))\\
    \times \KK_1(w)\RR_{2,3}(1)\RR_{1,3}(-q)\KK_3(-q\bar w)\ket{\downarrow\downarrow\downarrow\downarrow}.
  \end{multline}
  We observe that the product of $R$- and $K$-matrices on the second line acts like the identity on the last tensor factor of $(\mathbb C^2)^{\otimes 4}$. Combining this observation with \eqref{eqn:DefNu}, we obtain
    \begin{multline}
     \mathbb Z_2(w,-q\bar w)=[q][q^2][b\bar w][w^2][q^2\bar w^2](\det \KK(\bar w))\\
     \times \left(\bra{\nu(w)}\otimes \bra{\uparrow}\right)\KK_1(w)\RR_{2,3}(1)\RR_{1,3}(-q)\KK_3(-q\bar w)\ket{\downarrow\downarrow\downarrow}.
     \label{eqn:MM1234Intermediate1}
  \end{multline}
  By \eqref{eqn:DefNu}, the non-zero contributions to the remaining overlap on the second line result from $\KK_1(w)\RR_{2,3}(1)\RR_{1,3}(-q)\KK_3(-q\bar w)$ flipping two spins $\downarrow$ to $\uparrow$ when acting on $\ket{\downarrow\downarrow\downarrow}$. The flips are realised thanks to the non-zero upper-right entries of the $K$-matrices. It follows that we may replace $\KK_1(w)$ in \eqref{eqn:MM1234Intermediate1} by $\{q^{1/2}w\}/\{q^{1/2}\}\sigma_1^+$ and, likewise, $\KK_3(-q\bar w)$ by $-\{q^{3/2}\bar w\}/\{q^{1/2}\}\sigma_3^+$, where $\sigma^+ = \frac12(\sigma^x+\ii \sigma^y)$, without changing the overlap. Thanks to $\sigma^+\ket{\downarrow}=\ket{\uparrow}$, $\bra{\uparrow}\sigma^+=\bra{\downarrow}$, and $\bra{\downarrow}\sigma^+=0$, the expression simplifies to 
   \begin{multline}
     \mathbb Z_2(w,-q\bar w)=-[q][q^2]\{q^{1/2}w\}\{q^{3/2}\bar w\}[b\bar w][b\bar q w][w^2][q^2\bar w^2](\det \KK(\bar w))/\{q^{1/2}\}^2\\
     \times \bra{\downarrow\downarrow\uparrow}\RR_{2,3}(1)\RR_{1,3}(-q)\ket{\downarrow\downarrow\uparrow}.
         \label{eqn:MM1234Intermediate2}
  \end{multline}
We note that 
  \begin{equation}\label{eqn:MM1234 matrix element}
 (\II\otimes \bra{\uparrow})\RR_{2,3}(1)\RR_{1,3}(-q)\ket{\downarrow\downarrow\uparrow}=-[q][q^2]\ket{\downarrow\downarrow}.
\end{equation}
The substitution into \eqref{eqn:MM1234Intermediate2} yields the compact expression
  \begin{equation}
     \mathbb Z_2(w,-q\bar w)=-[q^{1/2}]^2[w^2][q^2\bar w^2]F(w).
         \label{eqn:MM1234Intermediate3}
  \end{equation}
  Finally, we obtain \eqref{eqn:ReductionY} for $n=2$ by combining this result with \eqref{eqn:DefYY}.
  
  We now consider the cases where $n>2$. Clearly, we have
  \begin{multline}
    \ZZ_n(w_1,\dots,w_{n-2},w,-q\bar w) = \bra{\nu_{n-2}(w_1,\dots,w_{n-2})}\phi(w_1,\bar w_1,\dots,w_{n-2},\bar w_{n-2})\rangle
    \label{eqn:ZZPhi}
  \end{multline}
  with a vector $\ket{\phi(z_1,\dots,z_{2n-4})} \in (\mathbb C^{2})^{\otimes (2n-4)}$, defined through
  \begin{multline}
     |\phi(z_1,\dots,z_{2n-4})\rangle = (
\II^{\otimes (2n-4)}
 \otimes \bra{\nu(w)} \otimes \bra{\nu(-q\bar w)})\\
     \MM_{1,\dots,2n}(z_1,\dots,z_{2n-4},w,\bar w,-q\bar w,-\bar q w)\ket{\downarrow\cdots \downarrow}.
  \end{multline} 
  We may write
  \begin{align}
    \MM_{1,\dots,2n}&(z_1,\dots,z_{2n-4},w,\bar w,-q\bar w,-\bar q w) \notag\\
    &= \quad \left(\MM_{1,\dots, 2n-4} (z_1,\dots,z_{2n-4})\otimes \II^{\otimes 4}\right)\notag\\
    &\quad \times \prod_{i=1}^{2n-4}\RR_{i,2n-3}(z_iw)\RR_{i,2n-2}(z_i\bar w)\RR_{i,2n-1}(-qz_i\bar w)\RR_{i,2n}(-\bar qz_i w) \label{eqn:RMatrixProduct}\\
    &\quad \times\left( \II^{\otimes (2n-4)}\otimes \MM_{1,2,3,4}(w,\bar w,-q\bar w,-\bar q w)\right).\notag
  \end{align}
  Using this factorisation, we evaluate $\ket{\phi(z_1,\dots,z_{2n-4})}$ by following the steps for the case $n=2$ that lead to \eqref{eqn:MM1234Intermediate3}. Observing (at various steps of the evaluation) that the action of the $R$-matrices of each factor of the product in \eqref{eqn:RMatrixProduct} is diagonal thanks to the vectors they act on, to both their left and right, and using the identity \eqref{eqn:MM1234 matrix element}, we obtain
  \begin{multline}
    \ket{\phi(z_1,\dots,z_{2n-4})} =-[q^{1/2}]^2[w^2][q^2\bar w^2]F(w)  \\
    \times \left(\prod_{i=1}^{2n-4}[q^2 z_i\bar w][q^2 \bar z_i \bar w][q \bar z_i w][q z_i w] \right)\MM_{1,\dots, 2n-4} (z_1,\dots,z_{2n-4})\ket{\downarrow\cdots \downarrow}.
  \end{multline}
  Upon half-specialisation and substitution into \eqref{eqn:ZZPhi}, we obtain the relation
  \begin{multline}
    \ZZ_n(w_1,\dots,w_{n-2},w,-q\bar w)=-[q^{1/2}]^2[w^2][q^2\bar w^2]F(w) \\
    \times \left(\prod_{i=1}^{n-2}[q^2 w_i\bar w][q^2 \bar w_i \bar w][q \bar w_i w][q w_i w]\right)^2\ZZ_{n-2}(w_1,\dots,w_{n-2}).
  \end{multline}
  The relation \eqref{eqn:ReductionY} for $n>2$ follows from combining this finding with \eqref{eqn:DefYY}.
\end{proof}

\subsubsection{Uniqueness}
We now relate the (rescaled) generalised sum of components $Y_N$ and the (rescaled) overlap $\YY_n$. The relation depends on the parity of $N$.

\begin{theorem}
  \label{thm:YandYY}
  For each $N\geqslant 0$ and $n$ as defined in \eqref{eqn:Defnnbar}, we have
  \begin{equation}
    Y_N(w_1,\dots,w_n)=\YY_n(w_1,\dots,w_n),
  \end{equation}
  where the parameters on the right-hand side are 
  \begin{equation}
   t=-\{\beta q^{1/2}\}/\{q^{1/2}\} , \quad b =
   \begin{cases}
     q, & \text{even }N,\\
     {\bar q}, & \text{odd }N.
   \end{cases}
   \label{eqn:paramBTQ}
  \end{equation}
\end{theorem}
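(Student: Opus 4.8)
The plan is to prove the identity by the uniqueness criterion established in Proposition~\ref{prop:Uniqueness}. That proposition characterises $Y_N$ as the unique family of functions satisfying the four properties \textit{(i)}--\textit{(iv)}. Therefore, it suffices to verify that, under the parameter identification \eqref{eqn:paramBTQ}, the family $\YY_n$ (viewed as a function of $N$ via $n = \lfloor N/2 \rfloor$) satisfies exactly the same four properties. The work has already been largely assembled in Section~\ref{sec:Overlap}: I would simply match each property of $\YY_n$ against the corresponding property of $Y_N$ and check that the elementary prefactors coincide once $t$ and $b$ are substituted.

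First I would treat the base cases and the symmetry/degree properties. Property~\textit{(i)}, namely $\YY_0 = \YY_1 = 1$, follows from the conventions $\ZZ_0 = 1$ and $\YY_0 = \ZZ_0 = 1$ together with the normalisation in \cref{ex:YY1}; here one must be careful that for $N=0,1$ one has $n=0$, and for $N=2,3$ one has $n=1$, so the matching of indices $N \leftrightarrow n$ is consistent with $n = \lfloor N/2 \rfloor$. Property~\textit{(ii)} --- that $\YY_n$ is an even $BC_n$-symmetric Laurent polynomial of degree width at most $8(n-1)$ in $w_1$ --- is precisely \cref{prop:YYBC}. So these two properties transfer immediately.

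Next I would verify the two reduction relations \textit{(iii)} and \textit{(iv)}. Property~\textit{(iii)} should follow from \cref{prop:YYRed1}, and property~\textit{(iv)} from \cref{prop:YYRed2}. The main technical content of the proof is checking that the prefactors agree. For \textit{(iii)}, comparing the statement of \cref{prop:ZBarReduction2} (which defines property~\textit{(iii)} for $Y_N$, with the constant $c_N$ and the factor $(-1)^{n+1}\{q^{1/2}\beta\}$) against \cref{prop:YYRed1} (which gives the prefactor $(-1)^n t \{bq^{-1/2}\}$ for $\YY_n$), I must confirm that after substituting $t = -\{\beta q^{1/2}\}/\{q^{1/2}\}$ and $b = q$ or $b = \bar q$ the two match, \emph{including} the parity-dependent constant $c_N$ and the $\{q^{3/2}w_j\}^2\{q^{3/2}\bar w_i\}^2$ products. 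The delicate point is that $\YY_n$ carries no explicit $N$-parity in \cref{prop:YYRed1}, whereas $Y_N$'s relation \textit{(iii)} does (through $c_N$); the parity dependence must be entirely absorbed by the choice of $b$ (i.e. $q$ versus $\bar q$) in \eqref{eqn:paramBTQ}. Verifying that $\{bq^{-1/2}\}$ evaluates correctly to reproduce $\{q^{1/2}\beta\}c_N/(\ldots)$ in both parities is the crux.

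For property~\textit{(iv)}, I would compare \cref{prop:ZBarReduction4}, whose prefactor $f_N(w)$ is defined in \eqref{eqn:Deff} with its own even/odd $N$ case distinction, against \cref{prop:YYRed2}, whose prefactor $F(w)$ is defined in \eqref{eqn:DefF}. The task is to show $F(w) = f_N(w)$ under \eqref{eqn:paramBTQ}; here one expands $\det\KK(\bar w) = t^2 - \{q^{1/2}\bar w\}^2/\{q^{1/2}\}^2$ and the factors $[b\bar w][b\bar q w]$, substitutes $b = q$ or $\bar q$, and checks the resulting product against the $[w][q\bar w]$ (even) or $[qw][q^2\bar w]$ (odd) cases in $f_N$. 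I expect this determinant-expansion-and-match to be the most computation-heavy step, and the most likely place for a sign or parity slip, since the even/odd distinctions in $f_N$, in $F$ (through $b$), and in the product ranges must all align simultaneously. Once both reduction relations and their prefactors are verified, Proposition~\ref{prop:Uniqueness} forces $\YY_n = Y_N$, completing the proof.
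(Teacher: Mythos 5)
Your proposal takes essentially the same route as the paper's own proof: the paper also defines $X_N = \YY_n$ under the identification \eqref{eqn:paramBTQ}, checks properties \textit{(i)}--\textit{(iv)} via \cref{prop:YYBC}, \cref{prop:YYRed1} and \cref{prop:YYRed2}, and concludes by \cref{prop:Uniqueness}. The two points you single out as the crux are precisely the computations the paper performs, namely matching $(-1)^n t\{bq^{-1/2}\}$ with $(-1)^{n+1}\{q^{1/2}\beta\}c_N$ in both parities, and establishing $F(w)=f_N(w)$ by expanding $\det \KK(\bar w) = [\beta w][\beta q \bar w]/\{q^{1/2}\}^2$ together with the evaluation of $[b\bar w][b\bar q w]$ for $b=q$ and $b=\bar q$.
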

\begin{proof}
  Define, for each $N\geqslant 0$ the function $X_N$ by $X_N(w_1,\dots, w_n) = \YY_n(w_1,\dots,w_n)$ with the parameters $t,b$ adjusted according to \eqref{eqn:paramBTQ}. We claim that the family of these functions obeys the properties \textit{(i)}-\textit{(iv)} of \cref{sec:Uniqueness}. 
  
  Indeed, the property \textit{(i)} straightforwardly holds by $\YY_0=\YY_1=1$. Property \textit{(ii)} follows from \cref{prop:YYBC}. For property \textit{(iii)}, we use \cref{prop:YYRed1} and the observation
  \begin{equation}
    (-1)^n\{b/q^{1/2}\} t = 
    (-1)^{n+1} \{\beta q^{1/2}\}/\{q^{1/2}\} \times
    \begin{cases}
     \{q^{1/2}\}, & \text{even }N,\\
     \{q^{3/2}\}, & \text{odd }N,
   \end{cases}
  \end{equation}
  for each $N\geqslant 2$. This evaluation can be written as
  \begin{equation}
     (-1)^n\{b/q^{1/2}\} t =
     (-1)^{n+1}\{q^{1/2}\beta\}c_N,
  \end{equation}
  where $c_N$ is defined in \cref{prop:ZBarReduction2}. Finally, to check the property \textit{(iv)}, we have to show that \cref{prop:YYRed2} holds with $F(w) = f_N(w)$, where $f_N$ is defined in \eqref{eqn:Deff}. To this end, we note that
  \begin{equation}
    \det \KK(\bar w) = t^2 - \left(\frac{\{q^{1/2}\bar w\}}{\{q^{1/2}\}}\right)^2 = \frac{[\beta w][\beta q\bar w]}{\{q^{1/2}\}^2}.
  \end{equation}
  Furthermore, the evaluation
  \begin{equation}
    [b\bar w][b\bar q w]= \begin{cases}
         [w][q\bar w], & \mathrm{even }\,N,\\
         [q w][q^2\bar w], & \mathrm{odd }\,N.
       \end{cases}
  \end{equation}
  yields, indeed, the desired equality.
  
  The theorem follows, therefore, from \cref{prop:Uniqueness}.
\end{proof}

\section{The main result}
\label{sec:MainResult}

We now combine the results of \cref{sec:GeneralisedSumComps,sec:6VTSASM} to express $S_N$ in terms of $A_{\mathrm{TS}}(2N+1;t,\tau)$, which constitutes our main result and proves Conjecture 5.5 of \cite{hagendorf:21}.\footnote{See also the erratum \cite{hagendorf:22_2}, which corrects a typographical error in the conjecture as stated in \cite{hagendorf:21}.} Both quantities are polynomials, but their relation involves the expression $(1+x(x-\tau))^{1/2}$. The relation is independent of the choice for the branch of the square root, but the same branch must be used consistently at every occurence of $(1+x(x-\tau))^{1/2}$.
\begin{theorem}
  \label{thm:MainTheorem}
  For each $N\geqslant 0$, we have
  \begin{equation}
    S_N = ((1+x(x-\tau))^{1/2})^nA_{\mathrm{TS}}(2N+1;t,\tau),
  \end{equation}
  where $t=(1+x)/(1+x(x-\tau))^{1/2}$.
\end{theorem}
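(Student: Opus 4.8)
The plan is to specialise the bridge identity $Y_N=\YY_n$ of \cref{thm:YandYY} to the homogeneous point $w_1=\cdots=w_n=1$ and to recognise both sides there. On the left, \cref{prop:relationSZ} already writes $S_N$ in terms of $Z_N(1,\dots,1)$, while the definition \eqref{eqn:DefY} relates $Y_N(1,\dots,1)$ to $Z_N(1,\dots,1)$ up to an elementary factor (a $[q^{-1/2}]^n$ for even $N$, and $[q^{-1/2}]^n[q]^{2n}$ for odd $N$, obtained by setting $w_i=1$). On the right, \eqref{eqn:DefYY} relates $\YY_n(1,\dots,1)$ to the overlap $\ZZ_n(1,\dots,1)$, again up to the elementary factor $(-1)^{n(n+1)/2}[q^{1/2}]^{-n}[q^2]^{-n}$. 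So the first task is to evaluate $\ZZ_n$ in the homogeneous limit.

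The key observation I would exploit is that, with the values $b=q$ (even $N$) or $b={\bar q}$ (odd $N$) dictated by \eqref{eqn:paramBTQ}, the covector $\bra{\nu(1)}$ of \eqref{eqn:DefNu} collapses to a single basis covector. Indeed, $\bra{\nu(1)}=[{\bar q}b]\bra{\uparrow\downarrow}+[qb]\bra{\downarrow\uparrow}$, and since $[1]=0$ the choice $b=q$ annihilates the first term (leaving $[q^2]\bra{\downarrow\uparrow}$), whereas $b={\bar q}$ annihilates the second (leaving $-[q^2]\bra{\uparrow\downarrow}$). Hence $\bra{\nu_n(1,\dots,1)}$ is proportional to $\bra{\bm\alpha_-}$ for even $N$ and to $\bra{\bm\alpha_+}$ for odd $N$, so the overlap reduces to a single partition function, $\ZZ_n(1,\dots,1)=[q^2]^n\ZZ_n^-(1,\dots,1)$ for even $N$ and $\ZZ_n(1,\dots,1)=(-1)^n[q^2]^n\ZZ_n^+(1,\dots,1)$ for odd $N$, via the matrix-element formula $\ZZ_n^{\bm\alpha}=\bra{\bm\alpha}\MM_{1,\dots,2n}\ket{\downarrow\cdots\downarrow}$. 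Inserting $\ZZ_n^\pm(1,\dots,1)=[q]^{n(2n-1)}A_{\mathrm{TS}}(2N+1;t,\tau)$ from \cref{lem:TSASMGFZ} then expresses the right-hand side as a multiple of $A_{\mathrm{TS}}$.

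Next I would collect all the elementary factors — the powers of $[q],[q^{1/2}],[q^2],[\beta]$ and the signs $(-1)^{n'(n'-1)/2}$ and $(-1)^{n(n+1)/2}$ — treating the two parities separately with $n'=n$ (even) and $n'=n+1$ (odd). I expect every sign to reduce to $(-1)^{n(n+1)}=1$ and every $q$-factor to collapse, leaving the uniform relation $S_N=([q]/[\beta])^n A_{\mathrm{TS}}(2N+1;t,\tau)$ in both cases. It then remains to rewrite the prefactor and the weight in the variables of the statement through \eqref{eqn:ParametersXTau}. A short computation — expanding in $\beta,q$ with $\beta{\bar\beta}=q{\bar q}=1$ — should give $[\beta]^2+[\beta q]^2-[\beta q]\{q\}[\beta]=[q]^2$, i.e. $([q]/[\beta])^2=1+x(x-\tau)$, and likewise $-\{\beta q^{1/2}\}/\{q^{1/2}\}=(1+x)[\beta]/[q]$, which matches $t=(1+x)/(1+x(x-\tau))^{1/2}$ once the branch $(1+x(x-\tau))^{1/2}=[q]/[\beta]$ is fixed.

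The hard part, and the real content of the argument, is the collapse of the overlap in the second step: everything hinges on the homogeneous specialisation making exactly one entry of each $\bra{\nu(w_i)}$ vanish, precisely for the values of $b$ supplied by \cref{thm:YandYY}, thereby selecting the alternating boundary condition $\bm\alpha_\pm$ that the TSASM bijection requires. A secondary point worth recording is that the choice of square-root branch is immaterial: replacing $(1+x(x-\tau))^{1/2}$ by its negative sends $t\mapsto-t$ and multiplies the prefactor by $(-1)^n$, and by the parity property $A_{\mathrm{TS}}(2N+1;-t,\tau)=(-1)^nA_{\mathrm{TS}}(2N+1;t,\tau)$ the product is unchanged, so the statement is well defined and we may use the branch $[q]/[\beta]$.
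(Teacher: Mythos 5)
Your proposal is correct and follows essentially the same route as the paper's proof: homogeneous specialisation of \cref{thm:YandYY}, conversion of $S_N$ through \cref{prop:relationSZ}, \eqref{eqn:DefY} and \eqref{eqn:DefYY}, collapse of $\bra{\nu(1)}$ onto a multiple of $\bra{\bm\alpha_-}$ (for $b=q$) or $\bra{\bm\alpha_+}$ (for $b=\bar q$), then \cref{lem:TSASMGFZ} and the parameter conversion $([q]/[\beta])^2=1+x(x-\tau)$, $t=(1+x)[\beta]/[q]$. Your algebra is in fact slightly more careful than the printed proof, whose final display contains a sign typo (it writes $1-x(x-\tau)$ where the theorem and your computation correctly give $1+x(x-\tau)$), and your closing remark that the choice of square-root branch is immaterial, by the parity $A_{\mathrm{TS}}(2N+1;-t,\tau)=(-1)^nA_{\mathrm{TS}}(2N+1;t,\tau)$, is a worthwhile observation the paper leaves implicit.
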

\begin{proof}
  We combine \cref{prop:relationSZ} with \eqref{eqn:DefY}. Simplifying with the help of \eqref{eqn:Defnnbar}, we obtain
  \begin{equation}
    S_N = (-1)^{n'(n'-1)/2+n}([q]/[\beta])^n [q]^{-n(2n-1)}[q^{1/2}]^nY_N(1,\dots,1),
  \end{equation}
  where the parameters $x,\tau$ on the left-hand side are given by \eqref{eqn:ParametersXTau} in terms of $\beta,q$. Now, we apply \cref{thm:YandYY} and \eqref{eqn:DefYY}, which yields
   \begin{equation}
    S_N = ([q]/[\beta])^{n}[q]^{-n(2n-1)}(-1)^{n(n'-n)}[q^2]^{-n}\ZZ_n(1,\dots,1),
    \label{eqn:SIntermediate}
  \end{equation} 
  provided that \eqref{eqn:paramBTQ} holds. Evaluating \eqref{eqn:DefZZ} with $b=q$ for even $N$, and $b=\bar q$ for odd $N$, yields
  \begin{equation}
    (-1)^{n(n'-n)}[q^2]^{-n}\ZZ_n(1,\dots,1) = 
    \begin{cases}
      \mathbb Z_n^-(1,\dots,1), & \mathrm{even }\,N,\\
\mathbb Z_n^+(1,\dots,1),
		& \mathrm{odd}\,N.
    \end{cases}
  \end{equation}
    We substitute this evaluation into \eqref{eqn:SIntermediate} and use \cref{lem:TSASMGFZ} to obtain
  \begin{equation}
    S_N = ([q]/[\beta])^{n}A_{\mathrm{TS}}(2N+1;t,\tau),
    \label{eqn:SIntermediate2}
  \end{equation} 
  where $\tau$ and $t$ are defined in terms of $q$ and $\beta$ by \eqref{eqn:ParametersXTau} and \eqref{eqn:paramBTQ}, respectively. To eliminate $q$ and $\beta$ in favour of $\tau$ and $x$, we observe that $([q]/[\beta])^2 = 1+x(x-\tau)$ and hence
  \begin{equation}
    [q]/[\beta] = \eta(1+x(x-\tau))^{1/2},
    \label{eqn:QBeta}
  \end{equation}
  where $\eta^2=1$. Furthermore, a straightforward calculation yields $t= (1+x)[\beta]/[q]$. Using \eqref{eqn:QBeta}, we find
   \begin{equation}
    t = \eta\frac{1+x}{(1+x(x-\tau))^{1/2}}.
  \end{equation} 
Substituting the expressions for $[q]/[\beta]$ and $t$ into \eqref{eqn:SIntermediate2} gives
\begin{equation}
    S_N = \eta^n \left((1+x(x-\tau))^{1/2}\right)^n A_{\mathrm{TS}}\left(2N+1;\eta(1+x)/(1+x(x-\tau))^{1/2},\tau\right).
  \end{equation} 
  By \cref{cor:ParityATS}, the expression on the right-hand side is independent of $\eta$. We may therefore choose $\eta=1$, which concludes the proof.
\end{proof}
Setting $\tau =1$, we obtain the following result for the spin-chain Hamiltonian \eqref{eqn:XXZHamiltonian}, which accomplishes our main goal:
\begin{corollary}
  \label{cor:SumCompsSpecialEV}
  For each $N\geqslant 1$, the sum of components of the special eigenvector of \eqref{eqn:XXZHamiltonian} is
  \begin{equation}
    S_N = (1+x(x-1))^{n/2}A_{\mathrm{TS}}(2N+1;t,1),
  \end{equation}
  where $t = (1+x)/(1+x(x-1))^{1/2}$.
\end{corollary}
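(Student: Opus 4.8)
The plan is to obtain \cref{cor:SumCompsSpecialEV} directly as the specialisation $\tau=1$ of \cref{thm:MainTheorem}, so that essentially all of the work is already done. The one conceptual point to recall is why $\tau=1$ is the relevant value. The parameter $\tau=-\{q\}$ enters $S_N$ through the polynomials $(\psi_N)_{a_1,\dots,a_n}$ of \eqref{eqn:CILittlePsi}, and, as recorded in \cref{sec:MCI} following \cite[Theorem 2]{hagendorf:21}, one has $H|\psi_N\rangle = E|\psi_N\rangle$ precisely when $\tau=1$. At this value, the quantity $S_N$ defined in \eqref{eqn:DefS} therefore coincides with the sum of the components of the Hamiltonian's special eigenvector, which is exactly the object the corollary concerns.

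Granting this identification, I would simply substitute $\tau=1$ into the identity of \cref{thm:MainTheorem}, namely $S_N=(1+x(x-\tau))^{n/2}A_{\mathrm{TS}}(2N+1;t,\tau)$ with $t=(1+x)/(1+x(x-\tau))^{1/2}$. Replacing $x(x-\tau)$ by $x(x-1)$ throughout produces the prefactor $(1+x(x-1))^{n/2}$, the weight $t=(1+x)/(1+x(x-1))^{1/2}$, and the generating function $A_{\mathrm{TS}}(2N+1;t,1)$. This is verbatim the claimed formula, and no further manipulation is required.

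The only step meriting a brief check is the admissibility of the specialisation $\tau=1$, i.e.\ that the identity of \cref{thm:MainTheorem} truly holds at this value rather than only generically. This follows at once from the observation that, although the right-hand side superficially involves the square root $(1+x(x-\tau))^{1/2}$, each of its terms is in fact a genuine polynomial in $x$ and $\tau$. Indeed, by \cref{prop:boundsDegree} every $A\in\mathrm{TSASM}(2N+1)$ obeys $\mu(A)\leqslant n$ and $\mu(A)\equiv n \bmod 2$, so that the combination $(1+x(x-\tau))^{n/2}\,t^{\mu(A)}$ equals $(1+x)^{\mu(A)}(1+x(x-\tau))^{(n-\mu(A))/2}$ with non-negative integer exponent $(n-\mu(A))/2$. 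Hence both sides of \cref{thm:MainTheorem} are polynomials in $x$ and $\tau$, the identity persists for all $\tau$, and in particular at $\tau=1$. There is accordingly no real obstacle at this stage: all of the analytic and combinatorial effort has been invested in proving \cref{thm:MainTheorem} via the uniqueness statement \cref{prop:Uniqueness} and the identification $Y_N=\YY_n$ of \cref{thm:YandYY}, and the corollary is a one-line consequence.
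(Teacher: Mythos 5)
Your proposal is correct and coincides with the paper's own proof: \cref{cor:SumCompsSpecialEV} is obtained there exactly as you do it, by setting $\tau=1$ in \cref{thm:MainTheorem}, with the identification of $S_N$ at $\tau=1$ as the sum of components of the special eigenvector already established in \cref{sec:MCI}. Your additional verification that both sides are genuine polynomials in $x$ and $\tau$ (via \cref{prop:boundsDegree}) is a sound extra precaution but not needed, since the theorem is an identity valid for all parameter values.
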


For $x=0$, where one of the boundary magnetic fields diverges, we have $t=1$. In this case, the sum of components yields
\begin{equation}
  S_N = A_{\mathrm{TS}}(2N+1;1,1)=|\mathrm{TSASM}(2N+1)|,
\end{equation}
which equals the number of TSASMs of order $2N+1$. Finding an explicit formula for $|\mathrm{TSASM}(2N+1)|$ for each $N\geqslant 0$ has so far remained an open problem in ASM enumeration. Several authors computed this number for small values of $N$ \cite{stanley:85,robbins:00,bousquet:95}. An immediate consequence of \cref{thm:MainTheorem} is the following multiple contour-integral formula:
\begin{corollary}
  \label{corr:TSASMContourIntegral}
  For each $N\geqslant 0$, we have
  \begin{multline}
   |\mathrm{TSASM}(2N+1)|=\oint\cdots \oint \prod_{k=1}^n\frac{\diff u_k}{2\pi \ii}\frac{(1+u_k + u_k^2)^{\epsilon}}{u_k^{n'+k-1}\left(1-\prod_{j=1}^ku_j\right)}\\
  \times \prod_{1\leqslant i \leqslant j \leqslant n}(1-u_iu_j)\prod_{1\leqslant i < j \leqslant n}(u_j-u_i)(1 + u_i + u_j)(1+ u_j + u_i u_j),
  \end{multline}
  where each integration contour is a simple positively-oriented curve around $0$.
\end{corollary}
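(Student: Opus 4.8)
The plan is to obtain the formula as a specialization of two identities already in hand: the contour-integral representation \eqref{eqn:CIS} of the polynomial $S_N$, and \cref{thm:MainTheorem}, which expresses $S_N$ through the TSASM generating function. Both are identities of polynomials in the formal parameters $x$ and $\tau$, so they may be evaluated at any chosen values; the natural choice here is $x=0$, $\tau=1$, which is precisely the point at which \cref{thm:MainTheorem} identifies $S_N$ with the plain cardinality $|\mathrm{TSASM}(2N+1)|$.

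First I would specialize \cref{thm:MainTheorem} at $x=0$, $\tau=1$. At these values $1+x(x-\tau)=1$, so the prefactor $(1+x(x-\tau))^{n/2}$ equals $1$ and the weight $t=(1+x)/(1+x(x-\tau))^{1/2}$ equals $1$. The theorem therefore collapses to $S_N = A_{\mathrm{TS}}(2N+1;1,1)$. By the definition \eqref{eqn:TSASMGF} of the generating function, $A_{\mathrm{TS}}(2N+1;1,1)=\sum_{A\in\mathrm{TSASM}(2N+1)}1 = |\mathrm{TSASM}(2N+1)|$, which is the left-hand side of the desired formula.

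Next I would substitute the same values into \eqref{eqn:CIS}. Setting $\tau=1$ turns the factors $(1+\tau u_k+u_k^2)^{\epsilon}$, $(\tau+u_i+u_j)$, and $(1+\tau u_j+u_iu_j)$ into $(1+u_k+u_k^2)^{\epsilon}$, $(1+u_i+u_j)$, and $(1+u_j+u_iu_j)$, exactly as in the statement. Setting $x=0$ replaces the factor $(u_k+x)$ by $u_k$, which combines with $u_k^{-(n'+k)}$ to lower the order of the pole at the origin by one, producing $u_k^{-(n'+k-1)}$. The resulting integrand is precisely the one in the corollary. Equating the two specializations yields the claim; for $N=0$ both sides are trivially equal to $1$.

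The only point deserving a word of justification is that the specialization at $x=0$ may be carried out inside the integral. This is legitimate because the $x$-dependence of the integrand is the single polynomial factor $(u_k+x)$, while the integration contour---a small positively oriented loop enclosing only the pole at $u_k=0$---does not depend on $x$; decreasing the order of that pole from $n'+k$ to $n'+k-1$ keeps the contour valid, so computing the residues for generic $x$ and then setting $x=0$ agrees with setting $x=0$ in the integrand first. There is hence no real obstacle: the corollary is an immediate consequence of \cref{thm:MainTheorem} combined with \eqref{eqn:CIS}, the entire content being the bookkeeping of the substitution $x=0$, $\tau=1$.
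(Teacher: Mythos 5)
Your proposal is correct and follows essentially the same route as the paper: the corollary is obtained by specialising \cref{thm:MainTheorem} at $x=0$, $\tau=1$ (giving $S_N=A_{\mathrm{TS}}(2N+1;1,1)=|\mathrm{TSASM}(2N+1)|$) and making the same substitution in the contour-integral formula \eqref{eqn:CIS}, where $(u_k+x)|_{x=0}=u_k$ cancels one power of $u_k$ in the denominator. Your extra remark justifying the interchange of specialisation and integration is harmless but not needed, since \eqref{eqn:CIS} already defines $S_N$ as a polynomial identity in $x$ and $\tau$.
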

As far as we are aware, this corollary provides the only known formula to date for the number of TSASMs of arbitrary order. Nonetheless, its practicality should be viewed critically. For instance, Behrend, Fischer, and Koutschan recently derived the leading term of the asymptotic series for $|\mathrm{TSASM}(2N+1)|$ as $N\to \infty$, despite the lack of an explicit formula  \cite{behrend:23}. Reproducing their results from the contour-integral formula poses a challenge.

We conclude this section with a discussion of the supersymmetric point $x=1$ of the spin chain, which originally motivated this study. For this purpose, we use the following result:
\begin{proposition}
   For each $N\geqslant 1$ and each $\tau$, $A_{\mathrm{TS}}(2N+1;1+\tau,\tau)=A_{\mathrm{TS}}(2N+3;1,\tau)$.
\end{proposition}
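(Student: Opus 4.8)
The plan is to reduce the statement to a single polynomial identity for the sum of components $S_N$ and then to prove that identity from the contour-integral representation \eqref{eqn:CIS}.

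First I would rewrite \cref{thm:MainTheorem} in the square-root-free form
\begin{equation*}
  S_M = \sum_{A\in\mathrm{TSASM}(2M+1)}(1+x)^{\mu(A)}\bigl(1+x(x-\tau)\bigr)^{(m-\mu(A))/2}\tau^{\nu(A)},\qquad m=\lfloor M/2\rfloor,
\end{equation*}
which is a genuine polynomial identity in $x,\tau$ since $m-\mu(A)$ is even by \cref{prop:boundsDegree}. Evaluating at $x=\tau$ makes the factor $1+x(x-\tau)$ equal to $1$, so that $S_N\big|_{x=\tau}=A_{\mathrm{TS}}(2N+1;1+\tau,\tau)$; evaluating at $x=0$ gives $S_{N+1}\big|_{x=0}=A_{\mathrm{TS}}(2N+3;1,\tau)$. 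Thus the proposition is equivalent to the polynomial identity $S_N\big|_{x=\tau}=S_{N+1}\big|_{x=0}$, which I now attack directly.

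Next I would substitute $x=\tau$ and $x=0$ into \eqref{eqn:CIS}, distinguishing the parity of $N$. Writing $B_m$ for the common factor $\prod_{1\le i\le j\le m}(1-u_iu_j)\prod_{1\le i<j\le m}(u_j-u_i)(\tau+u_i+u_j)(1+\tau u_j+u_iu_j)$, one finds for even $N=2n$ two $n$-fold integrals that share the same body $B_n$, the same nested denominator $\prod_{k}(1-u_1\cdots u_k)$ and the same powers $u_k^{\,n+k}$, and that differ \emph{only} in the per-variable numerator: $\prod_k(u_k+\tau)$ on the left against $\prod_k(1+\tau u_k+u_k^2)$ on the right. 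For odd $N=2n+1$ the left-hand side is an $n$-fold integral with numerator $\prod_k(u_k+\tau)(1+\tau u_k+u_k^2)$, while the right-hand side is an $(n+1)$-fold integral with trivial numerator, so one additional integration must be carried out. In both cases the nested denominator couples the last variable $u_m$ to the single factor $(1-u_1\cdots u_m)^{-1}$, which makes it legitimate to perform the innermost integral first as a coefficient extraction. For odd $N$ I would integrate out $u_{n+1}$ and check that this regenerates precisely the extra numerator $\prod_k(u_k+\tau)(1+\tau u_k+u_k^2)$ of the left-hand side, up to the power shift $u_k^{\,n+k}\mapsto u_k^{\,n+1+k}$. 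For even $N$, the relation $1+\tau u_k+u_k^2=u_k(u_k+\tau)+1$ together with the reciprocal substitution $u_k\mapsto u_k^{-1}$ recasts the two constant terms as coefficient extractions $[\prod_k u_k^{2n-1}]$ and $[\prod_k u_k^{2n}]$ of two rational functions sharing the same symmetric denominator and the same $N_1\prod_{i<j}(u_j-u_i)(\tau u_iu_j+u_i+u_j)(1+\tau u_i+u_iu_j)$, reducing the claim to the equality of two such coefficients.

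The main obstacle is exactly this equality of coefficients. The factor $\prod_{i<j}(1+\tau u_j+u_iu_j)$ and the nested denominator both break the symmetry in $u_1,\dots,u_n$, so one cannot antisymmetrise against the Vandermonde $\prod_{i<j}(u_j-u_i)$ and collapse the integral to a single determinant, as is usual for such partition functions; the cancellation between the two numerators is global rather than term-by-term. I would therefore expect to settle it by an inductive residue computation that peels off one variable at a time, or—probably more transparently—by a weight-preserving bijection on the triangular six-vertex configurations of \cref{sec:6VTriangular}: expanding $(1+\tau)^{\mu(A)}=\sum_{S}\tau^{|S|}$ over subsets $S$ of the diagonal $\pm1$'s, one would map a pair $(A,S)$ with $A\in\mathrm{TSASM}(2N+1)$ to a configuration of order $2N+3$ in which $S$ dictates the insertion of a boundary layer, with $\nu$ increasing by $|S|$.
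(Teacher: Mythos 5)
Your reduction is correct and is in fact the same pivot the paper uses: by \cref{thm:MainTheorem} (and the parity $\mu(A)\equiv n \bmod 2$ from \cref{prop:boundsDegree}, which makes your square-root-free rewriting legitimate), the proposition is equivalent to the polynomial identity $S_N\big|_{x=\tau}=S_{N+1}\big|_{x=0}$. The paper runs this same equivalence in the opposite direction: it \emph{quotes} the identity $S_N\big|_{x=\tau}=S_{N+1}\big|_{x=0}$ from Proposition 5.4 of \cite{hagendorf:21} and then applies \cref{thm:MainTheorem} to both sides. So up to this point you and the paper agree.

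The genuine gap is that you never establish this identity. Everything after the reduction is a plan rather than a proof: you set up the two contour integrals from \eqref{eqn:CIS} correctly (for even $N$ the integrands indeed differ only in the per-variable numerators $\prod_k(u_k+\tau)$ versus $\prod_k(1+\tau u_k+u_k^2)$; for odd $N$ one extra integration appears on the right), but the crucial step — integrating out $u_{n+1}$ in the odd case, or the ``equality of two coefficients'' in the even case — is exactly where you stop, writing that you ``would expect to settle it by an inductive residue computation'' or by a weight-preserving bijection on the triangular six-vertex configurations. You yourself identify why the standard antisymmetrisation trick fails (the factors $\prod_{i<j}(1+\tau u_j+u_iu_j)$ and the nested denominator $\prod_k(1-u_1\cdots u_k)$ break the symmetry), so the obstacle is real and neither of your two proposed strategies is carried out. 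As it stands, the proposal proves the equivalence of the proposition with an unproven identity; to close it without the external citation you would need to actually execute one of those two programmes, and there is no evidence in the proposal that either goes through.
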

\begin{proof} 
  By \cite[Proposition 5.4]{hagendorf:21}, we have $\left.S_N\right|_{x=\tau} = \left.S_{N+1}\right|_{x=0}$. Using \cref{thm:MainTheorem}, the proposition follows.
   \end{proof}
Combining this proposition with \cref{thm:MainTheorem} and setting $\tau=1$, we immediately find:\begin{corollary}
  \label{corr:SumCompsSUSY}
  For $x=\tau=1$, we have $S_N = |\mathrm{TSASM}(2N+3)|$.
\end{corollary}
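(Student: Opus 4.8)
The plan is to obtain the corollary as a direct specialisation of \cref{thm:MainTheorem} combined with the proposition just established. First I would substitute $x=\tau=1$ into the formula $S_N=(1+x(x-\tau))^{n/2}A_{\mathrm{TS}}(2N+1;t,\tau)$ of \cref{thm:MainTheorem}. The key observation is that $x(x-\tau)=1\cdot(1-1)=0$ at this point, so the prefactor $(1+x(x-\tau))^{n/2}$ collapses to $1$, while the auxiliary parameter becomes $t=(1+x)/(1+x(x-\tau))^{1/2}=2$. Hence \cref{thm:MainTheorem} yields the clean identity $S_N=A_{\mathrm{TS}}(2N+1;2,1)$.

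Second, I would invoke the proposition preceding this corollary, namely $A_{\mathrm{TS}}(2N+1;1+\tau,\tau)=A_{\mathrm{TS}}(2N+3;1,\tau)$, at the value $\tau=1$. Since $1+\tau=2$ there, this gives $A_{\mathrm{TS}}(2N+1;2,1)=A_{\mathrm{TS}}(2N+3;1,1)$, trading the two-parameter weight of the order-$(2N+1)$ generating function for the unit weight of the order-$(2N+3)$ one.

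Finally, I would recognise the right-hand side as a plain cardinality. By the definition \eqref{eqn:TSASMGF} of the generating function, setting $t=\tau=1$ reduces the weighted sum $\sum_{A}t^{\mu(A)}\tau^{\nu(A)}$ to $\sum_{A}1=|\mathrm{TSASM}(2N+3)|$. Chaining the three equalities then produces $S_N=|\mathrm{TSASM}(2N+3)|$, as claimed.

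I do not anticipate a genuine obstacle: the statement is an immediate corollary and each step is a substitution into results already available. The only point demanding a little care is checking that the $x=\tau=1$ specialisation is benign, i.e. that the factor $(1+x(x-\tau))^{1/2}$ appearing in the denominator of $t$ does not vanish (it equals $1$), so that $t=2$ is well defined, and that the exponent $n/2$ raises no branch ambiguity because its base is exactly $1$. Once this is confirmed, the chain of identities is automatic.
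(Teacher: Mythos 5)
Your proposal is correct and follows essentially the same route as the paper: specialise \cref{thm:MainTheorem} (where $x=\tau$ makes the prefactor $1$ and $t=1+\tau$, hence $t=2$ at $\tau=1$), then apply the preceding proposition $A_{\mathrm{TS}}(2N+1;1+\tau,\tau)=A_{\mathrm{TS}}(2N+3;1,\tau)$ and read off the cardinality at $t=\tau=1$. The paper compresses exactly this chain into one sentence, so there is nothing to add.
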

This confirms the observation made in \eqref{eqn:SumCompsSUSY} and generalises it to arbitrary $N$.

\end{document}